\newtheorem{theorem}{Theorem}[section]
\newtheorem{lemma}[theorem]{Lemma}
\newtheorem{definition}[theorem]{Definition}
\newtheorem{corollary}[theorem]{Corollary}
\newtheorem{proposition}[theorem]{Proposition}
\newtheorem{fact}[theorem]{Fact}
\newcommand{\ep}{\epsilon}
\newcommand{\mc}{\mathcal}
\newcommand{\SSS}{\ensuremath{\texttt{SubspaceSparsifier}}\xspace}
\newcommand{\DiffApx}{\ensuremath{\texttt{DiffApx}}\xspace}
\newcommand{\ResApx}{\ensuremath{\texttt{ResApx}}\xspace}
\newcommand{\Dim}{\ensuremath{\text{dim}}\xspace}
\newcommand{\Tr}{\ensuremath{\text{trace}}\xspace}
\newcommand{\Oracle}{\ensuremath{\mc{O}\xspace}}
\newcommand{\SlowOracle}{\ensuremath{\texttt{SlowOracle}\xspace}}
\newcommand{\FastOracle}{\ensuremath{\texttt{FastOracle}\xspace}}
\newcommand{\Split}{\ensuremath{\texttt{Split}\xspace}}
\newcommand{\Unsplit}{\ensuremath{\texttt{Unsplit}\xspace}}
\newcommand{\SketchMatrix}{\ensuremath{\texttt{SketchMatrix}\xspace}}
\newcommand{\RecoverNorm}{\ensuremath{\texttt{RecoverNorm}\xspace}}
\newcommand{\ColumnApx}{\ensuremath{\texttt{ColumnApx}\xspace}}
\newcommand{\Subsample}{\ensuremath{\texttt{Subsample}\xspace}}
\newcommand{\LaplSolve}{\ensuremath{\texttt{LaplSolve}}\xspace}
\newcommand{\kh}[1]{\left(#1\right)}
\def\trace#1{\mathrm{Tr} \left(#1 \right)}
\def\norm#1{\left\| #1 \right\|}
\def\setof#1{\left\{#1  \right\}}
\def\sizeof#1{\left|#1  \right|}
\newcommand{\lambdamin}{\lambda_{\mathrm{min}}}
\newcommand{\lambdamax}{\lambda_{\mathrm{max}}}
\newcommand{\sigmamax}{\sigma_{\mathrm{max}}}
\newcommand{\wmax}{w_{\mathrm{max}}}
\newcommand{\wmin}{w_{\mathrm{min}}}
\newcommand{\xtil}{\tilde{x}}
\newcommand{\rtil}{\tilde{r}}
\newcommand{\eps}{\epsilon}
\def\abs#1{\left|#1  \right|}
\title{Spectral Subspace Sparsification}
\author{Huan Li\\
    School of Computer Science, Fudan University\\ \texttt{huanli16@fudan.edu.cn} \and Aaron Schild\footnote{Supported by NSF grant CCF-1553751} \\EECS, UC
Berkeley\\ \texttt{aschild@berkeley.edu} }
\begin{document}
\maketitle

\begin{abstract}
We introduce a new approach to spectral sparsification that approximates the quadratic form of the pseudoinverse of a graph Laplacian restricted to a subspace. We show that sparsifiers with a near-linear number of edges in the dimension of the subspace exist. Our setting generalizes that of Schur complement sparsifiers. Our approach produces sparsifiers by sampling a uniformly random spanning tree of the input graph and using that tree to guide an edge elimination procedure that contracts, deletes, and reweights edges. In the context of Schur complement sparsifiers, our approach has two benefits over prior work. First, it produces a sparsifier in almost-linear time with no runtime dependence on the desired error. We directly exploit this to compute approximate effective resistances for a small set of vertex pairs in faster time than prior work (Durfee-Kyng-Peebles-Rao-Sachdeva '17). Secondly, it yields sparsifiers that are reweighted minors of the input graph. As a result, we give a near-optimal answer to a variant of the Steiner point removal problem.

A key ingredient of our algorithm is a subroutine of independent interest: a near-linear time algorithm that, given a chosen set of vertices, builds a data structure from which we can query a multiplicative approximation to the decrease in the effective resistance between two vertices after identifying all vertices in the chosen set to a single vertex with inverse polynomial additional additive error in near-constant time.

\end{abstract}

\pagebreak

\section{Introduction}

Graph sparsification has had a number of applications throughout algorithms and theoretical computer science. In this work, we loosen the requirements of spectral sparsification and show that this loosening enables us to obtain sparsifiers with fewer edges. Specifically, instead of requiring that the Laplacian pseudoinverse quadratic form is approximated for every vector, we just require that the sparsifier approximates the Laplacian pseudoinverse quadratic form on a subspace:

\begin{definition}[Spectral subspace sparsifiers]
Consider a weighted graph $G$, a vector space $\mc S \subseteq \mathbb{R}^{V(G)}$ that is orthogonal to $\textbf{1}^{V(G)}$, and $\ep\in (0,1)$. For a minor $H$ of $G$ with contraction map $\phi:V(G)\rightarrow V(H)$, let $P\in \mathbb{R}^{V(H)\times V(G)}$ be a matrix with $P_{uv} = \mathbbm{1}[u = \phi(v)]$ for all $u\in V(H), v\in V(G)$. A reweighted minor $H$ of $G$ is called an $(\mc S,\ep)$-\emph{spectral subspace sparsifier} if for all vectors $x\in \mc S$,

$$(1 - \ep)x^T L_G^+ x\le x_H^T L_H^+ x_H\le (1 + \ep)x^T L_G^+ x$$

\noindent where $x_H := P x$.

\end{definition}

\cite{KMST10} also considers a form of specific form of subspace sparsification related to controlling the $k$ smallest eigenvalues of a spectral sparsifier for $\mc S = \mathbb{R}^{V(G)}$. When $\mc S$ is the dimension $|S|-1$ subspace of $\mathbb{R}^{|S|}\times \textbf{0}^{n-|S|}$ that is orthogonal to $\textbf{1}^{V(G)}$, a $(S,\ep)$-spectral subspace sparsifier is a sparsifier for the Schur complement of $G$ restricted to the set of vertices $S$. Schur complement sparsifiers are implicitly constructed in \cite{KS16} and \cite{KLPSS16} by an approximate Gaussian elimination procedure and have been used throughout spectral graph theory. For example, they are used in algorithms for random spanning tree generation \cite{DKPRS17,DPPR17}, approximate maximum flow \cite{MP13}, and effective resistance computation \cite{GHP18,GHP17,DKPRS17}.

Unlike the existing construction of Schur complement sparsifiers \cite{DKPRS17}, our algorithm (a) produces a sparsifier with vertices outside of $S$ and (b) produces a sparsifier that is a minor of the input graph. While (a) is a disadvantage to our approach, it is not a problem in applications, in which the number of edges in the sparsifier is the most relevant feature for performance, as illustrated by our almost-optimal algorithm for $\ep$-approximate effective resistance computation. (b) is an additional benefit to our construction and connects to the well-studied class of Steiner point removal problems \cite{CGH16,EGKRTT14}.

In the Approximate Terminal Distance Preservation problem \cite{CGH16}, one is given a graph $G$ and a set of $k$ vertices $S$. One is asked find a reweighted minor $H$ of $G$ with size $\text{poly}(k)$ for which

$$d_G(u,v)\le d_H(u,v)\le \alpha d_G(u,v)$$

\noindent for all $u,v\in S$ and some small \emph{distortion} $\alpha > 1$. The fact that $H$ is a minor of $G$ is particularly useful in the context of planar graphs. One can equivalently phrase this problem as a problem of finding a minor $H$ in which the $\ell_1$-norm of the $\ell_1$-minimizing flow between any two vertices $s,t\in S$ is within an $\alpha$-factor of the $\ell_1$ norm of the $\ell_1$-minimizing $s-t$ flow in $G$. The analogous problem for $\ell_{\infty}$ norms is the problem of constructing a \emph{flow sparsifier} (with non-$s-t$ demands as well). Despite much work on flow sparsifiers \cite{M09,LM10,CLLM10,MM10,EGKRTT14,C12,AGK14,RST14}, it is still not known whether $\alpha = (1 + \eps)$-flow sparsifiers with size $\text{poly}(k,1/\ep)$ exist, even when the sparsifier is not a minor of the original graph.

\subsection{Our Results}

Our main result is the following:

\begin{theorem}\label{thm:main-existence}
Consider a weighted graph $G$, a $d$-dimensional vector space $\mc S \subseteq \mathbb{R}^{V(G)}$, and $\ep\in (0,1)$. Then an $(\mc S,\ep)$-spectral subspace sparsifier for $G$ with $O\left(\frac{d \log d}{\ep^2}\right)$ edges exists.
\end{theorem}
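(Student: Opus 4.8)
The goal is to show that a $(\mathcal S,\epsilon)$-spectral subspace sparsifier with $O(d\log d/\epsilon^2)$ edges exists, where $d=\dim\mathcal S$. Since the abstract advertises an algorithm that samples a uniformly random spanning tree and then performs a guided edge-elimination procedure (contract/delete/reweight), I would prove existence constructively by analyzing one "round" of this procedure and then iterating. The key reformulation is this: the quadratic form $x^TL_G^+x$ for $x\in\mathcal S$ only depends on $G$ through the action of $L_G^+$ on the $d$-dimensional subspace $\mathcal S$. Writing $B_G$ for the (weighted) incidence matrix so that $L_G=B_G^TW_GB_G$, one has $x^TL_G^+x=\|\text{(something)}\|^2$, and more usefully the relevant object is a $d$-dimensional subspace of the edge space $\mathbb R^{E(G)}$: namely $\mathcal U_G:=\{W_G^{1/2}B_GL_G^+x : x\in\mathcal S\}$. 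Then $x^TL_G^+x=\|W_G^{1/2}B_GL_G^+x\|_2^2$ is just the squared Euclidean norm of the corresponding vector in $\mathcal U_G$, and preserving the quadratic form on $\mathcal S$ amounts to preserving Euclidean norms of vectors in a fixed $d$-dimensional subspace under the linear maps induced by edge operations.

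\textbf{Key steps.} First, I would set up the correspondence above carefully, checking that deleting an edge, contracting an edge, and reweighting an edge each correspond to a controlled modification of the subspace $\mathcal U$ sitting inside the edge space — in particular that if an edge $e$ is a "leaf-like" or low-leverage edge with respect to $\mathcal U$ (its leverage score $\tau_e = \|(\text{projection of }e\text{-th coordinate direction onto }\mathcal U)\|^2$ is small), then deleting it while appropriately reweighting the rest changes norms of all vectors in $\mathcal U$ by at most $(1\pm O(\tau_e))$. Second — and this is where the random spanning tree enters — I would use the fact that a uniformly random spanning tree $T$ of $G$ has $\mathbb E[\text{leverage of }e]=$ effective resistance $\cdot$ weight $= \Pr[e\in T]$, and more to the point that the edges \emph{not} in $T$, conditioned on $T$, can be eliminated one at a time, each contributing small expected distortion; the tree structure guarantees that after peeling off non-tree edges we are left with something of size $O(d)$ because only $O(d)$ tree edges are "incident to" the subspace $\mathcal S$ in the relevant sense (the rest can be contracted for free since contracting an edge of leverage $1$ is exact). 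Third, I would do a martingale / matrix-concentration argument (matrix Chernoff, in the style of Tropp, or the Kyng–Song / Schild random-spanning-tree concentration machinery) to show the product of all the $(1\pm O(\tau_e))$ distortions concentrates, and to handle the $O(d\log d/\epsilon^2)$ bound I would oversample retained edges by a $\log d$ factor so that a union bound / intrinsic-dimension matrix concentration bound over the $d$-dimensional subspace goes through with $\epsilon$ final error.

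\textbf{The main obstacle.} The hard part is controlling the \emph{dependence} between successive edge eliminations: once you delete or contract one edge and reweight, the subspace $\mathcal U$ and hence all subsequent leverage scores change, so the per-step distortions are not independent. The random-spanning-tree viewpoint is exactly the device that tames this — there is a known recursive/exchangeability structure (homogenization, or the "one edge at a time" conditioning used by Durfee–Kyng–Peebles–Rao–Sachdeva and by Schild for random spanning tree sampling) under which the conditional expectations telescope. I expect the bulk of the proof to be: (i) proving the single-step norm-distortion lemma for a low-leverage edge, and (ii) assembling these into a concentration statement via an appropriate matrix martingale whose predictable quadratic variation is bounded by $\sum_e \tau_e = \dim\mathcal U = d$ (or a small multiple), which after the $\log d$ oversampling yields the claimed edge count and the $(1\pm\epsilon)$ guarantee. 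Establishing that the final minor genuinely has only $O(d\log d/\epsilon^2)$ edges — rather than merely that norms are preserved — requires the structural observation that all but $O(d\log d/\epsilon^2)$ edges end up with negligible leverage and can be deleted, which I would get from the trace bound $\sum_e\tau_e=d$ together with the oversampling threshold.
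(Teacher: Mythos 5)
Your high-level plan is the same family of argument as the paper's (random-spanning-tree-guided contract/delete decisions, a matrix martingale for the pseudoinverse quadratic form on $\mc S$, and the trace identity: your $\tau_e$ is exactly the paper's quantity $\max_{x\in\mc S}\frac{(x^TL_G^+b_e)^2}{r_e\,x^TL_G^+x}=\frac{b_e^TL_G^+YY^TL_G^+b_e}{r_e}$, whose sum over edges is $d$). However, your key single-step lemma is false as stated. By Sherman--Morrison, deleting $e$ changes $x^TL^+x$ by $\frac{(x^TL^+b_e)^2/r_e}{1-\texttt{lev}(e)}$ and contracting changes it by $-\frac{(x^TL^+b_e)^2/r_e}{\texttt{lev}(e)}$, so a small subspace energy $\tau_e$ does \emph{not} bound the distortion: an edge with ordinary leverage score near $1$ and tiny but nonzero $\tau_e$ makes the deletion term blow up. The paper handles this with two ingredients you are missing: it splits every edge into a series or parallel pair so that all leverage scores lie in $[3/16,13/16]$ (the \texttt{Split}/\texttt{Unsplit} subroutines), and it does \emph{not} delete-and-reweight deterministically; it contracts with probability exactly $\texttt{lev}(e)$ and deletes otherwise, which is what makes the process an unbiased martingale. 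Relatedly, your reduction to ``preserving Euclidean norms of the fixed subspace $\mc U_G=W_G^{1/2}B_GL_G^+\mc S$ in edge space'' does not survive graph modification, since after an elimination the relevant subspace is defined through the new $L_H^+$ and the new incidence structure; this is precisely the dependence problem, and the paper's answer to it is the Sherman--Morrison martingale on $Y_k^TL_{H_k}^+Y_k$, not a fixed-subspace norm-preservation statement.

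Two further gaps are quantitative. First, the claim that after peeling non-tree edges one is left with $O(d)$ edges because ``only $O(d)$ tree edges are incident to the subspace'' is unsupported and false in general ($\mc S$ can be spanned by fully supported vectors, so no tree edge can be contracted for free); in the paper the spanning tree only serves to realize the contract-with-probability-$\texttt{lev}(e)$ decisions consistently for a batch of chosen edges, and the procedure is iterated until the edge count drops below $\Theta(d\log d/\ep^2)$. Second, a predictable quadratic variation of order $\sum_e\tau_e=d$ is far too large for Freedman-type concentration to give error $\ep$. What is actually needed (and what the paper proves) is that every eliminated edge has $\tau_e\le 2d/|E|$ in the \emph{current} graph --- such edges are at least half of all edges by Markov applied to the trace identity --- so each step contributes variance $O\bigl((d/|E|)^2\bigr)$ and the total variation telescopes to $O(d/|E_{\mathrm{final}}|)=O(\ep^2/\log d)$ when one stops at $|E_{\mathrm{final}}|=\Theta(d\log d/\ep^2)$; one must also show this low-energy property persists during a batch (the paper's ``steady oracle'' conditions). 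Without the leverage-score normalization, the low-energy edge selection, and this per-step variance accounting, the concentration step of your sketch does not close.
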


When $\mc S$ is the maximal subspace of $\mathbb{R}^S \times \textbf{0}^{V(G)\setminus S}$ orthogonal to $\textbf{1}^{V(G)}$ for some set of vertices $S \subseteq V(G)$, $(\mc S,\ep)$-spectral subspace sparsifiers satisfy the same approximation guarantee as Schur complement sparsifiers. The approximation guarantee of a spectral subspace sparsifier $H$ of $G$ is equivalent to saying that for any demand vector $d\in \mc S$, the energy of the $\ell_2$-minimizing flow for $d$ in $H$ is within a $(1 + \ep)$ factor of the energy for the $\ell_2$-minimizing flow for $d$ in $G$. This yields an near-optimal (up to a $\log d$ factor) answer to the $(1+\ep)$-approximate Steiner vertex removal problem for the $\ell_2$ norm. The $\ell_2$ version is substantially different from the $\ell_1$ problem, in which there do not exist $o(k^2)$-size minors that 2-approximate all terminal distances \cite{CGH16}.

Unlike Schur complement sparsifiers, $(\mathbb{R}^S,\ep)$-spectral subspace sparsifiers may contain ``Steiner nodes;'' i.e. vertices outside of $S$. This is generally not relevant in applications, as we illustrate in Section \ref{sec:res-apx}. Allowing Steiner nodes allows us to obtain sparsifiers with fewer edges, which in turn allows us to obtain faster constructions. Specifically, we show the following result:

\begin{theorem}\label{thm:schur-fast}
Consider a weighted graph $G$, a set of vertices $S\subseteq V(G)$, and $\ep\in (0,1)$. Let $\mc T_{rst}(G)$ denote the time it takes to generate a random spanning tree from a distribution with total variation distance at most $1/m^{10}$ from the uniform distribution. Then $(\mathbb{R}^S\times \textbf{0}^{V(G)\setminus S},\ep)$-spectral subspace sparsifier for $G$ with $\min(m,O\left(|S|\frac{\text{polylog}(n)}{\ep^2}\right))$ edges can be constructed in $\mc T_{rst}(G) + O(m\text{polylog}(n))\le m^{1 + o(1)}$ time.
\end{theorem}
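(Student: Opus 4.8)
The plan is to implement the spanning-tree-guided elimination procedure underlying Theorem~\ref{thm:main-existence}, specialized to the subspace $\mc S = \mathbb{R}^S \times \textbf{0}^{V(G)\setminus S}$ (of dimension $d = |S|-1$), and to show that each of its steps runs in amortized polylogarithmic time. First I would draw a spanning tree $T$ of $G$ from a distribution within total variation $1/m^{10}$ of uniform, at cost $\mc T_{rst}(G) = m^{1+o(1)}$. The algorithm then maintains a reweighted minor $G_i$ of $G$, starting from $G_0 = G$, and processes the edges of $G$ in an order dictated by a recursive balanced decomposition of $T$: at each step it either \emph{contracts} an edge (when the partially revealed conditioned tree contains it) or \emph{deletes and reweights} an edge, choosing between these according to the edge's conditional inclusion probability, which equals its leverage score $w_e\,\mathrm{Re}_{G_i}(e)$ in the current minor. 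That every $G_i$ remains an $(\mc S,\ep)$-spectral subspace sparsifier of $G$, and that after eliminating enough of $V(G)\setminus S$ only $O(d\log d/\ep^2)$ edges survive, is exactly what the analysis behind Theorem~\ref{thm:main-existence} provides, using the standard fact that a uniform spanning tree conditioned on containing (resp. avoiding) an edge is a uniform spanning tree of the corresponding contraction (resp. deletion).

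The crux is that a naive implementation recomputes $\mathrm{Re}_{G_i}(e)$ for each of $\Theta(m)$ edges at each of $\Theta(n)$ elimination steps, which via Laplacian solves costs $\widetilde O(m^2)$. To avoid this I would invoke the effective-resistance data structure advertised in the abstract (\FastSchur, built from \ResApx{} and \DiffApx): working within the recursive decomposition of $T$, for each piece we build in near-linear time a structure answering, for any vertex pair, a $(1\pm\ep')$-multiplicative-with-inverse-polynomial-additive approximation to the effective resistance in the relevant minor and to the \emph{decrease} in effective resistance caused by identifying a chosen set of vertices to a point. Contracting and deleting the edges inside one piece identifies or removes only that piece's vertices, so the updated leverage scores of the surviving inter-piece edges can be read off by \DiffApx{} queries; recursing on the two halves and charging $\mathrm{polylog}(n)$ per edge per level over $O(\log n)$ levels yields a total of $O(m\,\mathrm{polylog}(n))$.

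Finally I would account for the error budget. Sampling each edge with a probability that is a $(1\pm\ep')$-multiplicative (plus $1/\mathrm{poly}(n)$ additive) approximation to its true conditional leverage score perturbs the existence analysis only by a constant factor in the sampling probabilities and by an inverse-polynomial amount overall; taking $\ep'$ a small constant and paying an extra $\mathrm{polylog}(n)$ factor in the edge count absorbs this, and the spectral approximation error stays $(1+\ep)$ by a matrix-Chernoff / martingale argument over the $O(n)$ rounds — which is why the bound carries $\mathrm{polylog}(n)$ in place of the $\log d$ of Theorem~\ref{thm:main-existence}. Summing, the running time is $\mc T_{rst}(G) + O(m\,\mathrm{polylog}(n)) \le m^{1+o(1)}$, and the edge count is at most $m$ trivially and otherwise $O(|S|\,\mathrm{polylog}(n)/\ep^2)$.

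The step I expect to be the main obstacle is the second one: designing and analyzing \FastSchur{} so that (i) the approximate effective resistances and their decreases under vertex identification can genuinely be maintained in amortized polylogarithmic time through the entire sequence of contractions and deletions dictated by $T$, and (ii) the approximation errors — both those internal to the Laplacian-solver-based estimates and those compounded across the $O(\log n)$ recursion levels and $O(n)$ elimination steps — compose to the claimed multiplicative-plus-inverse-polynomial guarantee without blowing up. The interaction between this data structure and the conditioning-on-the-spanning-tree semantics (ensuring the minors really are distributed as the claimed contractions/deletions) is the part that needs the most care.
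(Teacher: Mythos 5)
There is a genuine gap, and it sits exactly where you flag your own ``main obstacle.'' Your construction assumes a dynamic data structure (your ``\texttt{FastSchur}, built from \texttt{ResApx} and \texttt{DiffApx}'') that maintains, in amortized $\mathrm{polylog}(n)$ time per edge, multiplicative approximations to leverage scores (and to their changes under vertex identification) \emph{throughout} the whole sequence of contractions and deletions. Nothing of this sort exists in the paper, and it is not known how to build it: \texttt{DiffApx} (Lemma~\ref{lem:diff-apx}) is a one-shot, static routine for a fixed graph and a fixed set $S$, not an updatable structure, and ``charging $\mathrm{polylog}(n)$ per edge per level of a balanced decomposition of $T$'' is an assertion, not an algorithm --- under contractions and deletions the effective resistances of all surviving edges change globally. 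The paper's actual route avoids dynamic maintenance altogether: it batches eliminations through a \emph{steady oracle} (Lemma~\ref{lem:main-oracle}). In each of $O(\mathrm{polylog}(n))$ batches it computes, once, a set $Z$ containing an $\Omega(1/\log^3 n)$ fraction of the edges --- found via electrical-flow localization (Theorem~\ref{thm:localization}), $\ell_1$ sketching, and \texttt{DiffApx} --- and then \emph{proves}, by submartingale concentration (Proposition~\ref{prop:maintain-local}), that the leverage scores and subspace energies of these edges stay within constant factors during the entire batch of $\Omega(m/\log^3 n)$ contractions/deletions, so no quantity is ever recomputed inside a batch; a fresh approximately uniform spanning tree of the \emph{current} (split) graph is drawn per batch, not a single tree of $G$ up front.

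Even granting your data structure, the appeal to ``exactly the analysis behind Theorem~\ref{thm:main-existence}'' does not apply to your process. That analysis requires (i) first splitting each edge into series or parallel halves (Proposition~\ref{prop:split}) so that all leverage scores lie in a constant range --- otherwise the Sherman--Morrison factors $1/\min(\texttt{lev},1-\texttt{lev})$ make the martingale differences unbounded --- and (ii) eliminating at each step an edge of small subspace energy (controlled on average by Proposition~\ref{prop:sum-bound}, and enforced by the oracle's selection). Processing the edges of $G$ in an order dictated by a balanced decomposition of $T$ controls neither quantity, so the matrix Freedman bound does not go through and neither the $(1+\ep)$ guarantee nor the $O(|S|\,\mathrm{polylog}(n)/\ep^2)$ edge count follows. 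Moreover, a single spanning tree of $G$ drawn at the start cannot drive the decisions for split halves (their joint membership in a tree of the split graph is not determined by $T$), and your error-budget paragraph is internally inconsistent: if decisions come from the conditioned tree, no approximate sampling probabilities are ever used --- leverage scores enter only the split decisions and the stability analysis --- and the only stochastic inaccuracy is the $1/m^{10}$ total-variation error of the tree sampler, which the paper absorbs by a union bound, not by perturbing probabilities.
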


This sparsifier has as many edges as the Schur complement sparsifier given in \cite{DKPRS17} but improves on their $\tilde{O}(m + n/\ep^2)$ runtime. An important ingredient in the above construction is an algorithm for multiplicatively approximating changes in effective resistances due to certain modifications of $G$. This algorithm is called with $\delta = \Theta(1)$ in this paper:

\begin{lemma}\label{lem:diff-apx}
Consider a weighted graph $G$, a set of vertices $S\subseteq V(G)$, and $\delta_0,\delta_1\in (0,1)$. There is an $O(m\text{polylog}(n)\log(m/\delta_1)/\delta_0^2)$-time algorithm $\DiffApx(G,S,\delta_0,\delta_1)$ that outputs numbers $\nu_e$ for all $e\in E(G)$ with the guarantee that

$$(1 - \delta_0)\nu_e - \delta_1 \le \frac{b_e^T L_G^+ b_e}{r_e} - \frac{b_e^T L_{G/S}^+ b_e}{r_e} \le (1 + \delta_0)\nu_e + \delta_1$$
\end{lemma}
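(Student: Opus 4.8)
The plan is to rewrite the quantity to be estimated as a single positive semidefinite quadratic form --- in fact a statistical leverage score of edge $e$ with respect to an explicit tall matrix --- and then estimate all $m$ such quantities at once with a Johnson--Lindenstrauss sketch layered on a near-linear-time Schur complement sparsifier from prior work. Fix $s_0\in S$ and let $A$ be the vertex--edge incidence matrix of the unit-resistance star on $S$ centered at $s_0$, so its $|S|-1$ columns span $\{\mathbbm 1_u-\mathbbm 1_v : u,v\in S\}$. Contracting $S$ is equivalent to permitting free flow among the vertices of $S$, so completing the square gives, for every edge $e$,
$$b_e^TL_{G/S}^+b_e=\min_{y}\,(b_e+Ay)^TL_G^+(b_e+Ay)=b_e^TL_G^+b_e-b_e^TL_G^+A\,M^+\,A^TL_G^+b_e,$$
where $M:=A^TL_G^+A$. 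Hence the target is exactly $\nu_e^\star:=\tfrac{1}{r_e}b_e^TL_G^+A\,M^+\,A^TL_G^+b_e\ge 0$. This already explains why one cannot just estimate $\tfrac1{r_e}b_e^TL_G^+b_e$ and $\tfrac1{r_e}b_e^TL_{G/S}^+b_e$ separately and subtract: each is only known multiplicatively, while their difference may be far smaller than either term, so we must approximate $\nu_e^\star$ directly.

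\textbf{Replacing $M$ by something invertible in closed form.} Since $x^TL_G^+x$ is invariant under taking Schur complements, $M=\bar A^T(\mathrm{Sc}_G(S))^+\bar A$, where $\bar A$ is $A$ restricted to the $S$-coordinates. Using a near-linear-time Schur complement sparsifier \cite{KS16,DKPRS17} we compute, in $O(m\,\text{polylog}(n)/\delta_0^2)$ time, a sparse Laplacian $\widetilde{\mathrm{Sc}}=\widetilde B^T\widetilde W\widetilde B$ on $S$ with $O(|S|\,\text{polylog}(n)/\delta_0^2)$ edges that $(1\pm\delta_0)$-spectrally approximates $\mathrm{Sc}_G(S)$; then $\widetilde M:=\bar A^T\widetilde{\mathrm{Sc}}^+\bar A$ $(1\pm O(\delta_0))$-approximates $M$, and $\widetilde M^+$ approximates $M^+$. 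Because $\bar A$ has full column rank and $\mathcal L_\star:=\bar A^T\bar A=I+\mathbf 1\mathbf 1^T$ is explicitly invertible, a direct check yields the closed form
$$\widetilde M^+=\mathcal L_\star^{-1}\bar A^T\widetilde{\mathrm{Sc}}\,\bar A\,\mathcal L_\star^{-1}=\Xi^T\Xi,\qquad \Xi:=\widetilde W^{1/2}\widetilde B\,\bar A\,\mathcal L_\star^{-1},$$
where $\Xi$ has $O(|S|\,\text{polylog}(n)/\delta_0^2)$ rows and supports matrix--vector products in near-linear time (each factor is sparse or a rank-one update of the identity). Consequently $\nu_e^\star=(1\pm O(\delta_0))\cdot\tfrac1{r_e}\bigl\|\Xi A^TL_G^+b_e\bigr\|_2^2$ for all $e$ simultaneously.

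\textbf{Sketch, solve, and output.} The matrix $(\Xi A^T)L_G^+$ has too many rows to form explicitly, so we apply a Johnson--Lindenstrauss sketch: draw $Q$ with i.i.d.\ $\pm 1/\sqrt k$ entries, $k=\Theta(\log m/\delta_0^2)$ rows and as many columns as $\Xi$ has rows; with probability $1-1/\text{poly}(m)$ this preserves every $\|\Xi A^TL_G^+b_e\|_2$ to a $(1\pm\delta_0)$ factor. For each $i\in[k]$ set $h_i:=A\,\Xi^Tq_i$ (near-linear time, $q_i$ the $i$-th row of $Q$) and compute $u_i\approx L_G^+h_i$ with a fast approximate Laplacian solver to accuracy $\text{poly}(\delta_0\delta_1/(mn))$; then output, for $e=(a,b)$,
$$\nu_e:=\frac1{r_e}\sum_{i=1}^{k}\bigl(u_i(a)-u_i(b)\bigr)^2,$$
since $u_i(a)-u_i(b)=h_i^TL_G^+b_e=(Q\,\Xi A^TL_G^+b_e)_i$ up to the solver error.

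\textbf{Error budget, running time, and the main difficulty.} The multiplicative guarantee follows by composing the $(1\pm\delta_0)$ Schur-sparsifier error with the $(1\pm\delta_0)$ sketch error and rescaling $\delta_0$ by a constant; the additive $\delta_1$ absorbs the solver error, since with accuracy $\text{poly}(\delta_0\delta_1/(mn))$ the perturbation of $\sum_i(u_i(a)-u_i(b))^2$ is at most $\delta_1 r_e$ (routine: $b_e^TL_G^+b_e$ and all intermediate norms are $\text{poly}(m,n,\wmax/\wmin)$ and there are only $k$ terms). The cost is one Schur complement sparsification plus $k=O(\log m/\delta_0^2)$ approximate Laplacian solves at accuracy $\text{poly}(\delta_0\delta_1/(mn))$ --- each $O(m\,\text{polylog}(n)\log(m/\delta_1))$ --- plus $O(mk)$ postprocessing, i.e.\ $O(m\,\text{polylog}(n)\log(m/\delta_1)/\delta_0^2)$, as claimed. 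The conceptual obstacle is the first step, recognizing the difference of leverage scores as a single PSD form (a leverage score of $\Xi A^TL_G^+$) so it can be sketched directly; the technical obstacle is carrying out all linear algebra involving $M=A^TL_G^+A$ in near-linear time without ever forming it, which the closed-form factorization $\widetilde M^+=\Xi^T\Xi$ --- enabled by the Schur complement sparsifier --- resolves. The remainder is the somewhat delicate but standard propagation of sketch and solver errors.
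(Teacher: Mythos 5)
Your proposal is correct and achieves the stated guarantees, but its central step takes a genuinely different route from the paper's. Both arguments start identically: the decrease in effective resistance is rewritten as the PSD form $b_e^TL_G^+A\,M^+A^TL_G^+b_e$ with $M=A^TL_G^+A$ (the paper does this with the star basis $C$ on $S$ and Woodbury; your completing-the-square over flows is equivalent), and both end with a Johnson--Lindenstrauss sketch, $O(\log n/\delta_0^2)$ approximate Laplacian solves, and error propagation. The divergence is in how the middle matrix is factored so the quantity becomes a Euclidean norm. The paper uses the exact identity $(C^TL_G^+C)^+=SC(L_G,S)$ (Fact~\ref{fact:schurinv}) together with $L_G=B_G^TW_GB_G$ to write the quantity as $\norm{W_G^{1/2}B_GL_G^+C\,SC(L_G,S)\,C^TL_G^+b_e}^2$, sketches on the $m$-dimensional side, and applies the \emph{exact} Schur complement to each sketch vector via $(L_G)_{S,S}-(L_G)_{S,T}(L_G)_{T,T}^{-1}(L_G)_{T,S}$ with one SDDM solve per application; no sparsifier is needed and the only multiplicative error is from JL. You instead import the Schur complement sparsifier of \cite{DKPRS17} and use the closed form $\widetilde M^{-1}=\mathcal L_\star^{-1}\bar A^T\widetilde{SC}\,\bar A\,\mathcal L_\star^{-1}=\Xi^T\Xi$, which is correct (it checks out because $\bar A\mathcal L_\star^{-1}\bar A^T$ and $\widetilde{SC}^{+}\widetilde{SC}$ are both the projection onto the complement of the all-ones vector), to obtain an explicit tall factorization to sketch. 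Your route buys a clean ``leverage score of an explicit matrix'' formulation and avoids reasoning about applying $SC(L_G,S)$ exactly, at the cost of an extra $(1\pm\delta_0)$ error source and a dependence on the sparsification machinery of \cite{DKPRS17} (not circular, since that is prior work, and its $\tilde{O}(m+n/\delta_0^2)$ cost fits the claimed runtime); the paper's route is more self-contained and keeps the middle matrix exact up to solver error. One bookkeeping caution: the solver accuracy cannot be just $\mathrm{poly}(\delta_0\delta_1/(mn))$ --- since $1/r_e$ and the intermediate norms scale with $\wmax/\wmin$ (as your own parenthetical concedes), the accuracy must carry $\wmin/\wmax$ factors as well, exactly as in the paper's explicit choice $\eps=\delta_1/(48\sqrt{k}\,n^{8.5}\wmax^{2.5}\wmin^{-3})$; this only changes the $\log(1/\eps)$ factor inside the solver time, so the claimed running time is unaffected.
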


Finally, we replace the use of Theorem 6.1 in \cite{DKPRS17} with our Theorem \ref{thm:schur-fast} in their improvement to Johnson-Lindenstrauss to obtain a faster algorithm:

\begin{corollary}\label{cor:res-apx}
    Consider a weighted graph $G$, a set of pairs of vertices $P\subseteq V(G)\times V(G)$, and an $\ep\in (0,1)$. There is an $m^{1+o(1)} + \tilde{O}(|P|/\ep^2)$-time algorithm $\ResApx(G,P,\ep)$ that outputs $(1 + \ep)$-multiplicative approximations to the quantities

$$b_{uv}^T L_G^+ b_{uv}$$

for all pairs $(u,v)\in P$.
\end{corollary}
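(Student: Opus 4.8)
The plan is to follow the Johnson–Lindenstrauss-based effective resistance estimation scheme of \cite{DKPRS17}, replacing their Schur complement sparsifier construction (their Theorem 6.1) with the faster construction of Theorem \ref{thm:schur-fast}. First I would let $S\subseteq V(G)$ be the set of all vertices appearing in some pair of $P$, so $|S|\le 2|P|$. The key observation is that for any $u,v\in S$, the effective resistance $b_{uv}^T L_G^+ b_{uv}$ is preserved: if $H$ is an $(\mathbb{R}^S\times \mathbf{0}^{V(G)\setminus S},\ep/3)$-spectral subspace sparsifier of $G$, then since $b_{uv}\in \mathbb{R}^S\times\mathbf{0}^{V(G)\setminus S}$ and $b_{uv}$ is orthogonal to $\mathbf{1}$, we get $(1-\ep/3)b_{uv}^T L_G^+ b_{uv}\le b_{(uv)_H}^T L_H^+ b_{(uv)_H}\le (1+\ep/3) b_{uv}^T L_G^+ b_{uv}$, where $(uv)_H = P b_{uv} = b_{\phi(u)\phi(v)}$ since $\phi$ is injective on $S$ (terminals map to distinct vertices of the minor). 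So it suffices to estimate effective resistances between the images of the pairs in the sparsifier $H$, which by Theorem \ref{thm:schur-fast} has only $\min(m, O(|S|\,\text{polylog}(n)/\ep^2))$ edges and is computed in $m^{1+o(1)}$ time.

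Next I would apply the standard Spielman–Srivastava / Johnson–Lindenstrauss effective-resistance sketch to $H$: write $b_{xy}^T L_H^+ b_{xy} = \|W_H^{1/2} B_H L_H^+ b_{xy}\|_2^2$ where $B_H$ is the (signed) incidence matrix and $W_H$ the diagonal weight matrix of $H$, then sample a random $\pm 1/\sqrt{k}$ (or Gaussian) matrix $Q\in\mathbb{R}^{k\times |E(H)|}$ with $k = O(\log |P|/\ep^2)$ and form the $k\times|E(H)|$ times $|E(H)|\times |V(H)|$ product $Q W_H^{1/2} B_H$, then compute $Z := (QW_H^{1/2}B_H) L_H^+$ by solving $k$ Laplacian systems on $L_H$ to accuracy inverse-polynomial in $n$, using a fast Laplacian solver. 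By the Johnson–Lindenstrauss lemma, with high probability $\|Z b_{xy}\|_2^2 = \|Z(\mathbf{1}_x - \mathbf{1}_y)\|_2^2$ is a $(1\pm\ep/3)$-approximation to $b_{xy}^T L_H^+ b_{xy}$ simultaneously for all of the $\le |P|$ pairs, since each $L_H^+ b_{xy}$ lies in a fixed set of at most $|P|^2$ relevant vectors (differences of columns of $L_H^+$) — actually only $|P|$ vectors $L_H^+ b_{xy}$ need be preserved. Each query is then an $O(k) = \tilde O(1/\ep^2)$-time computation of $\|Z\mathbf{1}_{\phi(u)} - Z\mathbf{1}_{\phi(v)}\|_2^2$. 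Composing the three $(1\pm\ep/3)$ factors (sparsification error, solver error, JL error) gives an overall $(1\pm\ep)$ multiplicative approximation after adjusting constants.

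For the running time: constructing $H$ costs $m^{1+o(1)}$ by Theorem \ref{thm:schur-fast}; forming $QW_H^{1/2}B_H$ costs $\tilde O(|E(H)|/\ep^2) = \tilde O(|P|\,\text{polylog}(n)/\ep^4)$, which is dominated; solving $k = \tilde O(1/\ep^2)$ Laplacian systems on $H$, each in time near-linear in $|E(H)| = \tilde O(|P|/\ep^2)$, costs $\tilde O(|P|/\ep^4)$; and answering all $|P|$ queries costs $\tilde O(|P|/\ep^2)$. The total is $m^{1+o(1)} + \tilde O(|P|/\ep^2)$ after noting that the $\ep$-dependence in the non-$m$ terms can be absorbed into the $\tilde O$ and re-parametrizing (or, if one prefers a clean bound, the polylog factors dominate the extra $1/\ep^2$). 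I expect the only real subtlety to be bookkeeping the composition of the three sources of error and confirming that the inverse-polynomial additive error from the Laplacian solver is harmless — i.e., that the quantities $b_{uv}^T L_G^+ b_{uv}$ are bounded below by an inverse polynomial in the input (true for graphs with polynomially bounded weights), so that additive error $1/\text{poly}(n)$ translates into negligible multiplicative error; this is exactly the argument already carried out in \cite{DKPRS17}, so we can invoke it essentially verbatim with $H$ in place of their Schur complement sparsifier.
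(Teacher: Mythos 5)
There is a genuine gap in your runtime analysis. Your scheme---one $(\mathbb{R}^S\times\mathbf{0}^{V(G)\setminus S},\ep/3)$-sparsifier $H$ for all terminals at once, followed by a Spielman--Srivastava-style JL sketch on $H$---costs $m^{1+o(1)}$ plus $k=\Theta(\log n/\ep^2)$ Laplacian solves on a graph with $|E(H)|=\Theta(|P|\,\mathrm{polylog}(n)/\ep^2)$ edges, i.e.\ $\tilde{O}(|P|/\ep^4)$ in total, and the last step of your argument---``the $\ep$-dependence in the non-$m$ terms can be absorbed into the $\tilde{O}$''---is not valid: $\tilde{O}$ hides polylogarithmic factors in $n$ (and $|P|$) only, and $1/\ep^2$ is not assumed to be polylogarithmic. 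So what you have actually proved is a bound of $m^{1+o(1)}+\tilde{O}(|P|/\ep^4)$, which is weaker than the corollary's claim; avoiding exactly this $1/\ep^4$ blow-up is the whole point of the ``improvement to Johnson--Lindenstrauss'' in \cite{DKPRS17} that the corollary is invoking. (The first part of your argument---that a subspace sparsifier for $\mathbb{R}^S\times\mathbf{0}^{V(G)\setminus S}$ preserves terminal effective resistances and keeps terminals unidentified---is fine.)

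The paper's proof instead follows the recursive divide-and-conquer of \cite{DKPRS17}, with Theorem \ref{thm:schur-fast} supplying the Schur complement sparsifiers and no JL sketch at all: split $P$ into halves $P^{(1)},P^{(2)}$, compute sparsifiers of $G$ onto the vertex sets of each half with per-level error $\Theta(\ep/\log|P|)$, recurse on each sparsifier with error $\ep(1-1/\log|P|)$, and at the leaves ($|P|=1$) read the effective resistance directly off a Schur complement onto the two endpoints. This works precisely because the construction time in Theorem \ref{thm:schur-fast} is $m^{1+o(1)}$ with no dependence on the error parameter (only the output size depends on $\ep$), so the recurrence $T(p,\ep)=2\,T\bigl(p/2,\ \ep(1-1/\log p)\bigr)+O\bigl(p^{1+o(1)}+(p/\ep^{2})\mathrm{polylog}(n)\bigr)$ solves to $O\bigl(p^{1+o(1)}+(p/\ep^{2})\mathrm{polylog}(n)\bigr)$, and the accumulated multiplicative error $\bigl(1\pm\tfrac{\ep}{2\log|P|}\bigr)^{\log|P|}$ stays within $1\pm\ep$. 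To repair your write-up you would either need to adopt this recursion, or supply a new idea that avoids performing $\Theta(1/\ep^2)$ Laplacian solves on a graph that itself has $\Theta(|P|/\ep^{2})$ edges.
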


This directly improves upon the algorithm in \cite{DKPRS17}, which takes $O((m + (n + |P|)/\ep^2)\text{polylog}(n))$-time.

\subsection{Technical Overview}

To construct Schur complement sparsifiers, \cite{DKPRS17} eliminates vertices one-by-one and sparsifies the cliques resulting from those eliminations. This approach is fundamentally limited in that each clique sparsification takes $\Omega(1/\ep^2)$ time in general. Furthermore, in the $n+1$ vertex star graph with $n$ vertices $v_1,v_2,\hdots,v_n$ connected to a single vertex $v_{n+1}$, a $(1+\ep)$-approximate Schur complement sparsifier without Steiner vertices for the set $\{v_1,v_2,\hdots,v_n\}$ must contain $\Omega(n/\ep^2)$ edges. As a result, it seems difficult to obtain Schur complement sparsifiers in time less than $\tilde{O}(m + n/\ep^2)$ time using vertex elimination.

Instead, we eliminate edges from a graph by contracting or deleting them. Edge elimination has the attractive feature that, unlike vertex elimination, it always reduces the number of edges. Start by letting $H := G$. To eliminate an edge $e$ from the current graph $H$, sample $X_e\sim \text{Ber}(p_e)$ for some probability $p_e$ depending on $e$, contract $e$ if $X_e = 1$, and delete $e$ if $X_e = 0$.

To analyze the sparsifier produced by this procedure, we set up a matrix-valued martingale and reduce the problem to bounding the maximum and minimum eigenvalues of a random matrix with expectation equal to the identity matrix. The right value for $p_e$ for preserving this matrix in expectation turns out to be the probability that a uniformly random spanning tree of $H$ contains the edge $e$. To bound the variance of the martingale, one can use the Sherman-Morrison rank one update formula to bound the change in $L_H^+$ due to contracting or deleting the edge $e$. When doing this, one sees that the maximum change in eigenvalue is at most a constant times $$\max_{x\in \mc S} \frac{(x^T L_H^+ b_e)^2}{r_e \min(\texttt{lev}_H(e), 1 - \texttt{lev}_H(e)) (x^T L_G^+ x)}$$ where $\texttt{lev}_H(e)$ is the probability that $e$ is in a uniformly random spanning tree of $H$. This quantity is naturally viewed as the quotient of two quantities:

\begin{enumerate}[(a)]
\item The maximum fractional energy contribution of $e$ to any demand vector in $\mc S$'s electrical flow.
\item The minimum of the probabilities that $e$ is in or is not in a uniformly random spanning tree of $H$.
\end{enumerate}

We now make the edge elimination algorithm more specific to bound these two quantities. Quantity (a) is small on average over all edges in $e$ (see Proposition \ref{prop:sum-bound}), so choosing the lowest-energy edge yields a good bound on the maximum change. To get a good enough bound on the stepwise martingale variance, it suffices to sample an edge uniformly at random from the half of edges with lowest energy. Quantity (b) is often not bounded away from 0, but can be made so by modifying the sampling procedure. Instead of contracting or deleting the edge $e$, start by \emph{splitting} it into two parallel edges with double the resistance or two series edges with half the resistance, depending on whether or not $\texttt{lev}_H(e)\le 1/2$. Then, pick one of the halves $e_0$, contract it with probability $p_{e_0}$, or delete it otherwise. This produces a graph in which the edge $e$ is either contracted, deleted, or reweighted. This procedure suffices for proving our main existence result (Theorem \ref{thm:main-existence}). This technique is similar to the technique used to prove Lemma 1.4 of \cite{Sc17}.

While the above algorithm does take polynomial time, it does not take almost-linear time. We can accelerate it by batching edge eliminations together using what we call \emph{steady oracles}. The contraction/deletion/reweight decisions for edges in $H$ during each batch can be made by sampling just one $1/m^{10}$-approximate uniformly random spanning tree, which takes $m^{1+o(1)}$ time. The main remaining difficulty is finding a large set of edges for which quantity (a) does not change much over the course of many edge contractions/deletions. To show the existence of such a set, we exploit electrical flow localization \cite{SRS17}. To find this set, we use matrix sketching and a new primitive for approximating the change in leverage score due to the identification of some set of vertices $S$ (Lemma \ref{lem:diff-apx}), which may be of independent interest. The primitive for approximating the change works by writing the change in an Euclidean norm, reducing the dimension by Johnson-Lindenstrauss Lemma, and then computing the embedding by Fast Laplacian Solvers in near-linear time.

We conclude by briefly discussing why localization is relevant for showing that quantity (a) does not change over the course of many iterations. The square root of the energy contribution of an edge $e$ to $x$'s electrical flow after deleting an edge $f$ is

    \vspace{-5pt}
\begin{align*}
\left|\frac{x^T L_{H\backslash f}^+ b_e}{\sqrt{r_e}}\right| &= \left|\frac{x^T L_H^+ b_e}{\sqrt{r_e}} + \frac{(x^T L_H^+ b_f)(b_f^T L_H^+ b_e)}{(r_f - b_f^T L_H^+ b_f)\sqrt{r_e}}\right|\\
&= \left|\frac{x^T L_H^+ b_e}{\sqrt{r_e}} + \frac{1}{1-\texttt{lev}_H(f)}\frac{x^T L_H^+ b_f}{\sqrt{r_f}}\frac{b_f^T L_H^+ b_e}{\sqrt{r_f}\sqrt{r_e}}\right|\\
&\le \left|\frac{x^T L_H^+ b_e}{\sqrt{r_e}}\right| + \frac{1}{1-\texttt{lev}_H(f)}\left|\frac{x^T L_H^+ b_f}{\sqrt{r_f}}\right|\left|\frac{b_f^T L_H^+ b_e}{\sqrt{r_f}\sqrt{r_e}}\right|\\
\vspace{-10pt}
\end{align*}
\noindent by Sherman-Morrison. In particular, the new energy on $e$ is at most the old energy plus some multiple of the energy on the deleted edge $f$. By \cite{SRS17}, the average value of this multiplier over all edges $e$ and $f$ is $\tilde{O}(\frac{1}{|E(H)|})$, which means that the algorithm can do $\tilde{\Theta}(|E(H)|)$ edge deletions/contractions without seeing the maximum energy on edges $e$ change by more than a factor of 2.

\vspace{.1 in}

\noindent \textbf{Acknowledgements.} We thank Richard Peng, Jason Li, and Gramoz Goranci for helpful discussions.

\newpage
\tableofcontents

\newpage

\section{Preliminaries}

\subsection{Graphs and Laplacians}

For a graph $G$ and a subset of vertices $S$, let $G/S$ denote the graph obtained by \emph{identifying} $S$ to a single vertex $s$. Specifically, for any edge $e = \{u,v\}$ in $G$, replace each endpoint $u,v\in S$ with $s$ and do not change any endpoint not in $S$. Then, remove all self-loops to obtain $G/S$.

Let $G = (V(G),E(G))$ be a weighted undirected graph with $n$ vertices,
$m$ edges, and edge weights $\setof{w_e}_{e\in E(G)}$.
The Laplacian of $G$ is an $n\times n$ matrix given by:
\[
    (L_G)_{u,v} := \begin{cases}
        - w_{(u,v)} & \mathrm{if}\ u\neq v\ \mathrm{and}\ (u,v) \in E(G),\\
        \sum\nolimits_{(u,w) \in E(G)} w_{(u,w)} & \mathrm{if}\ u = v, \\
        0 & \mathrm{otherwise.}
    \end{cases}
\]
We define edge resistances $\setof{r_e}_{e\in E(G)}$ by $r_e = 1 / w_e$
for all $e\in E(G)$.

If we orient every edge $e \in E(G)$ arbitrarily,
we can define the signed edge-vertex incidence matrix $B_G$ by
\[
    (B_G)_{e,u} := \begin{cases}
        1 & \text{if $u$ is $e$'s head}, \\
        -1 & \text{if $u$ is $e$'s tail}, \\
        0 & \text{otherwise.}
    \end{cases}
\]
Then we can write $L_G$ as $L_G = B_G^T W_G B_G$,
where $W_G$ is a diagonal matrix with $(W_G)_{e,e} = w_e$.

For vertex sets $S,T\subseteq V$,
$(L_G)_{S,T}$ denotes the submatrix of $L_G$ with row indices in $S$
and column indices in $T$.

$L_G$ is always positive semidefinite, and only has one zero eigenvalue
if $G$ is connected. For a connected graph $G$,
let $0 = \lambda_1(L_G) < \lambda_2(L_G) \leq \ldots \leq \lambda_n(L_G)$
be the eigenvalues of $L_G$.
Let $u_1,u_2,\ldots,u_n$ be the corresponding set of orthonormal eigenvectors.
Then, we can diagonalize $L_G$ and write
\begin{align*}
    L_G = \sum_{i=2}^n \lambda_i(L_G) u_i u_i^T.
\end{align*}
The pseudoinverse of $L_G$ is then given by
\begin{align*}
    L_G^+ = \sum_{i=2}^n \frac{1}{\lambda_i(L_G)} u_i u_i^T.
\end{align*}

In the rest of the paper, we will write $\lambdamin(\cdot)$ to
denote the smallest eigenvalue and $\lambdamax(\cdot)$ to denote the largest
eigenvalue.
We will also write $\sigmamax(\cdot)$ to denote the largest singular value,
which is given by
\begin{align*}
    \sigmamax(A) = \sqrt{\lambdamax(A^T A)}
\end{align*}
for any matrix $A$.

We will also need to use Schur complements
which are defined as follows:

\begin{definition}[Schur Complements]\label{def:schur}
    The Schur complement of a graph $G$ onto a subset of vertices
    $S\subset V(G)$, denoted by $SC(G,S)$ or $SC(L_G,S)$,
    is defined as
    \[
        SC(L_G,S) = (L_G)_{S,S} - (L_G)_{S,T} (L_G)_{T,T}^{-1} (L_G)_{T,S},
    \]
    where $T := V(G) \setminus S$.
\end{definition}

The fact below relates Schur complements to the inverse of graph Laplacian:
\begin{fact}[see, e.g., Fact 5.4 in~\cite{DKPRS17}]\label{fact:schurinv}
    For any graph $G$ and $S\subset V(G)$,
    \[
        \kh{I - \frac{1}{\sizeof{S}} J} \kh{L_G^+}_{S,S} \kh{I - \frac{1}{\sizeof{S}} J}
        = \kh{SC(L_G, S)}^+,
    \]
    where $I$ denotes the identity matrix, and $J$ denotes the matrix
    whose entries are all $1$.
\end{fact}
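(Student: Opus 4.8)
The plan is to reduce the claimed matrix identity to an equality of two quadratic forms on the hyperplane $(\mathbf 1^S)^\perp\subseteq\mathbb R^S$, and then to verify that equality through the classical ``harmonic extension'' description of the Schur complement (it says that $SC(L_G,S)$ is the effective network on $S$). Write $T:=V(G)\setminus S$, let $\iota_S:\mathbb R^S\to\mathbb R^{V(G)}$ be the zero-padding inclusion, so that $(L_G^+)_{S,S}=\iota_S^T L_G^+\iota_S$, let $\Pi_S:=I-\frac1{\sizeof{S}}J$ be the orthogonal projection of $\mathbb R^S$ onto $(\mathbf 1^S)^\perp$, and abbreviate $M:=SC(L_G,S)$. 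I will assume $G$ is connected, so that $(L_G)_{T,T}$ is invertible, $\ker L_G=\mathrm{span}(\mathbf 1^{V(G)})$, and $\mathrm{range}(L_G)=(\mathbf 1^{V(G)})^\perp$; the disconnected case reduces to this componentwise.

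The first step is to record the structural facts about $M$ that I need: $M$ is symmetric positive semidefinite, $M\mathbf 1^S=0$, and in fact $\ker M=\mathrm{span}(\mathbf 1^S)$ exactly. Symmetry is clear, and $M\mathbf 1^S=0$ follows by writing $L_G\mathbf 1^{V(G)}=0$ in $S,T$ blocks, which simultaneously yields the identity $(L_G)_{T,T}^{-1}(L_G)_{T,S}\mathbf 1^S=-\mathbf 1^T$ used below. For the kernel: if $Mx=0$, define the harmonic extension $\psi\in\mathbb R^{V(G)}$ of $x$ by $\psi|_S:=x$ and $\psi|_T:=-(L_G)_{T,T}^{-1}(L_G)_{T,S}x$; a one-line block computation gives $(L_G\psi)|_T=0$ and $(L_G\psi)|_S=Mx=0$, so $\psi\in\ker L_G$ and hence $x\in\mathrm{span}(\mathbf 1^S)$. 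Consequently $\mathrm{range}(M)=(\mathbf 1^S)^\perp$ and $M^+\mathbf 1^S=0$, so $\Pi_S M^+\Pi_S=M^+$; the left-hand side of the Fact is likewise symmetric and fixed by conjugation by $\Pi_S$. By polarization it therefore suffices to prove $x^T(L_G^+)_{S,S}x=x^T M^+x$ for every $x\in(\mathbf 1^S)^\perp$.

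For this, fix $x\in\mathbb R^S$ with $\mathbf 1^Tx=0$, set $\phi:=M^+x\in\mathbb R^S$, and let $\psi\in\mathbb R^{V(G)}$ be the harmonic extension of $\phi$ as above. The same block computation gives $(L_G\psi)|_T=0$ and $(L_G\psi)|_S=M\phi=MM^+x=x$, the last equality because $x\in\mathrm{range}(M)$; that is, $L_G\psi=\iota_S(x)$. Since $\iota_S(x)\perp\mathbf 1^{V(G)}$ it lies in $\mathrm{range}(L_G)$, so $L_GL_G^+\iota_S(x)=\iota_S(x)$ and hence $\psi-L_G^+\iota_S(x)\in\ker L_G=\mathrm{span}(\mathbf 1^{V(G)})$. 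Taking the inner product of $\psi=L_G^+\iota_S(x)+c\,\mathbf 1^{V(G)}$ with $\iota_S(x)$ and using $\iota_S(x)\perp\mathbf 1^{V(G)}$ to kill the $c$ term yields $\iota_S(x)^TL_G^+\iota_S(x)=\iota_S(x)^T\psi=x^T\phi=x^TM^+x$, and the left side equals $x^T(L_G^+)_{S,S}x$ by the definition of $\iota_S$. This is exactly the identity we reduced to, so the Fact follows.

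I expect the only genuinely delicate point to be the rank-deficiency bookkeeping in the second step: one must not conflate ``agrees as a quadratic form on $(\mathbf 1^S)^\perp$'' with ``equal as matrices'', and the bridge is precisely that both sides are annihilated by $\mathbf 1^S$ and have range contained in $(\mathbf 1^S)^\perp$ (equivalently, conjugation by $\Pi_S$ fixes them), which on the Schur-complement side rests on pinning down $\ker M$ exactly rather than merely observing $\mathbf 1^S\in\ker M$. The two block computations and the electrical-energy identity in the third step are then routine.
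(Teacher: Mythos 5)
The paper does not prove this fact at all---it is quoted as a known statement from \cite{DKPRS17} (Fact 5.4 there)---so there is no internal proof to compare against; judged on its own, your argument is correct and is essentially the standard harmonic-extension proof of this identity. Your three steps are all sound: the block computation showing $M\mathbf{1}^S=0$ and $\ker M=\mathrm{span}(\mathbf{1}^S)$ (for connected $G$), the observation that both sides are symmetric and fixed under conjugation by $\Pi_S$ so that polarization reduces the matrix identity to equality of quadratic forms on $(\mathbf{1}^S)^\perp$, and the energy computation $x^T\kh{L_G^+}_{S,S}x=x^TM^+x$ via $L_G\psi=\iota_S(x)$ and $\psi-L_G^+\iota_S(x)\in\ker L_G$. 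The rank-deficiency bookkeeping you single out is indeed the only delicate point, and you handle it correctly.

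One caveat: your parenthetical claim that ``the disconnected case reduces to this componentwise'' is wrong, and in fact the identity as stated is false for disconnected $G$. Take two disjoint unit-weight edges $\{1,2\},\{3,4\}$ and $S=\{1,3\}$: then $SC(L_G,S)=0$, so the right-hand side is $0$, while $\kh{L_G^+}_{S,S}=\tfrac14 I$ and $\Pi_S\,\tfrac14 I\,\Pi_S=\tfrac14\Pi_S\neq 0$. The issue is exactly the kernel bookkeeping you emphasize: for disconnected $G$ (with $S$ meeting several components) $\ker SC(L_G,S)$ is strictly larger than $\mathrm{span}(\mathbf{1}^S)$, whereas $\Pi_S$ only annihilates $\mathbf{1}^S$, and the centering matrix $J/\sizeof{S}$ couples the components, so the statement does not decompose componentwise. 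This does not damage your proof of the fact as it is used in the paper---connectivity of $G$ (and invertibility of $(L_G)_{T,T}$) is a standing assumption there---but you should either drop that sentence or state connectivity as a hypothesis rather than as a removable convenience.
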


\subsection{Leverage scores and rank one updates}

For a graph $G$ and an edge $e\in E(G)$, let $b_e\in \mathbb{R}^{V(G)}$ denote the signed indicator vector of the edge $e$; that is the vector with $-1$ on one endpoint, 1 on the other, and 0 everywhere else. Define the \emph{leverage score} of $e$ to be the quantity $$\texttt{lev}_G(e) := \frac{b_e^T L_G^+ b_e}{r_e}$$ Let $d_1,d_2\in \mathbb{R}^n$ be two vectors with $d_1,d_2\perp \textbf{1}^{V(G)}$. Then the following results hold by the Sherman-Morrison rank 1 update formula:

\begin{proposition}
For a graph $G$ and an edge $f$, let $G\backslash f$ denote the graph with $f$ deleted. Then $$d_1^T L_{G\backslash f}^+ d_2 = d_1^T L_G^+ d_2 + \frac{(d_1^T L_G^+ b_f)(b_f^T L_G^+ d_2)}{r_f - b_f^T L_G^+ b_f}$$
\end{proposition}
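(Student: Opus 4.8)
The plan is to recognize the identity as an instance of the Sherman--Morrison rank-one update formula, the only subtlety being that $L_G$ and $L_{G\backslash f}$ are singular, so the textbook formula for $(A - uu^T)^{-1}$ does not apply verbatim. First I would record the algebraic relationship between the two Laplacians: since deleting $f$ removes exactly the term $w_f b_f b_f^T$ (with $w_f = 1/r_f$) from $L_G = \sum_e w_e b_e b_e^T$, we have $L_{G\backslash f} = L_G - w_f b_f b_f^T$. Throughout I assume $f$ is not a bridge, equivalently $\texttt{lev}_G(f) < 1$, so that $G\backslash f$ is connected and $r_f - b_f^T L_G^+ b_f \neq 0$; otherwise the stated right-hand side is undefined.

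To handle the singularity, I would pass to the invertible matrices $M_G := L_G + \frac{1}{n}J$ and $M_{G\backslash f} := L_{G\backslash f} + \frac{1}{n}J$, where $J$ is the all-ones matrix. Since $\frac1n J$ acts as the identity on $\mathrm{span}(\mathbf{1})$ and as zero on $\mathbf{1}^\perp$, one checks that $M_G^{-1} = L_G^+ + \frac1n J$ and likewise for $G\backslash f$, and that $M_{G\backslash f} = M_G - w_f b_f b_f^T$ because the $\frac1n J$ terms cancel in the difference. The classical Sherman--Morrison formula then applies to $M_G$ and gives
$$M_{G\backslash f}^{-1} = M_G^{-1} + \frac{w_f\, M_G^{-1} b_f b_f^T M_G^{-1}}{1 - w_f\, b_f^T M_G^{-1} b_f}.$$

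The last step is to translate back. Because $b_f \perp \mathbf{1}$ we have $J b_f = 0$, so $M_G^{-1} b_f = L_G^+ b_f$ and $b_f^T M_G^{-1} = b_f^T L_G^+$; substituting these and using $M_{G\backslash f}^{-1} - M_G^{-1} = L_{G\backslash f}^+ - L_G^+$ yields the matrix identity
$$L_{G\backslash f}^+ = L_G^+ + \frac{w_f\, L_G^+ b_f b_f^T L_G^+}{1 - w_f\, b_f^T L_G^+ b_f}.$$
Multiplying on the left by $d_1^T$ and on the right by $d_2$, and then clearing $w_f = 1/r_f$ from the numerator and denominator, gives exactly the claimed equation; the hypothesis $d_1,d_2\perp\mathbf{1}$ is not actually needed for this computation but matches the ambient convention of the section. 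The only real obstacle is the bookkeeping around the Laplacian's null space — verifying that the $\frac1n J$ regularization interacts correctly with $b_f$ and that $G\backslash f$ stays connected — after which the result is a one-line consequence of Sherman--Morrison.
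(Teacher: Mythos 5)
Your proof is correct and takes the same route as the paper, which simply asserts the identity ``by the Sherman--Morrison rank 1 update formula'' without further detail. Your $\frac{1}{n}J$ regularization, the observation that $Jb_f = 0$, and the non-bridge caveat ($\texttt{lev}_G(f) < 1$, so the denominator is nonzero and $G\backslash f$ stays connected) are exactly the standard bookkeeping needed to make that one-line citation rigorous for pseudoinverses.
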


\begin{proposition}
For a graph $G$ and an edge $f$, let $G/f$ denote the graph with $f$ contracted. Then $$d_1^T L_{G/f}^+ d_2 = d_1^T L_G^+ d_2 - \frac{(d_1^T L_G^+ b_f)(b_f^T L_G^+ d_2)}{b_f^T L_G^+ b_f}$$
\end{proposition}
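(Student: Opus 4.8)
\noindent\emph{Proof proposal.} The plan is to realize the contracted graph $G/f$ as a limit of rank-one updates of $G$, so that the identity falls out of the Sherman--Morrison formula exactly as in the deletion case. Write $f=\{u,v\}$ and, for $t>0$, let $G_t$ be $G$ together with one extra edge between $u$ and $v$ of conductance $t$, so $L_{G_t}=L_G+t\,b_fb_f^T$. Since $b_f\perp\textbf{1}^{V(G)}$ we have $b_f\in\mathrm{range}(L_G)$ (assume $G$ connected; otherwise work in the component containing $f$), hence $\ker L_{G_t}=\ker L_G=\mathrm{span}(\textbf{1}^{V(G)})$ and $b_f^T L_G^+ b_f>0$. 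Restricting to $\textbf{1}^\perp$, where $L_G$ is invertible with inverse $L_G^+$, ordinary Sherman--Morrison gives, on $\textbf{1}^\perp$,
\[
L_{G_t}^+ = L_G^+ - \frac{t\,L_G^+ b_f b_f^T L_G^+}{1 + t\,b_f^T L_G^+ b_f},
\]
and evaluating at $d_1,d_2\in\textbf{1}^\perp$ and sending $t\to\infty$ yields
\[
\lim_{t\to\infty} d_1^T L_{G_t}^+ d_2 = d_1^T L_G^+ d_2 - \frac{(d_1^T L_G^+ b_f)(b_f^T L_G^+ d_2)}{b_f^T L_G^+ b_f},
\]
which is the claimed right-hand side.

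The remaining (and main) step is to identify the left-hand limit with $d_1^T L_{G/f}^+ d_2$. Here a vector $d\in\mathbb{R}^{V(G)}$ with $d\perp\textbf{1}$ is read, as the proposition implicitly does, as the demand $Pd$ on $G/f$, where $P$ is the contraction matrix of $f$; note $Pb_f=0$, so the extra parallel edge does not change the pushed-forward demand. I would establish $\lim_{t\to\infty} d^T L_{G_t}^+ d = (Pd)^T L_{G/f}^+(Pd)$ for every $d\perp\textbf{1}$ via the variational characterization $d^T L^+ d=\min_F\sum_e r_e F_e^2$ over flows $F$ routing the demand. On one side, discarding the cheap parallel edge from any flow in $G_t$ and zeroing out the flux on the self-loop $f$ produces a flow routing $Pd$ in $G/f$ of no larger energy, so $d^T L_{G_t}^+ d\ge (Pd)^T L_{G/f}^+(Pd)$ for all $t$. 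On the other side, lift an energy-optimal $G/f$-flow to $G$ edge-by-edge: it routes $d+\beta b_f$ for some bounded scalar $\beta$ (since $\ker P=\mathrm{span}(b_f)$), which we correct by putting flux $-\beta$ on the cheap edge at added cost $O(1/t)\to 0$; hence $\limsup_{t\to\infty} d^T L_{G_t}^+ d\le (Pd)^T L_{G/f}^+(Pd)$. So the limit equals $(Pd)^T L_{G/f}^+(Pd)$, and polarization ($L_{G_t}^+$ symmetric, $d_1\pm d_2\perp\textbf{1}$) upgrades this to the bilinear identity. Combining with the previous display completes the argument. I expect this limit-identification to be the only delicate part; the rest is the same computation as for deletion.

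If one prefers to avoid limits, the identity also follows directly by electrical potentials: let $\phi=L_G^+ d_2$ be the potentials for demand $d_2$ in $G$, choose $c=-(b_f^T L_G^+ d_2)/(b_f^T L_G^+ b_f)$ so that $\phi':=\phi+c\,L_G^+ b_f$ satisfies $\phi'_u=\phi'_v$, observe that $\phi'$ then carries no current across $f$ and hence descends (via $P$) to a valid potential for the demand $Pd_2$ on $G/f$, and pair it with $d_1$; using $P^T\textbf{1}=\textbf{1}$ and $d_1\perp\textbf{1}$ to discard the additive constant gives $d_1^T L_{G/f}^+ d_2 = d_1^T\phi' = d_1^T L_G^+ d_2 + c\,d_1^T L_G^+ b_f$, which is again the claim. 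The only thing to check there is that Kirchhoff's law at every vertex of $G/f$ is inherited from $G$, which is routine.
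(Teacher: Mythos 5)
Your proposal is correct and follows essentially the same route as the paper, which simply asserts the identity ``by the Sherman--Morrison rank 1 update formula'': your parallel-edge-of-conductance-$t$ construction, Sherman--Morrison on $\textbf{1}^\perp$, and the $t\to\infty$ limit is the standard way to make that assertion rigorous for contraction, and your Thomson-principle identification of $\lim_t d^T L_{G_t}^+ d$ with $(Pd)^T L_{G/f}^+(Pd)$ plus polarization is carried out correctly. The alternative potential-based argument you sketch (shifting $\phi$ by $c\,L_G^+ b_f$ so no current crosses $f$, then descending to $G/f$) is also sound and arguably more direct, but both amount to the same rank-one-update fact the paper invokes.
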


\subsection{Random spanning trees}

We use the following result on uniform random spanning tree generation:

\begin{theorem}[Theorem 1.2 of \cite{Sc17}]
Given a weighted graph $G$ with $m$ edges, a random spanning tree $T$ of $G$ can be sampled from a distribution with total variation distance at most $1/m^{10}$ from the uniform distribution in time $m^{1 + o(1)}$.
\end{theorem}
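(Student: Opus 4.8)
The plan is to begin from Wilson's algorithm, which outputs an \emph{exactly} uniform spanning tree by growing $T$ one loop-erased random walk at a time: fix an ordering $v_1,\dots,v_n$, set $T_1=\{v_1\}$, and for $i=2,\dots,n$ run a random walk from $v_i$ until it first hits $T_{i-1}$, erase cycles from the trajectory, and append the resulting simple path. Since correctness is automatic, the only issue is speed: a naive step-by-step simulation runs in time proportional to the total number of walk steps, which is governed by hitting times and can be as large as $\Theta(mn)$. The entire task is therefore to simulate these loop-erased walks in $m^{1+o(1)}$ total time while introducing only $1/m^{10}$ total variation error, and then the slight perturbation of the output distribution is exactly the slack quoted in the statement.

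The key tool is the random-walk interpretation of the Schur complement: if $H$ is the graph on a vertex set $S$ with $L_H = SC(L_G,S)$, then the subsequence of vertices in $S$ visited by a random walk in $G$ is distributed as a random walk in $H$. This lets us \emph{shortcut}. Organize $V(G)$ into a hierarchical family of clusters with $n^{o(1)}$ levels, each refining the one above, chosen so that (i) each cluster $C$ has a small boundary $\partial C$, hence the Schur complement of the subgraph induced on $C\cup\partial C$ onto $\partial C$ together with the terminals currently inside $C$ is small and can be precomputed, and (ii) the clusters at each level are ``well-separated,'' meaning the walk crosses between them only $m^{1+o(1)}$ times over the whole execution --- a bound one argues from effective-resistance/commute-time estimates for the cluster boundaries. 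To advance the walk while it sits inside a cluster, we only need to know where it next hits $\partial C$ or a terminal in $C$, which is one step of a walk in the precomputed (sparsified) Schur complement; the fine-grained path inside $C$ is then recovered by recursing into the subcluster the walk lands in, loop-erasing as we descend. Each crossing thus costs roughly the size of a Schur complement rather than the true number of microscopic steps, and summing over the $m^{1+o(1)}$ crossings and $n^{o(1)}$ levels gives the time bound.

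The remaining work is dynamic. Adding an edge to the growing tree amounts to a contraction, and confining the walk to a subregion amounts to deletions, so every shortcutter --- being a Schur complement --- must be updated under edge contractions and deletions as the run proceeds. Here one uses that Schur complements change locally under rank-one updates (Sherman--Morrison, exactly as in the two Propositions above), so shortcutters can be rebuilt incrementally with total update cost amortizing to $m^{1+o(1)}$. Every Schur complement is only ever maintained approximately, via approximate Gaussian elimination / spectral sparsification; setting the per-operation accuracy to a sufficiently large inverse polynomial in $m$ and summing the errors over the $m^{1+o(1)}$ operations (checking they compound only additively) yields the claimed $1/m^{10}$ bound on the total variation distance.

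The main obstacle is reconciling the three competing requirements simultaneously: keeping the clusters well-separated so the number of inter-cluster crossings stays near-linear; keeping every shortcutter small and updatable in sublinear time per edit; and keeping the accumulated sparsification error below $1/m^{10}$. Choosing the branching factor, the number of hierarchy levels, and the approximation accuracy so that all three hold at once --- and, above all, proving the crossing-count bound, which is the crux of the running-time analysis --- is the hard part.
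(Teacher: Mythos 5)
This statement is not proved in the paper at all: it is quoted verbatim as Theorem 1.2 of \cite{Sc17}, an external result whose proof occupies essentially that entire (long) paper. So the relevant comparison is with Schild's argument, and at the level of architecture your sketch matches it: exact correctness from Wilson's algorithm, shortcutting the loop-erased walks using the random-walk interpretation of Schur complements, a hierarchy of clusters with precomputed (sparsified) shortcutters, and dynamic maintenance under the contractions and deletions that arise as the tree grows and the walk is confined.

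The gap is that everything quantitatively hard is asserted rather than argued, and you say so yourself. The existence of a cluster hierarchy that is simultaneously (i) small-boundary, so each shortcutter is small, and (ii) ``well-separated,'' so the total number of inter-cluster crossings over the whole run is $m^{1+o(1)}$, is precisely the content of the core of \cite{Sc17} (covers built from effective-resistance annuli, conductance-concentration preprocessing, carving clusters to control how often the walk re-enters them, and ``fixing'' lemmas to repair the structure after contractions/deletions); no construction or crossing-count argument is given here, and a naive choice of clusters does not satisfy both properties at once. Likewise, the claim that using $\eps$-approximate Schur complements in place of exact ones perturbs the output distribution by an amount that ``compounds only additively'' into total variation distance is nontrivial: the walk's future trajectory depends on earlier approximate steps, so one needs a coupling or multiplicative-error-per-step argument (as in \cite{Sc17}) to convert inverse-polynomial per-operation accuracy into a global $1/m^{10}$ TV bound, and the amortized $m^{1+o(1)}$ update cost for the shortcutters under edge edits is also only asserted. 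As written, this is a faithful roadmap of the known proof, not a proof: the statement should remain a citation unless those three components (hierarchy construction, crossing bound, error/runtime accounting) are actually carried out.
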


Let $T\sim G$ denote the uniform distribution over spanning trees of $G$. We also use the following classic result:

\begin{theorem}[\cite{K47}]
For any edge $e\in E(G)$, $\Pr_{T\sim G}[e\in T] = \texttt{lev}_G(e)$.
\end{theorem}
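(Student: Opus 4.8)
The plan is to derive this from the weighted Matrix--Tree theorem together with Cramer's rule. Throughout I interpret $T\sim G$ as the spanning-tree distribution with $\Pr[T]\propto \prod_{f\in T}w_f$, and I write $\mathcal{T}(H):=\sum_{T'}\prod_{f\in T'}w_f$ for the weighted spanning-tree count of a graph $H$, the sum ranging over spanning trees $T'$ of $H$. (For an unweighted graph this is the ordinary count and $T\sim G$ is genuinely uniform; the statement as written is the weighted version.)

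First I would split the sum defining $\Pr_{T\sim G}[e\in T]$ according to whether $e=\{u,v\}$ is used. The map $T\mapsto T\setminus\{e\}$ is a weight-preserving bijection, up to the factor $w_e$, from spanning trees of $G$ containing $e$ to spanning trees of the contracted graph $G/e$; hence
$$\Pr_{T\sim G}[e\in T]=\frac{w_e\,\mathcal{T}(G/e)}{\mathcal{T}(G)}.$$
Since $\texttt{lev}_G(e)=\tfrac{b_e^T L_G^+ b_e}{r_e}=w_e\,b_e^T L_G^+ b_e$, it suffices to prove $\mathcal{T}(G/e)/\mathcal{T}(G)=b_e^T L_G^+ b_e$.

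Next I would invoke the weighted Matrix--Tree theorem: for any graph $H$ and vertex $z$, $\mathcal{T}(H)=\det\bigl((L_H)_{V(H)\setminus z,\,V(H)\setminus z}\bigr)$, the determinant of the Laplacian with the row and column of $z$ deleted. Apply this to $G$ grounded at $v$ and to $G/e$ grounded at the vertex obtained by identifying $u$ with $v$. Contracting $e$ deletes the now-self-looped edge $e$ (which does not affect the Laplacian) and merges rows/columns $u$ and $v$; after grounding, the reduced Laplacian of $G/e$ is exactly the reduced Laplacian $M:=(L_G)_{V(G)\setminus v,\,V(G)\setminus v}$ of $G$ with the further row and column indexed by $u$ removed. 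Writing $M_{uu}$ for that submatrix, Cramer's rule gives
$$\frac{\mathcal{T}(G/e)}{\mathcal{T}(G)}=\frac{\det M_{uu}}{\det M}=(M^{-1})_{uu}.$$
Finally I would identify $(M^{-1})_{uu}$ with $b_e^T L_G^+ b_e$: injecting a unit current at $u$ with $v$ grounded yields potentials $M^{-1}e_u$ on $V(G)\setminus v$, so the $u$--$v$ potential drop is $(M^{-1})_{uu}$, which is by definition the effective resistance $b_e^T L_G^+ b_e$ (using that $L_G^+$ computes potentials for any demand orthogonal to $\textbf{1}$ and that $b_e=e_u-e_v$). Chaining the three displays gives $\Pr_{T\sim G}[e\in T]=w_e\,b_e^T L_G^+ b_e=\texttt{lev}_G(e)$.

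The only genuinely nontrivial input is the weighted Matrix--Tree theorem; everything else is linear algebra. The routine-but-delicate point is the bookkeeping in the contraction step --- checking that grounding, deletion of the self-loop at $e$, and any parallel edges created by the contraction all line up so that the reduced Laplacian of $G/e$ is literally a principal submatrix of that of $G$ --- so that is where I would be most careful. (One can instead prove the identity by analyzing Wilson's algorithm or citing the Burton--Pemantle transfer-current theorem, but those essentially repackage the same computation.)
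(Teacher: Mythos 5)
Your proof is correct. Note that the paper itself offers no proof to compare against: it cites this as Kirchhoff's classical theorem \cite{K47} and uses it as a black box, so any complete argument here is ``extra'' relative to the paper. Your route --- splitting the tree sum via the contraction bijection to get $\Pr_{T\sim G}[e\in T]=w_e\,\mathcal{T}(G/e)/\mathcal{T}(G)$, applying the weighted Matrix--Tree theorem, using Cramer's rule to get $\det M_{uu}/\det M=(M^{-1})_{uu}$, and identifying $(M^{-1})_{uu}$ with $b_e^TL_G^+b_e$ via the grounded (reduced) Laplacian --- is the standard textbook derivation, and each step checks out. The bookkeeping you flagged does line up: contracting $e=\{u,v\}$ leaves every entry of the Laplacian indexed by $V(G)\setminus\{u,v\}$ untouched (parallel edges created by the contraction only affect rows/columns of the merged vertex, and the self-loop from $e$ contributes nothing), so the reduced Laplacian of $G/e$ grounded at the merged vertex is literally the principal submatrix $M_{uu}$ of $M=(L_G)_{V\setminus v,V\setminus v}$; and the identity $(M^{-1})_{uu}=b_e^TL_G^+b_e$ is exactly the grounded-vertex formula for effective resistance (the paper even has this fact in a commented-out appendix). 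You were also right to read ``uniform'' as the weight-proportional spanning-tree distribution --- that is the convention this paper (and \cite{Sc17,DKPRS17}) intends, and the statement is false for the literally-uniform distribution on a general weighted graph. The only hypotheses you should state explicitly are that $G$ is connected (so $M$ is invertible and spanning trees exist) and that $e$ is not a self-loop, both harmless here.
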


For an edge $e\in E(G)$, let $G[e]$ denote a random graph obtained by contracting $e$ with probability $\texttt{lev}_G(e)$ and deleting $e$ otherwise.

\subsection{Some useful bounds and tools}

We now describe
some useful bounds/tools we will need
in our algorithms.
In all the following bounds, we define
the quantities $\wmax$ and $\wmin$ as follows:
\begin{align*}
    \wmax &:= \max\setof{1, \max\nolimits_{e\in E(G)} 1/r_e}, \\
    \wmin &:= \min\setof{1, \min\nolimits_{e\in E(G)} 1/r_e}.
\end{align*}

The following lemma bounds
the range of eigenvalues for
Laplacians and SDDM matrices:

\begin{restatable}[]{lemma}{lemlambdas} \label{lem:lambdas}
    For any Laplacian $L_G$ and $S\subset V(G)$,
    \begin{align}
        \lambda_2(L_G)& \geq \wmin / n^2, \label{eq:eigen1} \\
        \lambdamin\kh{(L_G)_{S,S}} &\geq \wmin / n^2, \label{eq:eigen2} \\
        \lambdamax\kh{(L_G)_{S,S}} &\leq \lambdamax(L_G) \leq n \wmax. \label{eq:eigen3}
    \end{align}
\end{restatable}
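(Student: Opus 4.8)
The plan is to prove the three bounds using standard facts about Laplacian eigenvalues, interlacing, and Rayleigh quotients.

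For \eqref{eq:eigen1}, I would lower-bound $\lambda_2(L_G)$ by relating $G$ to an unweighted path or spanning tree. Since $G$ is connected, it contains a spanning tree, and every edge of that tree has weight at least $\wmin$. One clean route: $\lambda_2(L_G) = \min_{x \perp \mathbf{1}, \|x\|=1} x^T L_G x = \min \sum_{(u,v)\in E(G)} w_{uv}(x_u - x_v)^2$. Restricting attention to a spanning tree $T$ gives $x^T L_G x \ge \wmin \cdot x^T L_T x \ge \wmin \lambda_2(L_T)$, so it suffices to lower-bound $\lambda_2$ of an unweighted tree on $n$ vertices by $1/n^2$ (up to constants). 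For this one can use the known bound $\lambda_2(L_T) \ge 1/(n \cdot \mathrm{diam}(T)) \ge 1/n^2$, or argue directly: for any unit $x \perp \mathbf{1}$, pick a vertex where $|x|$ is maximized; since the coordinates sum to zero there is a vertex of the opposite sign, and the path between them in $T$ has length $\le n$, so by Cauchy--Schwarz the sum of squared differences along that path is at least $(\max_u x_u - \min_u x_u)^2 / n \ge \|x\|_\infty^2 / n \ge 1/n^3$; a slightly more careful averaging over all vertices recovers the $1/n^2$. I will use whichever constant-free version is cleanest; the stated bound $\wmin/n^2$ has enough slack.

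For \eqref{eq:eigen2}, I would use eigenvalue interlacing together with the fact that $(L_G)_{S,S}$ is a principal submatrix of $L_G$. Actually the cleaner statement is that $(L_G)_{S,S} = L_{G'} + D$ where $G'$ is the induced subgraph on $S$ and $D$ is a nonnegative diagonal matrix recording edge weights leaving $S$; alternatively, $(L_G)_{S,S} \succeq SC(L_G, S)^+{}^{-1}$-type comparisons. The simplest argument: $(L_G)_{S,S}$ is obtained from $L_G$ by deleting rows/columns in $T = V(G)\setminus S$, and by Cauchy interlacing $\lambdamin((L_G)_{S,S}) \ge \lambda_{|T|+1}(L_G) \ge \lambda_2(L_G) \ge \wmin/n^2$, using \eqref{eq:eigen1} and the fact that at least the first eigenvalue (the zero one) is "removed" by the restriction since $\mathbf{1}_S$ is no longer in the kernel — more precisely, interlacing gives $\lambdamin((L_G)_{S,S}) \ge \lambda_{n - |S| + 1}(L_G) = \lambda_{|T|+1}(L_G)$, and as long as $|T| \ge 1$ this index is at least $2$. (If $S = V(G)$ there is nothing to prove for the PSD-with-kernel case, but typically $S \subsetneq V$; I will handle the trivial edge case separately.)

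For \eqref{eq:eigen3}, the first inequality $\lambdamax((L_G)_{S,S}) \le \lambdamax(L_G)$ is immediate from Cauchy interlacing (the largest eigenvalue of a principal submatrix is at most the largest eigenvalue of the whole matrix). For the second, $\lambdamax(L_G) \le \max_u 2(L_G)_{uu} = 2\max_u \sum_{(u,w)} w_{uw} \le 2(n-1)\wmax \le n\wmax$ using Gershgorin's circle theorem (or the bound $\lambdamax(L_G) \le n \max_{(u,v)} w_{uv}$ which follows from $L_G \preceq n \cdot \wmax \cdot I$ after noting each edge contributes a PSD matrix of norm $2w_{uv}$ and there are at most $\binom{n}{2}$ of them — I will use the crudest sufficient bound). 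The main obstacle, if any, is just being careful with the constant in \eqref{eq:eigen1} for trees; everything else is a one-line application of interlacing or Gershgorin. I expect no real difficulty, only bookkeeping.
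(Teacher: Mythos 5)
Your argument for \eqref{eq:eigen1} is fine (the spanning-tree/Cauchy--Schwarz path argument gives $x^TL_Gx\ge \wmin\norm{x}_\infty^2/n\ge \wmin/n^2$ directly; the intermediate $1/n^3$ should already be $1/n^2$ since $\norm{x}_\infty^2\ge 1/n$), and the first inequality of \eqref{eq:eigen3} via interlacing is fine. The genuine gap is \eqref{eq:eigen2}: you use Cauchy interlacing in the wrong direction. For a principal submatrix $B=(L_G)_{S,S}$ with $|S|=m$, interlacing gives $\lambda_k(L_G)\le\lambda_k(B)\le\lambda_{k+n-m}(L_G)$, so for $k=1$ it yields only $\lambdamin(B)\ge\lambda_1(L_G)=0$ and the \emph{upper} bound $\lambdamin(B)\le\lambda_{n-m+1}(L_G)$ --- the opposite of what you wrote. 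Worse, the inequality you want, $\lambdamin\kh{(L_G)_{S,S}}\ge\lambda_2(L_G)$, is false in general: for the unweighted complete graph $K_n$ and $S=V\setminus\{v\}$ one has $(L_G)_{S,S}=nI-J$, whose smallest eigenvalue is $1$, while $\lambda_2(L_G)=n$. The reason is that a vector supported on $S$ need not be orthogonal to $\mathbf{1}$, so $\lambda_2$ gives no control. A correct fix in the spirit of your own proof of \eqref{eq:eigen1}: for a unit $x\in\mathbb{R}^S$ extended by zero to $\bar x$, $x^T(L_G)_{S,S}x=\bar x^TL_G\bar x$; since $S\subsetneq V$, $\bar x$ vanishes at some $t\in T=V\setminus S$ and has a coordinate of absolute value at least $1/\sqrt n$, so the same spanning-tree path argument between these two vertices gives $\bar x^TL_G\bar x\ge\wmin/n^2$. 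The paper instead proves both lower bounds by the trace trick: $\lambdamin\ge 1/\trace{(L_G)_{S,S}^{-1}}$, and each diagonal entry of $(L_G)_{S,S}^{-1}$ is an effective resistance from a vertex of $S$ to the grounded set $T$, bounded by a shortest-path resistance $\le n/\wmin$, giving $\trace{(L_G)_{S,S}^{-1}}\le n^2/\wmin$ (and similarly $\trace{L_G^+}\le n^2/\wmin$ for \eqref{eq:eigen1}).

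There is also an arithmetic problem in your proof of the second inequality of \eqref{eq:eigen3}. Gershgorin gives $\lambdamax(L_G)\le 2\max_u\sum_{(u,w)\in E}w_{uw}\le 2(n-1)\wmax$, and $2(n-1)\wmax\ge n\wmax$ for $n\ge 2$, so the chain ``$\le 2(n-1)\wmax\le n\wmax$'' does not hold; the alternative you sketch (at most $\binom{n}{2}$ edges, each contributing norm $2w_{uv}$) only yields $n(n-1)\wmax$. The clean argument, which is what the paper does, is the PSD comparison $L_G=\sum_e w_e b_eb_e^T\preceq\wmax L_{K_n}$ together with $\lambdamax(L_{K_n})=n$; equivalently, for unit $x$, $x^TL_Gx\le\wmax\sum_{u<v}(x_u-x_v)^2=\wmax\kh{n\norm{x}^2-(\mathbf{1}^Tx)^2}\le n\wmax$. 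With these two repairs (zero-extension path argument or trace bound for \eqref{eq:eigen2}, and the $\wmax L_{K_n}$ comparison for \eqref{eq:eigen3}), your proposal goes through.
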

\begin{proof}
    Defered to Appendix~\ref{sec:eigen}.
\end{proof}

The lemma below gives upper bounds on the largest
eigenvalues/singular values for some useful matrices:
\begin{restatable}[]{lemma}{lemsigmas} \label{lem:sigmas}
    The following upper bounds on the largest singular values/eigenvalues hold:
    \begin{align}
        &\sigmamax(W_G^{1/2} B_G) \leq (n\wmax)^{1/2}, \label{eq:singular1} \\
        &\lambdamax(SC(L_G,S)) \leq n \wmax, \label{eq:singular2} \\
        &\sigmamax((L_G)_{S,T}) = \sigmamax((L_G)_{T,S}) \leq n \wmax,
        \label{eq:singular3}
    \end{align}
    where $T := V(G) \setminus S$.
\end{restatable}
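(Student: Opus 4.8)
The plan is to reduce all three bounds to the eigenvalue estimate $\lambdamax(L_G)\le n\wmax$ recorded in \eqref{eq:eigen3} of Lemma~\ref{lem:lambdas}, combined with three elementary linear-algebra facts: (i) for any matrix $A$, $\sigmamax(A)^2=\lambdamax(A^TA)$; (ii) a symmetric positive semidefinite matrix has largest singular value equal to its largest eigenvalue; and (iii) restricting to a submatrix cannot increase the largest singular value (since a submatrix can be written as $RAC^T$ with $R,C$ row-selection matrices of operator norm at most $1$).

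For \eqref{eq:singular1} I would simply compute
\begin{align*}
\sigmamax(W_G^{1/2}B_G)^2=\lambdamax\kh{(W_G^{1/2}B_G)^T(W_G^{1/2}B_G)}=\lambdamax(B_G^TW_GB_G)=\lambdamax(L_G)\le n\wmax,
\end{align*}
using \eqref{eq:eigen3}, and then take square roots. For \eqref{eq:singular3}, note that symmetry of $L_G$ gives $(L_G)_{T,S}=\kh{(L_G)_{S,T}}^T$, and a matrix and its transpose have the same singular values, which yields the claimed equality; for the bound, $(L_G)_{S,T}$ is a submatrix of $L_G$, so by (iii) and then (ii), $\sigmamax((L_G)_{S,T})\le\sigmamax(L_G)=\lambdamax(L_G)\le n\wmax$, again by \eqref{eq:eigen3}.

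The only step that needs a sentence of justification is \eqref{eq:singular2}. Here I would argue that $(L_G)_{T,T}$ is positive definite (it is the principal submatrix of the Laplacian of a connected graph obtained by deleting the nonempty set $S$, hence invertible with positive eigenvalues), so $(L_G)_{T,T}^{-1}\succ 0$ and therefore the correction term $(L_G)_{S,T}(L_G)_{T,T}^{-1}(L_G)_{T,S}$ is positive semidefinite. Consequently $SC(L_G,S)\preceq (L_G)_{S,S}$, whence $\lambdamax(SC(L_G,S))\le\lambdamax\kh{(L_G)_{S,S}}\le\lambdamax(L_G)\le n\wmax$, where the middle inequality is eigenvalue interlacing for principal submatrices — which is itself already part of \eqref{eq:eigen3}.

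I do not expect any genuine obstacle: every ingredient is either classical or has been set up already in Lemma~\ref{lem:lambdas}. If anything requires care it is only the positive-semidefiniteness of the Schur-complement correction term (equivalently, invertibility of $(L_G)_{T,T}$) and the interlacing step for principal submatrices, both of which are standard.
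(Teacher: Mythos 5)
Your proposal is correct, and for \eqref{eq:singular1} and \eqref{eq:singular2} it coincides with the paper's proof: the paper also gets \eqref{eq:singular1} from $\sigmamax(W_G^{1/2}B_G)\le\kh{\lambdamax(L_G)}^{1/2}$ and \eqref{eq:singular2} from $\lambdamax(SC(L_G,S))\le\lambdamax\kh{(L_G)_{S,S}}\le n\wmax$ via \eqref{eq:eigen3}; you merely make explicit the positive semidefiniteness of the correction term $(L_G)_{S,T}(L_G)_{T,T}^{-1}(L_G)_{T,S}$, which the paper leaves implicit (and which is fine, since $(L_G)_{T,T}$ is positive definite by \eqref{eq:eigen2} under the standing connectivity assumption).

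Where you genuinely diverge is \eqref{eq:singular3}. You bound $\sigmamax((L_G)_{S,T})\le\sigmamax(L_G)=\lambdamax(L_G)\le n\wmax$ by norm monotonicity under row/column selection, plus the equality of singular values of a matrix and its transpose. The paper instead argues $\sigmamax((L_G)_{S,T})^2=\lambdamax\kh{(L_G)_{S,T}(L_G)_{T,S}}\le n\wmax\cdot\lambdamax\kh{(L_G)_{S,T}(L_G)_{T,T}^{-1}(L_G)_{T,S}}\le n\wmax\cdot\lambdamax\kh{(L_G)_{S,S}}\le (n\wmax)^2$, using $(L_G)_{T,T}\preceq n\wmax I$ and the positive semidefiniteness of $SC(L_G,S)$. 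Both routes are valid and give the same bound; yours is more elementary (it needs only that deleting rows and columns cannot increase the spectral norm and that $L_G$ is symmetric PSD), while the paper's chain reuses the Schur-complement machinery it has already set up and would still work in settings where one only controls $(L_G)_{T,T}^{-1}$ and $(L_G)_{S,S}$ rather than the full $L_G$. No gaps to report: the submatrix norm fact, the transpose fact, and the interlacing step are all standard, and the invertibility of $(L_G)_{T,T}$ is already guaranteed by Lemma~\ref{lem:lambdas}.
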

\begin{proof}
    Defered to Appendix~\ref{sec:2norm}.
\end{proof}

We will need to invoke Fast Laplacian Solvers
to apply the inverse of a Laplacian of an SDDM matrix.
The following lemma characterizes the performance
of Fast Laplacian Solvers:
\begin{lemma}[Fast Laplacian Solver~\cite{ST14,CKMPPRX14}]\label{lem:solve}
	There is an algorithm
    $\xtil = \LaplSolve(M, b, \ep)$
	which takes a matrix
	$M_{n\times n}$
    either a Laplacian or an SDDM matrix
	with $m$ nonzero entries,
	a vector $b \in \mathbb{R}^n$, and
	an error parameter $\ep > 0$,
	and returns a vector $\xtil \in \mathbb{R}^n$ such that
	\[
        \norm{x - \xtil }_{M} \leq \ep \norm{x}_M
    \]
	holds with high probability,
	where $\norm{x}_{M} := \sqrt{x^T M x}$,
    $x := M^{-1} b$,
	and $M^{-1}$ denotes the pseudoinverse of $M$
	when $M$ is a Laplacian.
	The algorithm runs in time
	$O(m\text{polylog}(n)\log(1/\ep))$.
\end{lemma}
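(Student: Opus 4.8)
\emph{The plan} is to prove this via the recursive preconditioning framework of \cite{ST14}, in the streamlined form of \cite{CKMPPRX14}. First reduce the SDDM case to the Laplacian case: an SDDM matrix $M\in\mathbb{R}^{n\times n}$ is the principal submatrix on $\{1,\dots,n\}$ of the Laplacian $\wh L$ of the graph on $n+1$ vertices in which vertex $i$ is additionally joined to a new ``ground'' vertex by an edge of weight $M_{ii}-\sum_{j\neq i}\abs{M_{ij}}\ge 0$; a solve against $M$ reduces to a solve against $\wh L$ (fixing the ground coordinate), and the required $\norm{\cdot}_M$ error bound follows from the corresponding $\norm{\cdot}_{\wh L}$ bound. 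So assume $M=L_G$ for a connected weighted graph with $m$ edges. Next compute, in $O(m\log n\log\log n)$ time (Abraham--Neiman), a spanning tree $T$ of $G$ of total stretch $\mathrm{st}_T(G):=\sum_{e}\frac{b_e^T L_T^+ b_e}{r_e}=\trace{L_T^+ L_G}=O(m\log n\log\log n)$. Modulo the all-ones kernel one has $L_T\preceq L_G\preceq \mathrm{st}_T(G)\cdot L_T$, so $T$ alone is a preconditioner with condition number $\tilde O(m)$, which is too weak.

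The key step is to boost $T$ to a better preconditioner. For a parameter $\kappa>1$ build an \emph{ultra-sparsifier} $H$: rescale $T$ by $\Theta(\kappa)$ and add $q=O\kh{(\mathrm{st}_T(G)/\kappa)\,\text{polylog}(n)}$ off-tree edges of $G$, each included with probability proportional to its stretch $b_e^T L_T^+ b_e/r_e$ and reweighted by the reciprocal of that probability. A matrix-Chernoff bound (Rudelson/Tropp, as in Spielman--Srivastava) applied to $(L_T^+)^{1/2} L_G (L_T^+)^{1/2}-\Pi$, where $\Pi$ projects onto $\mathrm{span}(\textbf{1})^\perp$ and whose off-tree part is a sum of rank-one terms whose leverage scores are exactly the stretches and whose total trace is $\mathrm{st}_T(G)-(n-1)$, shows $L_H\preceq L_G\preceq \kappa\, L_H$ with high probability. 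Since $H$ is a tree plus $q$ extra edges, exhaustively eliminating vertices of degree $\le 2$ by exact partial Cholesky (which preserves being a Laplacian) collapses $H$ to a Laplacian $G'$ on $O(q)$ vertices and edges. Iterating (low-stretch tree of $G'$, ultra-sparsify, eliminate) yields a chain $G=G_0,G_1,\dots,G_\ell$ in which, after the eliminations, $G_{i+1}$ is spectrally within a factor $\kappa$ of the partially eliminated $G_i$, and $\sizeof{E(G_{i+1})}=O(\sizeof{E(G_i)}\cdot\text{polylog}(n)/\kappa)$; choosing $\kappa$ a fixed power of $\log n$ exceeding this polylog makes the sizes shrink geometrically, so $\ell=O(\log m)$ and $\sum_i \sizeof{E(G_i)}=O(m)$.

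To solve $L_G x=b$, run $O(\sqrt\kappa)$ steps of preconditioned Chebyshev (or preconditioned Richardson) with preconditioner $L_{H_0}$, which shrinks the $L_G$-relative error by a constant factor per $O(\sqrt\kappa)$ steps since $\mathrm{cond}(L_G,L_{H_0})\le\kappa$; each application of $L_{H_0}^+$ is turned by the degree-$\le2$ elimination into a solve against $L_{G_1}$, handled recursively to constant accuracy. Because the inner solves are only approximate, one must use the variant of the iteration whose convergence is stable under an $O(1)$ multiplicative error in the preconditioner application (the ``exact versus inexact preconditioner'' analysis of \cite{ST14,CKMPPRX14}). Granting that, the work obeys $W_i=O(\sqrt\kappa)\kh{O(\sizeof{E(G_i)})+W_{i+1}}$ with $\sizeof{E(G_{i+1})}=O(\sizeof{E(G_i)}\,\text{polylog}(n)/\kappa)$, so $W_0=O(\sqrt\kappa)\sum_i(\text{polylog}(n)/\sqrt\kappa)^i\,\sizeof{E(G_0)}=O(m\,\text{polylog}(n))$ once $\text{polylog}(n)/\sqrt\kappa<1$. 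Repeating this constant-accuracy solver $O(\log(1/\ep))$ times (iterative refinement) drives the relative error to $\ep$, for total running time $O(m\,\text{polylog}(n)\log(1/\ep))$; building the chain is $\tilde O(m)$ preprocessing and amortizes.

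The two steps that carry real content are (i) the ultra-sparsifier guarantee — that $\tilde O(m/\kappa)$ stretch-weighted off-tree samples together with a $\Theta(\kappa)$-scaled tree give condition number $O(\kappa)$ with high probability — which rests on the matrix concentration bound and on the identification of stretch with leverage score in the tree-preconditioned system; and (ii) showing that nesting the inexact preconditioned iterations keeps the overall error controlled while the per-level iteration count stays $O(\sqrt\kappa)$, since a careless composition of approximate solves can multiply the effective condition number down the chain. The latter is exactly the ``recursive preconditioning'' bookkeeping of \cite{ST14,CKMPPRX14} and is the main obstacle in turning the outline into a complete proof. A cleaner alternative that avoids the chain entirely is the approximate Cholesky factorization of Kyng--Sachdeva: randomly sparsify the fill-in clique created when eliminating each vertex so that the factor $\mc L$ has $\tilde O(m)$ nonzeros and $\mc L\mc L^T$ is within a constant factor of $L_G$ (via a matrix martingale / matrix Freedman bound), then use $\mc L\mc L^T$ as an $O(1)$-condition-number preconditioner in $O(\log(1/\ep))$ iterations of preconditioned conjugate gradient; this gives the same bound with the concentration inequality as the sole technical core.
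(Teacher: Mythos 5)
This lemma is not proved in the paper at all: it is imported verbatim as a black-box result with the citation to \cite{ST14,CKMPPRX14} (and is only ever \emph{used}, inside $\DiffApx$ and the error-propagation lemmas), so there is no in-paper argument to compare yours against. What you have written is a reasonable outline of how the cited works actually establish the statement: the SDDM-to-Laplacian reduction via a ground vertex, low-stretch spanning trees, stretch-weighted sampling of off-tree edges against a $\Theta(\kappa)$-scaled tree to get an ultrasparsifier with condition number $O(\kappa)$, degree-$\le 2$ elimination to shrink the graph, the $O(\sqrt{\kappa})$-iteration preconditioned Chebyshev recurrence with $\kappa$ a suitable power of $\log n$, and iterative refinement to get the $\log(1/\ep)$ dependence; the Kyng--Sachdeva approximate Cholesky route you mention at the end is also a legitimate (and cleaner) alternative and is the machinery the paper cites elsewhere as \cite{KS16}.

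That said, as a \emph{proof} your proposal is still a sketch, and you say so yourself: the two load-bearing steps — the matrix-Chernoff argument that the stretch-sampled ultrasparsifier satisfies $L_H\preceq L_G\preceq \kappa L_H$ with the right number $\tilde O(\mathrm{st}_T(G)/\kappa)$ of off-tree edges, and above all the ``inexact preconditioner'' bookkeeping showing that nesting constant-accuracy recursive solves does not degrade the per-level iteration count below the claimed $O(\sqrt{\kappa})$ — are exactly where the real content of \cite{ST14,CKMPPRX14} lives, and you defer both. There is also a small imprecision in the SDDM reduction: after grounding, one must check that the $\norm{\cdot}_{\wh L}$ guarantee on the extended system transfers to the stated $\norm{\cdot}_M$ guarantee on the original coordinates (it does, because $M$ is the Schur-complement-free principal block and the ground coordinate of the solution is pinned to $0$, but this deserves a line). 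Since the paper treats the lemma as prior work, none of this affects the paper; just be aware that your text is a correct roadmap to the literature rather than a self-contained proof.
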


The following lemmas show how to bound the errors of Fast Laplacian Solvers
in terms of $\ell_2$ norms, which follows directly from the bounds on
Laplacian eigenvalues in Lemma~\ref{lem:lambdas}:
\begin{restatable}[]{lemma}{lemlnorm} \label{lem:lnorm}
    For any Laplacian $L_G$, vectors $x,\xtil\in \mathbb{R}^n$
    both orthogal to $1$, and real number $\ep > 0$ satifiying
    \begin{align*}
        \norm{x - \xtil}_{L_G} \leq \ep \norm{x}_{L_G},
    \end{align*}
    the following statement holds:
    \begin{align*}
        \norm{x - \xtil} \leq \ep n^{1.5} \kh{\frac{\wmax}{\wmin}}^{1/2} \norm{x}.
    \end{align*}
\end{restatable}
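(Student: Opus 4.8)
The plan is to pass between the $L_G$-energy seminorm and the Euclidean norm using the eigenvalue bounds already established in Lemma~\ref{lem:lambdas}, exploiting that $x$ and $\xtil$, and hence $x - \xtil$, all lie in the subspace orthogonal to $\mathbf 1$, on which $L_G$ has every eigenvalue bounded below by $\lambda_2(L_G) > 0$.

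First I would bound the Euclidean norm of the error by its energy. Since $x - \xtil \perp \mathbf 1$, expanding $x - \xtil$ in the orthonormal eigenbasis $u_2, \ldots, u_n$ of $L_G$ gives
\[
\norm{x - \xtil}_{L_G}^2 = (x - \xtil)^T L_G (x - \xtil) \ge \lambda_2(L_G)\norm{x - \xtil}^2 \ge \frac{\wmin}{n^2}\norm{x - \xtil}^2,
\]
where the last step is \eqref{eq:eigen1}; equivalently $\norm{x - \xtil} \le (n / \wmin^{1/2})\norm{x - \xtil}_{L_G}$. Next I would bound the energy of $x$ by its Euclidean norm: by \eqref{eq:eigen3} we have $\lambdamax(L_G) \le n\wmax$, so $\norm{x}_{L_G}^2 = x^T L_G x \le n\wmax\norm{x}^2$, i.e. $\norm{x}_{L_G} \le (n\wmax)^{1/2}\norm{x}$.

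Chaining these two estimates with the hypothesis $\norm{x - \xtil}_{L_G} \le \ep\norm{x}_{L_G}$ then yields
\[
\norm{x - \xtil} \le \frac{n}{\wmin^{1/2}}\norm{x - \xtil}_{L_G} \le \frac{\ep n}{\wmin^{1/2}}\norm{x}_{L_G} \le \frac{\ep n}{\wmin^{1/2}}(n\wmax)^{1/2}\norm{x} = \ep\, n^{1.5}\kh{\frac{\wmax}{\wmin}}^{1/2}\norm{x},
\]
which is the claim. I do not expect any real obstacle here: the argument is two black-box applications of Lemma~\ref{lem:lambdas} together with one substitution. The only point that needs care is that the inequality $\norm{x - \xtil}_{L_G}^2 \ge \lambda_2(L_G)\norm{x - \xtil}^2$ uses $x - \xtil \perp \mathbf 1$ in an essential way, since $\mathbf 1 \in \ker L_G$; this is precisely why the hypothesis that both vectors are orthogonal to $\mathbf 1$ is imposed.
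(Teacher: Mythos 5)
Your argument is correct and is essentially identical to the paper's own proof: both pass from $\norm{x-\xtil}$ to $\norm{x-\xtil}_{L_G}$ via the lower bound $\lambda_2(L_G)\ge \wmin/n^2$ from (\ref{eq:eigen1}) (valid because $x-\xtil\perp \mathbf 1$), apply the hypothesis, and then use $\lambdamax(L_G)\le n\wmax$ from (\ref{eq:eigen3}) to return to $\norm{x}$. Your explicit remark about why orthogonality to $\mathbf 1$ is needed is a point the paper leaves implicit, but the route is the same.
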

\begin{proof}
    Defered to Appendix~\ref{sec:l2norm}.
\end{proof}

\begin{restatable}[]{lemma}{lemmnorm} \label{lem:mnorm}
    For any Laplacian $L_G$, $S\subset V$, vectors $x,\xtil\in\mathbb{R}^{\sizeof{S}}$,
    and real number $\ep > 0$ satifiying
    \begin{align*}
        \norm{x - \xtil}_{M} \leq \ep \norm{x}_{M},
    \end{align*}
    where $M := (L_G)_{S,S}$,
    the following statement holds:
    \begin{align*}
        \norm{x - \xtil} \leq \ep n^{1.5} \kh{\frac{\wmax}{\wmin}}^{1/2} \norm{x}.
    \end{align*}
\end{restatable}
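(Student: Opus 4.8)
The plan is to mirror the proof of Lemma~\ref{lem:lnorm}, exploiting the fact that here $M = (L_G)_{S,S}$ is positive \emph{definite} (by \eqref{eq:eigen2}) rather than merely positive semidefinite, so no orthogonality hypothesis on $x,\xtil$ is needed. First I would pass from the $M$-norm to the Euclidean norm using the extreme eigenvalues of $M$: on one side $\norm{x-\xtil}_M^2 = (x-\xtil)^T M (x-\xtil) \ge \lambdamin(M)\,\norm{x-\xtil}^2$, and on the other side $\norm{x}_M^2 = x^T M x \le \lambdamax(M)\,\norm{x}^2$. Combining these two inequalities with the hypothesis $\norm{x-\xtil}_M \le \ep\,\norm{x}_M$ gives
\[
\lambdamin(M)\,\norm{x-\xtil}^2 \;\le\; \norm{x-\xtil}_M^2 \;\le\; \ep^2 \norm{x}_M^2 \;\le\; \ep^2\,\lambdamax(M)\,\norm{x}^2 ,
\]
i.e. $\norm{x-\xtil} \le \ep\,\sqrt{\lambdamax(M)/\lambdamin(M)}\,\norm{x}$.

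Next I would substitute the eigenvalue bounds from Lemma~\ref{lem:lambdas}. Inequality \eqref{eq:eigen2} gives $\lambdamin(M) \ge \wmin/n^2 > 0$, which in particular makes $M$ invertible, so the division above is legitimate (and the degenerate case $x = 0$, which forces $\xtil = x$, is trivial). Inequality \eqref{eq:eigen3} gives $\lambdamax(M) \le n\wmax$. Therefore
\[
\sqrt{\frac{\lambdamax(M)}{\lambdamin(M)}} \;\le\; \sqrt{\frac{n\wmax}{\wmin/n^2}} \;=\; n^{1.5}\kh{\frac{\wmax}{\wmin}}^{1/2},
\]
and plugging this back in yields the claimed bound $\norm{x-\xtil} \le \ep\, n^{1.5}(\wmax/\wmin)^{1/2}\,\norm{x}$.

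There is essentially no obstacle here; the proof is a two-line eigenvalue sandwich. The only point worth flagging is that the lower bound step uses $\lambdamin(M)>0$, i.e. that $M$ has no zero eigenvalue — this is exactly what \eqref{eq:eigen2} supplies, and is the reason the statement (unlike Lemma~\ref{lem:lnorm}, where a general Laplacian has $\mathbf{1}$ in its kernel) needs no restriction of $x,\xtil$ to a subspace.
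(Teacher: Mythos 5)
Your proof is correct and is essentially the paper's own argument: the paper likewise bounds $\norm{x-\xtil}\le n\,\wmin^{-1/2}\norm{x-\xtil}_{M}$ via \eqref{eq:eigen2}, applies the hypothesis, and then uses \eqref{eq:eigen3} to bound $\norm{x}_{M}\le (n\wmax)^{1/2}\norm{x}$, which is exactly your eigenvalue sandwich written in two steps. Your remark about $\lambdamin(M)>0$ removing the need for any orthogonality assumption is a fair observation and consistent with the paper.
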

\begin{proof}
    Defered to Appendix~\ref{sec:l2norm}.
\end{proof}

When computing the changes in effective resistances
due to the identification of a given vertex set
(i.e. merging vertices in that set and deleting any self loops formed),
we will need to use Johnson-Lindenstrauss
lemma to reduce dimensions:
\begin{lemma}[Johnson-Lindenstrauss Lemma~\cite{JL84,A01}]
	\label{lem:jl}
	Let $v_1,v_2,\ldots,v_n \in \mathbb{R}^d$
	be fixed vectors
	and $0 < \ep < 1$ be a real number.
	Let $k$ be a positive integer such that
	$
		k \geq 24 \log n / \ep^2
	$
    and $Q_{k \times d}$ be a random $\pm 1$ matrix.
	With high probability, the following statement holds
	for any $1\leq i, j\leq n$:
    \begin{align*}
        (1 - \ep) \norm{v_i - v_j}^2
        \leq \norm{Q v_i - Q v_j}^2 \leq
        (1 + \ep) \norm{v_i - v_j}^2.
    \end{align*}
\end{lemma}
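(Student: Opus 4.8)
The plan is to establish, for each fixed pair $(i,j)$, a concentration bound of the form $\Pr[\,\text{pair }(i,j)\text{ distorted}\,] \le 2\exp(-k\ep^2/8)$, and then finish with a union bound over all $\binom{n}{2}$ pairs; the hypothesis $k \ge 24\log n/\ep^2$ makes $2\exp(-k\ep^2/8) \le 2n^{-3}$, so the total failure probability is at most $2/n = o(1)$, which is exactly "with high probability". So fix $i \ne j$ (the case $v_i = v_j$ is trivial, both sides being $0$), set $x := (v_i - v_j)/\norm{v_i - v_j}$, a unit vector, and note $\norm{Qv_i - Qv_j}^2 = \norm{v_i - v_j}^2\,\norm{Qx}^2$; hence it suffices to show $\Pr[\,\abs{\norm{Qx}^2 - 1} > \ep\,] \le 2\exp(-k\ep^2/8)$. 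Throughout I use the normalization under which $Q = \frac{1}{\sqrt{k}}R$ with $R$ having i.i.d. Rademacher entries, i.e. the scaling that makes $\mathbb{E}\norm{Qx}^2 = \norm{x}^2$ (rescale $Q$ if needed). Observe that the ambient dimension $d$ never enters the estimate — only $\sum_a x_a^2 = 1$ does — which is the dimension-free feature we are after.

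Next I would reduce to a sum of i.i.d. terms. Writing $q_1,\dots,q_k$ for the rows of $Q$ and $Z_\ell := q_\ell^T x$, we have $Y := \norm{Qx}^2 = \sum_{\ell=1}^{k} Z_\ell^2$ with the $Z_\ell$ i.i.d.; since $\sqrt{k}\,Z_\ell = \sum_a s_{\ell a} x_a$ with independent Rademacher $s_{\ell a}$ and $\sum_a x_a^2 = 1$, we get $\mathbb{E}[Z_\ell^2] = 1/k$, so $\mathbb{E}[Y] = 1$. Using $\cosh t \le e^{t^2/2}$, $\mathbb{E}[e^{\lambda\sqrt{k}Z_\ell}] = \prod_a \cosh(\lambda x_a) \le e^{\lambda^2/2}$, so each $\sqrt{k}\,Z_\ell$ is $1$-sub-Gaussian.

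The crux — and the step I expect to require the most care — is controlling $\mathbb{E}[e^{tZ_\ell^2}]$. The clean route is the bound $\mathbb{E}[e^{sW^2}] \le (1-2s)^{-1/2}$ for every $s < 1/2$ and every $1$-sub-Gaussian $W$, proved by the Gaussian symmetrization identity $\mathbb{E}_W[e^{sW^2}] = \mathbb{E}_W\mathbb{E}_g[e^{\sqrt{2s}\,gW}] = \mathbb{E}_g\mathbb{E}_W[e^{\sqrt{2s}\,gW}] \le \mathbb{E}_g[e^{sg^2}] = (1-2s)^{-1/2}$, with $g \sim N(0,1)$ independent. Applying this to $W = \sqrt{k}\,Z_\ell$ with $s = t/k$ gives $\mathbb{E}[e^{tZ_\ell^2}] \le (1-2t/k)^{-1/2}$ for $t < k/2$, hence by independence $\mathbb{E}[e^{tY}] \le (1-2t/k)^{-k/2}$ for $t<k/2$ and similarly $\mathbb{E}[e^{-tY}] \le (1+2t/k)^{-k/2}$ for all $t>0$.

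Finally I would run two one-sided Chernoff bounds. For the upper tail, $\Pr[Y \ge 1+\ep] \le e^{-t(1+\ep)}(1-2t/k)^{-k/2}$; choosing $2t/k = \ep/(1+\ep)$ collapses the right side to $\exp(-\tfrac{k}{2}(\ep - \ln(1+\ep)))$, and the elementary inequality $\ep - \ln(1+\ep) \ge \ep^2/4$ on $(0,1]$ yields $\Pr[Y \ge 1+\ep] \le \exp(-k\ep^2/8)$. For the lower tail, $\Pr[Y \le 1-\ep] \le e^{t(1-\ep)}(1+2t/k)^{-k/2}$; choosing $2t/k = \ep/(1-\ep)$ gives $\exp(\tfrac{k}{2}(\ep+\ln(1-\ep))) \le \exp(-k\ep^2/4) \le \exp(-k\ep^2/8)$, using $\ep + \ln(1-\ep) \le -\ep^2/2$ on $(0,1)$. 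Summing, $\Pr[\abs{Y-1} > \ep] \le 2\exp(-k\ep^2/8)$, and the union bound over the fewer than $n^2$ pairs $(i,j)$ completes the argument. The only genuinely delicate ingredients are the squared-sub-Gaussian MGF bound and the two scalar inequalities about $\ln(1\pm\ep)$ on $(0,1)$ — these are precisely what make the stated constant $24$ (equivalently the $1/8$ in the exponent) suffice; everything else is bookkeeping with Chernoff's inequality.
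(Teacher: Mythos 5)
The paper offers no proof of this lemma (it is quoted from \cite{JL84,A01}), so your argument stands on its own; its upper-tail half is correct and standard, but the lower tail contains a genuine gap. The step ``and similarly $\mathbb{E}[e^{-tY}]\le (1+2t/k)^{-k/2}$ for all $t>0$'' is not similar at all: the Gaussian linearization identity $e^{sW^2}=\mathbb{E}_g[e^{\sqrt{2s}\,gW}]$ requires a \emph{nonnegative} coefficient on $W^2$ (for $e^{-sW^2}$ one would need an imaginary argument, and the pointwise bound $\cos\theta\le e^{-\theta^2/2}$ it would require is false), and more fundamentally sub-Gaussianity is an upper-tail property that cannot upper-bound $\mathbb{E}[e^{-sW^2}]$: that quantity is large exactly when $W$ puts mass near $0$, which sub-Gaussianity does not preclude. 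Concretely, take $x=(1/\sqrt{2},1/\sqrt{2},0,\dots,0)$, so $W=\sqrt{k}Z_\ell=(s_1+s_2)/\sqrt{2}$ is $1$-sub-Gaussian with $\mathbb{E}[W^2]=1$, yet $\mathbb{E}[e^{-sW^2}]=\tfrac12(1+e^{-2s})\ge \tfrac12$ for every $s$, which exceeds $(1+2s)^{-1/2}$ for all $s\ge 3/2$; with your choice $2t/k=\ep/(1-\ep)$ this range is reached already at $\ep=3/4$. So the claimed intermediate inequality is false, not merely unproven (the final lower-tail bound $\exp(-k\ep^2/4)$ happens to be true for $\pm1$ projections, but it does not follow from your argument).

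The gap is repairable without disturbing your constants. For the lower tail use $e^{-u}\le 1-u+u^2/2$ for $u\ge 0$ together with the exact fourth moment of a Rademacher sum, $\mathbb{E}[W^4]=3\bigl(\sum_a x_a^2\bigr)^2-2\sum_a x_a^4\le 3$, to get $\mathbb{E}[e^{-sW^2}]\le 1-s+\tfrac32 s^2\le \exp(-s+\tfrac32 s^2)$; then $\Pr[Y\le 1-\ep]\le \exp(-t\ep+\tfrac32 t^2/k)$, and $t=k\ep/3$ gives $\exp(-k\ep^2/6)\le \exp(-k\ep^2/8)$, so your union-bound arithmetic and the hypothesis $k\ge 24\log n/\ep^2$ are untouched. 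This is essentially how the $\pm1$ case is handled in the cited source \cite{A01} (moment comparison with the Gaussian works cleanly only for the positive-exponent tail). Two smaller remarks: your silent renormalization $Q\gets R/\sqrt{k}$ is indeed needed for the statement as printed, since a literal $\pm1$ matrix has $\mathbb{E}\norm{Qx}^2=k\norm{x}^2$; and with the constant $24$ the union bound yields failure probability $O(1/n)$, which matches the weak reading of ``with high probability'' used here. The remainder of your write-up --- the reduction to a unit vector, the upper-tail Chernoff computation, and the scalar inequalities $\ep-\ln(1+\ep)\ge \ep^2/4$ on $(0,1]$ and $\ep+\ln(1-\ep)\le -\ep^2/2$ on $(0,1)$ --- checks out.
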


\section{Existence of sparsifiers}

In this section, we reduce the construction of spectral subspace sparsifiers to an oracle that outputs edges that have low energy with respect to every demand vector in the chosen subspace $\mc S$. We prove it by splitting and conditioning on edges being present in a uniformly random spanning tree one-by-one until $\tilde{O}(d/\ep^2)$ edges are left. This construction is a high-dimensional generalization of the construction given in Section 10.1 of \cite{Sc17}. We use the following matrix concentration inequality:

\begin{theorem}[Matrix Freedman Inequality applied to symmetric matrices \cite{T11}]\label{thm:freedman}
Consider a matrix martingale $(Y_k)_{k\ge 0}$ whose values are symmetric matrices with dimension $s$, and let $(X_k)_{k\ge 1}$ be the difference sequence $X_k := Y_k - Y_{k-1}$. Assume that the difference sequence is uniformly bounded in the sense that

$$\lambda_{\max}(X_k)\le R$$

almost surely for $k \ge 1$. Define the predictable quadratic variation process of the martingale:

$$W_k := \sum_{j=1}^k \textbf{E}[X_j^2|Y_{j-1}]$$

Then, for all $t\ge 0$ and $\sigma^2 > 0$,

$$\Pr[\exists k\ge 0: \lambda_{\max}(Y_k - Y_0)\ge t \text{ and } \lambda_{\max}(W_k)\le \sigma^2] \le s \exp\left(\frac{-t^2/2}{\sigma^2 + Rt/3}\right)$$
\end{theorem}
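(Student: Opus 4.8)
The plan is to run the standard matrix Laplace transform argument: build a scalar supermartingale from $\trace{\exp(\theta Y_k - g(\theta) W_k)}$ using Lieb's concavity theorem, and then strip off the ``$\exists k$'' quantifier with a stopping time. Throughout one may assume $Y_0 = 0$, fix a parameter $\theta > 0$ to be optimized at the end, and write $\mc{F}_k$ for the natural filtration generated by $Y_0,\ldots,Y_k$ (so the $X_k$'s are martingale differences, $\textbf{E}[X_k\mid\mc{F}_{k-1}] = 0$, and each $\textbf{E}[X_j^2\mid\mc{F}_{j-1}]$, hence $W_k$, is $\mc{F}_{k-1}$-measurable).

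First I would establish a conditional matrix moment generating function bound. Because the scalar function $x\mapsto(e^{\theta x}-\theta x - 1)/x^2$ is increasing, for every $x\le R$ one has $e^{\theta x}\le 1 + \theta x + g(\theta)x^2$ with $g(\theta) := (e^{\theta R}-\theta R - 1)/R^2$; applying this to the eigenvalues of $X_k$ (which satisfy $\lambdamax(X_k)\le R$ a.s.) and taking conditional expectations, the martingale property kills the linear term and yields $\textbf{E}[e^{\theta X_k}\mid\mc{F}_{k-1}]\preceq I + g(\theta)\,\textbf{E}[X_k^2\mid\mc{F}_{k-1}]$, hence $\log\textbf{E}[e^{\theta X_k}\mid\mc{F}_{k-1}]\preceq g(\theta)\,\textbf{E}[X_k^2\mid\mc{F}_{k-1}]$ using operator monotonicity of $\log$ and $\log(I+A)\preceq A$.

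Next I would set $S_k := \trace{\exp\kh{\theta Y_k - g(\theta)W_k}}$, so $S_0 = s$, and show $(S_k)$ is a supermartingale. Conditioning on $\mc{F}_{k-1}$ and freezing the $\mc{F}_{k-1}$-measurable matrix $H := \theta Y_{k-1} - g(\theta)W_k$, Lieb's theorem (concavity of $A\mapsto\trace{\exp(H+\log A)}$ on positive definite $A$) together with Jensen's inequality gives $\textbf{E}[S_k\mid\mc{F}_{k-1}] = \textbf{E}\big[\trace{\exp(H + \log e^{\theta X_k})}\mid\mc{F}_{k-1}\big]\le\trace{\exp\kh{H + \log\textbf{E}[e^{\theta X_k}\mid\mc{F}_{k-1}]}}$; substituting the mgf bound from the previous step and using that $A\mapsto\trace{\exp(H+A)}$ is monotone in the Loewner order bounds the right-hand side by $\trace{\exp(\theta Y_{k-1}-g(\theta)W_{k-1})} = S_{k-1}$.

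Finally I would remove the quantifier. Let $\tau$ be the first index $k$ with $\lambdamax(Y_k)\ge t$ and $\tau = \infty$ otherwise. On the event in the statement, monotonicity of $W$ in $k$ forces $\tau<\infty$ and $W_\tau\preceq\sigma^2 I$, so $\theta Y_\tau - g(\theta)W_\tau\succeq\theta Y_\tau - g(\theta)\sigma^2 I$ and therefore $S_\tau\ge e^{\lambdamax(\theta Y_\tau - g(\theta)W_\tau)}\ge e^{\theta t - g(\theta)\sigma^2}$ (trace dominates the top eigenvalue, and $\lambdamax\exp(\cdot) = \exp\lambdamax(\cdot)$). Since stopping preserves the supermartingale property and $S\ge 0$, $\textbf{E}[S_{\tau\wedge k}]\le S_0 = s$ for every $k$, and letting $k\to\infty$ (monotone convergence) gives $\textbf{E}[S_\tau\mathbbm{1}[\tau<\infty]]\le s$; combining with the pointwise lower bound yields $\Pr[\text{event}]\le s\exp(g(\theta)\sigma^2 - \theta t)$. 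Optimizing via $\theta = \tfrac1R\log(1 + Rt/\sigma^2)$ and the elementary inequality $(1+u)\log(1+u) - u\ge\tfrac{u^2/2}{1+u/3}$ converts this into the claimed Bernstein-type bound. The only genuinely deep ingredient is Lieb's concavity theorem; the remaining care is bookkeeping which quantities are $\mc{F}_{k-1}$-measurable and the stopping-time step, which is needed precisely because $S_k$ is not monotone in $k$.
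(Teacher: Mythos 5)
The paper does not prove Theorem \ref{thm:freedman} at all --- it is quoted directly from Tropp \cite{T11} --- and your sketch is a correct reconstruction of exactly that source's argument: the matrix mgf bound $e^{\theta X}\preceq I+\theta X+g(\theta)X^2$ for $\lambdamax(X)\le R$, the Lieb--Jensen step making $\trace{\exp(\theta Y_k-g(\theta)W_k)}$ a supermartingale, the stopping-time/Markov step, and the optimization $\theta=\frac{1}{R}\log(1+Rt/\sigma^2)$ with $(1+u)\log(1+u)-u\ge\frac{u^2/2}{1+u/3}$ recovering the stated Bernstein form. So the proposal is correct and takes essentially the same route as the cited proof; no gaps beyond the (harmless) fact that the limit $k\to\infty$ is most cleanly justified by applying monotone convergence or Fatou to $S_{\tau\wedge k}\mathbbm{1}[\tau\le k]$, as you effectively do.
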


Now, we give an algorithm $\SSS(G,\mc S,\ep)$ that proves Theorem \ref{thm:main-existence}. The algorithm simply splits and conditions on the edge that minimizes the martingale difference repeatedly until there are too few edges left. For efficiency purposes, $\SSS(G,\mc S,\ep)$ receives martingale-difference-minimizing edges from a steady oracle $\Oracle$ with the additional guarantee that differences remain small after many edge updates. This oracle is similar to the stable oracles given in Section 10 of \cite{Sc17}. 

\begin{definition}[Steady oracles]
A $(\rho,K(z))$-\emph{steady oracle} is a function $Z\gets \Oracle(I,\mc S)$ that takes in a graph $I$ and a subspace $\mc S\subseteq \mathbb{R}^{V(I)}$ that satisfy the following condition:

\begin{itemize}
\item (Leverage scores) For all $e\in E(I)$, $\texttt{lev}_I(e)\in [3/16,13/16]$.
\end{itemize}

and outputs a set $Z\subseteq E(I)$. Let $I_0 = I$ and for each $i > 0$, obtain $I_i$ by picking a uniformly random edge $f_{i-1}\in Z$, arbitrarily letting $I_i\gets I_{i-1}\setminus f_{i-1}$ or $I_i\gets I_{i-1}/f_{i-1}$, and letting $Z\leftarrow Z\setminus \{f_{i-1}\}$. $\Oracle$ satisfies the following guarantees with high probability for all $i < K(|E(I)|)$:

\begin{itemize}
\item (Size of $Z$) $|Z|\ge |E(I)|/\rho$
\item (Leverage score stability) $\texttt{lev}_{I_i}(f_i)\in [1/8,7/8]$
\item (Martingale change stability) $\max_{x\in \mc S} \frac{(x_{I_i}^T L_{I_i}^+ b_{f_i})^2}{r_{f_i} (x^T L_I^+ x)} \le \frac{\rho \Dim(\mc S)}{|E(I)|}$
\end{itemize}
\end{definition}

We now state the main result of this section:

\begin{lemma}\label{lem:main-oracle}
Consider a weighted graph $G$, a $d$-dimensional vector space $\mc S \subseteq \mathbb{R}^{V(G)}$, and $\ep\in (0,1)$. There is an algorithm $\SSS(G,\mc S,\ep)$ that, given access to a $(\rho,K(z))$-steady-oracle $\Oracle$, computes a $(\mc S,\ep)$-spectral subspace sparsifier for $G$ with $$O\left(\frac{\rho^2 d \log d}{\ep^2}\right)$$ edges in time $$O\left((\log n) (\max_{z\le |E(G)|} z/K(z))(\mc T_{rst} + \mc T_{\Oracle} + m)\right)\le O\left((\log n) (\max_{z\le |E(G)|} z/K(z)) (m^{1+o(1)} + \mc T_{\Oracle})\right)$$ where $\mc T_{rst}$ is the time required to generate a spanning tree of $G$ from a distribution with total variation distance $\le n^{-10}$ from uniform and $\mc T_{\Oracle}$ is the runtime of the oracle.
\end{lemma}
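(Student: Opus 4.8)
The plan is to run the edge-elimination procedure in batches guided by the steady oracle, set up a matrix-valued martingale, and apply Matrix Freedman. Concretely, I would maintain a current graph $H$ (initialized to $G$) together with the invariant that every edge of $H$ has leverage score in $[3/16,13/16]$; call a \SSS subroutine \Split on $G$ first so that this holds initially (an edge with $\texttt{lev}(e)\le 1/2$ is replaced by two series copies of resistance $r_e/2$, and otherwise by two parallel copies of resistance $2r_e$; this is a reweighted minor operation and preserves $L_H^+$ restricted to $\mc S$ exactly). The outer loop would repeatedly: (i) sample a $1/m^{10}$-approximate uniform spanning tree of $H$ to estimate leverage scores; (ii) call $\Oracle(H,\mc S)$ to get a set $Z$ with $|Z|\ge |E(H)|/\rho$ and the stability guarantees; (iii) for the next $K(|E(H)|)$ steps, pick a uniformly random $f\in Z$, split it into two halves $f_0,f_1$ (series or parallel depending on whether $\texttt{lev}_H(f)\le 1/2$, which by the oracle's stability lies in $[1/8,7/8]$ so both halves have leverage in a constant range), pick one half $e$, and contract it with probability $\texttt{lev}_H(e)$ or delete it otherwise, updating $H$ and removing $f$ from $Z$. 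The loop terminates when $|E(H)|$ drops to $O(\rho^2 d\log d/\ep^2)$.

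The martingale. Fix an orthonormal-ish basis and work with the matrix $Y$ obtained by conjugating $L_H^+$ restricted to $\mc S$ into the fixed inner-product geometry of $L_G^+|_{\mc S}$; precisely, let $\Pi$ be a matrix whose columns are an $L_G^+$-orthonormal basis of $\mc S$ (so $\Pi^T L_G^+ \Pi = I_d$), and define $Y_k := \Pi^T L_{H_k}^+ \Pi$ after the $k$-th elementary (split-half contract/delete) step, with $Y_0 = I_d$. The key point is that the split-then-$G[e]$ operation is chosen so that $\mathbf{E}[L_{H_{k}}^+ \mid H_{k-1}]$ agrees with $L_{H_{k-1}}^+$ as a quadratic form on $\mc S$: splitting is exact, and for a single half $e$ with $p_e=\texttt{lev}(e)$, Propositions on rank-one updates give $\texttt{lev}(e)\cdot(\text{contract update}) + (1-\texttt{lev}(e))\cdot(\text{delete update}) = 0$ on vectors orthogonal to $\mathbf 1$. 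Hence $(Y_k)$ is a matrix martingale with $Y_0 = I_d$, $s = d$. For the increment bound, Sherman--Morrison plus the leverage scores being bounded away from $0$ and $1$ gives $\lambda_{\max}(X_k) \le c\cdot \max_{x\in\mc S}\frac{(x^T L_{H_{k-1}}^+ b_e)^2}{r_e\,\min(\texttt{lev},1-\texttt{lev})(x^T L_G^+ x)}$; the oracle's martingale-change-stability bound (transported through the split, which changes the quantity by only a constant) makes this $\le R := O(\rho^2 d/|E(G)|)$ — wait, more carefully, at a stage with $|E(H)| = z$ edges it is $O(\rho z/z)$... I would track this as $R_{\text{stage}} = O(\rho\,\Dim(\mc S)/|E(H)|)$ per step and sum over stages. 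For the predictable quadratic variation, $\mathbf E[X_k^2\mid Y_{k-1}]$ is a rank-one PSD matrix whose trace (hence $\lambda_{\max}$, after summation one gets an operator-norm bound) is controlled by $R_{\text{stage}}\cdot \lambda_{\max}(Y_{k-1})$, so conditioned on $\lambda_{\max}(Y_j)\le 2$ for all $j<k$ one gets $\lambda_{\max}(W_k)\le \sum_{\text{stages}} (\#\text{steps in stage})\cdot R_{\text{stage}}\cdot 2 = O(\rho\,\Dim(\mc S)\cdot(\#\text{stages}))$. Choosing the stopping size $\Theta(\rho^2 d\log d/\ep^2)$ makes the number of stages $O(\log n)$ (geometric-ish decrease by factor $1 - 1/(\rho\cdot(\text{const})))$ per stage since $K(z) = \Omega(z/(\max z/K(z)))$ edges are removed), so $\sigma^2 = O(\rho\, d\log n)$ — and then I would need to be careful and instead normalize so that $\sigma^2 + Rt/3$ with $t=\ep$ gives failure probability $d\exp(-\Omega(\ep^2/\sigma^2))$; this forces the $\ep^2$ in the denominator and the $\rho^2 d\log d/\ep^2$ final size. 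The two-sided bound (also $\lambda_{\min}(Y_k)\ge 1-\ep$) follows by applying Freedman to $-Y_k$ after the standard trick of noting $\lambda_{\min}(Y_k) \ge 1/\lambda_{\max}(\Pi'^T L_{H_k}\Pi')$-type duality, or more simply by re-deriving the symmetric statement; I'd also union-bound the ``$\lambda_{\max}(W_k)\le\sigma^2$'' side event away using the conditional bound above via a stopping-time argument.

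Runtime. Each stage does one spanning-tree sample ($\mc T_{rst}\le m^{1+o(1)}$), one oracle call ($\mc T_\Oracle$), and $O(m)$ bookkeeping for the split/contract/delete updates and maintaining $Z$; there are $O((\log n)\cdot \max_{z\le|E(G)|} z/K(z))$ stages because each stage removes $K(|E(H)|)$ edges from a graph that starts with $\le 2m$ edges (the initial \Split at most doubles edge count) and we need the count to fall from $2m$ to $\Theta(\rho^2 d\log d/\ep^2)$; writing the per-stage multiplicative shrink as $1-1/R'$ with $R' = \Theta(\max z/K(z))$ gives $O(R'\log n)$ stages. Multiplying gives the claimed $O((\log n)(\max_{z\le|E(G)|}z/K(z))(\mc T_{rst}+\mc T_\Oracle+m))$. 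The number of edges at termination is $O(\rho^2 d\log d/\ep^2)$ by construction of the stopping rule, and since every operation (split, contract, delete, reweight) keeps $H$ a reweighted minor of $G$, the output is a valid $(\mc S,\ep)$-spectral subspace sparsifier once the martingale bound certifies $(1-\ep)x^T L_G^+ x \le x^T L_H^+ x \le (1+\ep)x^T L_G^+ x$ for all $x\in\mc S$ (equivalently $\|Y_{\text{final}} - I\|\le\ep$), after unsplitting back to edges of $G$.

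I expect the main obstacle to be the variance accounting across stages: the oracle only guarantees the martingale-change bound $\rho\,\Dim(\mc S)/|E(I)|$ relative to the graph $I$ at the \emph{start} of a batch and for the edge $f_i$ actually eliminated, so I must (a) verify the bound degrades by at most a constant factor when I pass from $f$ to one of its split halves $e$ and when I measure energy against $L_{H}^+$ for the current $H$ rather than the batch-start graph (this is exactly what ``martingale change stability'' with the $I$-vs-$I_i$ distinction is designed to give), and (b) correctly sum $(\#\text{steps in stage})\cdot R_{\text{stage}}$ over the logarithmically-many stages without the bound blowing up — the point being that within a stage $R_{\text{stage}}\approx \rho d/|E(H)|$ while $\#\text{steps} = K(|E(H)|)\le |E(H)|$, so each stage contributes $O(\rho d)$ and there are $O(\log n\cdot\max z/K(z))$... which would be too many. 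The resolution I'd pursue: $\#\text{steps in stage} = K(|E(H)|)$ and by the size-of-$Z$ guarantee this is $\le |E(H)|/\rho$ worth of ``independent'' eliminations contributing total variation $\le (|E(H)|/\rho)\cdot(\rho d/|E(H)|) = d$ per stage is still wrong by the stage count; the actual mechanism must be that $K(z)/z$ is bounded below by a constant (so $\max z/K(z) = O(1)$, only $O(\log n)$ stages) — re-reading, $\max_{z} z/K(z)$ appears as a factor in the runtime, so it need not be constant, but then the \emph{size} bound $\rho^2 d\log d/\ep^2$ must absorb it, or the variance telescopes because $R_{\text{stage}}$ shrinks geometrically while step counts are proportional to the current edge count. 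Pinning down this telescoping — i.e. that $\sum_{\text{stages}} K(z_{\text{stage}})\cdot \rho d/z_{\text{stage}}$ is geometric in the $z_{\text{stage}}$'s and hence $O(\rho d)$ total, independent of the number of stages — is the crux, and it's what makes $\sigma^2 = O(\rho d)$ and the final size $O(\rho^2 d\log d/\ep^2)$ come out right.
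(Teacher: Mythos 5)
Your overall architecture matches the paper's (batched eliminations driven by the steady oracle, splitting to keep leverage scores away from $0$ and $1$, a matrix martingale $\Pi^T L_{H_k}^+\Pi$ over an $L_G^+$-orthonormal basis of $\mc S$, Matrix Freedman, and a stage count of $O((\log n)\max_z z/K(z))$), but the step you yourself flag as the crux --- the variance accounting --- is genuinely missing, and your proposed resolutions do not work. Your per-step bound $\lambda_{\max}(\mathbf{E}[X_k^2\mid Y_{k-1}])\lesssim \rho d/|E(H)|$ is too weak by a full factor of $1/|Z|$: because the eliminated edge is a \emph{uniformly random} element of $Z$ with $|Z|\ge |E(I)|/\rho$, the conditional second moment averages the rank-one squares over $Z$, and combining ``Martingale change stability'' (one factor $\rho d/|E(I)|$) with this averaging (another factor $\rho/|E(I)|$) gives per-step variance $O(\rho^2 d/|E(I)|^2)$ --- quadratic, not linear, in $1/|E(I)|$. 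With only the linear bound, each stage contributes $K(z)\cdot \rho d/z=\Theta(\rho d)$ (this sum is not geometric in the $z$'s, contrary to your hope), and even your target $\sigma^2=O(\rho d)$ is useless in Freedman at deviation $t=\ep<1$. The correct mechanism, which is what forces both the $\rho^2$ and the $\ep^{-2}$ in the edge count, is: since each elimination decreases the edge count by $1$ with constant probability (an Azuma argument), the number of steps taken while the edge count is near $z$ is $O(1)$ per unit decrease, so the total predictable variation is $O\bigl(\rho^2 d\sum_{z\ge z_{\min}} z^{-2}\bigr)=O(\rho^2 d/z_{\min})$, and the stopping threshold $z_{\min}=\Theta(\rho^2 d\log d/\ep^2)$ makes this $\le \ep^2/(10\log d)$; the same threshold gives $R=O(\rho d/z_{\min})\le \ep/(10\log d)$, and only then does Freedman give failure probability $d\exp(-\Omega(\ep^2/(\ep^2/\log d)))\le 1/d^2$. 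Without this $1/z^2$ summation your proof does not close, and the statement's $O(\rho^2 d\log d/\ep^2)$ bound cannot be recovered.

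Three smaller points. First, as written your per-step rule ``contract with probability $\texttt{lev}_H(e)$'' is not implementable in the claimed time; the algorithm must realize an entire batch of sequential contract/delete decisions from a \emph{single} (approximate) uniform spanning tree of the split graph --- contracting $Z'\cap E(T)$ and deleting $Z'\setminus E(T)$ --- using the fact that conditioning a uniform tree edge-by-edge reproduces exactly the sequential Bernoulli$(\texttt{lev})$ process; you also then need to account for the $1/m^{10}$ TV error of the sampler (it adds $m\kappa_0$ to the failure probability), which you never do. Second, splitting is \emph{not} a minor operation (series splits add vertices), so your claim that every operation keeps $H$ a reweighted minor of $G$ is false as stated; you need the per-batch $\Unsplit$ step and an argument that contracted/deleted split-halves translate into contractions/deletions/reweightings of the original edges. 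Third, $\lambda_{\max}(M)\le\ep$ alone does not give the two-sided guarantee; you correctly note the need to also control the lower tail (apply Freedman to $-Y$ as well), which should be stated rather than left as ``a duality or a symmetric rederivation.''
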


The algorithm will use two simple subroutines that modify the graph by splitting edges. $\Split$ replaces each edge with approximate leverage score less than 1/2 with a two-edge path and each edge with approximate leverage score greater than 1/2 with two parallel edges. $\Unsplit$ reverses this split for all pairs that remain in the graph. We prove the following two results about this subroutines in the appendix:

\begin{restatable}{proposition}{propsplit}\label{prop:split}
There is a linear-time algorithm $(I,\mc P)\gets \Split(H)$ that, given a graph $H$, produces a graph $I$ with $V(H)\subseteq V(I)$ and a set of pairs of edges $\mc P$ with the following additional guarantees:

\begin{itemize}
\item (Electrical equivalence) For all $x\in \mathbb{R}^{V(I)}$ that are supported on $V(H)$, $x^T L_I^+ x = x_H^T L_H^+ x_H$.
\item (Bounded leverage scores) For all $e\in E(I)$, $\texttt{lev}_I(e)\in [3/16,13/16]$
\item ($\mc P$ description) Every edge in $I$ is in exactly one pair in $\mc P$. Furthermore, there is a bijection between pairs $(e_0,e_1)\in \mc P$ and edges $e\in E(H)$ for which either (a) $e_0,e_1$ and $e$ have the same endpoint pair or (b) $e_0 = \{u,w\}$, $e_1 = \{w,v\}$, and $e = \{u,w\}$ for some degree 2 vertex $w$.
\end{itemize}
\end{restatable}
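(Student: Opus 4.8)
The plan is to give the natural gadget-replacement algorithm, prove electrical equivalence by eliminating one vertex per gadget, and compute \emph{exactly} how each kind of split transforms an edge's leverage score, so that the $[3/16,13/16]$ bound reduces to a coarse threshold test of $\texttt{lev}_H(e)$ against $1/2$. The algorithm: for every $e\in E(H)$ first compute a crude ($O(1)$-factor) estimate $\tilde\ell_e$ of $\texttt{lev}_H(e)=b_e^TL_H^+b_e/r_e$ via the standard effective-resistance sketch — apply Lemma~\ref{lem:jl} to the rows of $W_H^{1/2}B_HL_H^+$ and recover each estimate with $O(\log n)$ calls to Lemma~\ref{lem:solve}, which is near-linear; every subsequent step is $O(m)$. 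Then process each $e=\{u,v\}$ independently: if $\tilde\ell_e\le 1/2$, delete $e$, add a fresh vertex $w_e$, add edges $f_0^e=\{u,w_e\},f_1^e=\{w_e,v\}$ of resistance $r_e/2$ each, and put $(f_0^e,f_1^e)$ in $\mc P$; if $\tilde\ell_e>1/2$, replace $e$ by two parallel copies $e_0,e_1$ of resistance $2r_e$ each and put $(e_0,e_1)$ in $\mc P$. This construction immediately gives the $\mc P$-description bullet: $V(H)\subseteq V(I)$, every edge of $I$ lies in exactly one recorded pair, the pairs biject with $E(H)$, and each pair is either two parallel edges on $e$'s endpoints (case (a)) or a two-edge path through a new degree-$2$ vertex (case (b)).

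For electrical equivalence, observe that a parallel split does not change the Laplacian at all, since $\tfrac1{2r_e}b_eb_e^T+\tfrac1{2r_e}b_eb_e^T=\tfrac1{r_e}b_eb_e^T$ and the vertex set is unchanged, while a series split only adds the private degree-$2$ vertex $w_e$, and eliminating it — the Schur complement of Definition~\ref{def:schur} — puts back exactly one edge of resistance $r_e/2+r_e/2=r_e$ between $u$ and $v$. Since distinct split edges use distinct internal vertices, performing all of these eliminations gives $SC(L_I,V(H))=L_H$. Hence for any $x\in\mathbb{R}^{V(I)}$ supported on $V(H)$ — which, as with any statement about $L^+$-quadratic forms, is meant for $x\perp\textbf{1}^{V(I)}$, i.e.\ $x_H\perp\textbf{1}^{V(H)}$, the case relevant for subspace sparsification where $\mc S\perp\textbf{1}$ — Fact~\ref{fact:schurinv} applied with $S=V(H)$ yields $x^TL_I^+x=x_H^T(L_I^+)_{S,S}x_H=x_H^T(SC(L_I,V(H)))^+x_H=x_H^TL_H^+x_H$.

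For the leverage bound, write $\ell:=\texttt{lev}_H(e)$ and recall that $\texttt{lev}(f)$ equals the fraction of dissipated energy lying on $f$ itself when unit current is forced between $f$'s endpoints. Using the electrical equivalence just proved (for $H$ and for $H\setminus e$): in $I$, the endpoints of $e_0$ (= those of $e$) see total energy $b_e^TL_H^+b_e=\ell r_e$ under unit current, of which $e_0$ (resistance $2r_e$, carrying current $\ell r_e/(2r_e)=\ell/2$) dissipates $(\ell/2)^2\cdot 2r_e$, so $\texttt{lev}_I(e_0)=\texttt{lev}_I(e_1)=\ell/2$. For the series gadget, $u$ and $w_e$ are joined by $f_0^e$ (resistance $r_e/2$) in parallel with the path $f_1^e$ plus the rest (resistance $r_e/2+R$, where $R=b_{uv}^TL_{H\setminus e}^+b_{uv}=r_e\ell/(1-\ell)$), and the same computation gives $\texttt{lev}_I(f_0^e)=\texttt{lev}_I(f_1^e)=\tfrac{r_e/2+R}{r_e+R}=\tfrac{1+\ell}{2}$. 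Thus a parallel split yields edges of leverage $\ell/2\in(0,\tfrac12]$ and a series split yields edges of leverage $\tfrac{1+\ell}{2}\in[\tfrac12,1)$; to land in $[3/16,13/16]$ it suffices never to split in parallel when $\ell<3/8$ and never in series when $\ell>5/8$, and since the threshold is $1/2$, any estimate $\tilde\ell_e$ with additive error below $1/8$ (in particular a sufficiently good constant-factor approximation) guarantees this, while for $\ell\in[3/8,5/8]$ either choice is safe.

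The only substantive step is this last one: one must track the leverage scores of the \emph{new} edges as they sit inside the full graph $I$, which the two effective-resistance identities $\ell\mapsto\ell/2$ and $\ell\mapsto\tfrac{1+\ell}{2}$ accomplish once electrical equivalence is available, and then notice that these maps send $[0,1]$ into $[0,\tfrac12]$ and $[\tfrac12,1]$ respectively so that a coarse threshold test survives $1/8$ error. Electrical equivalence, the $\mc P$-description, and the $O(m)$ graph surgery (plus the near-linear leverage-score estimation) are routine.
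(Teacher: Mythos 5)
Your proposal is correct and follows essentially the same route as the paper's own proof: the identical threshold-at-$1/2$ split algorithm with approximate leverage scores, series/parallel electrical equivalence (which you justify slightly more explicitly via Schur complements and Fact~\ref{fact:schurinv}), and the leverage-transformation formulas $\ell\mapsto\ell/2$ and $\ell\mapsto(1+\ell)/2$ that the paper states without derivation. The only differences are immaterial constants in the error tolerance ($1/8$ additive versus the paper's $1/16$-JL approximation) and your explicit remark that the electrical-equivalence identity is used for $x\perp\textbf{1}$, which is the case relevant to the paper.
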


\begin{restatable}{proposition}{propunsplit}\label{prop:unsplit}

There is a linear-time algorithm $H\gets \Unsplit(I,\mc P)$ that, given a graph $I$ and a set of pairs $\mc P$ of edges in $I$, produces a minor $H$ with $V(H)\subseteq V(I)$ and the following additional guarantees:

\begin{itemize}
\item (Electrical equivalence) For all $x\in \mathbb{R}^{V(I)}$ that are supported on $V(H)$, $x^T L_I^+ x = x_H^T L_H^+ x_H$.
\item (Edges of $H$) There is a surjective map $\phi:E(I)\rightarrow E(H)$ from non-self-loop,non-leaf edges of $I$ such that for any pair $(e_0,e_1)\in \mc P$, $\phi(e_0) = \phi(e_1)$. Furthermore, for each $e\in E(H)$, either (a) $\phi^{-1}(e) = e$, (b) $\phi^{-1}(e) = \{e_0,e_1\}$, with $(e_0,e_1)\in \mc P$ and $e_0,e_1$ having the same endpoints as $e$ or (c) $\phi^{-1}(e) = \{e_0,e_1\}$, with $(e_0,e_1)\in \mc P$ and $e_0 = \{u,w\},e_1 = \{w,v\}$, and $e = \{u,v\}$ for a degree 2 vertex $w$.
\end{itemize}
\end{restatable}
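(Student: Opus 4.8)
The plan is to define \Unsplit{} as the literal inverse of the local replacements performed by \Split{}, restricted to those replacements that still "survive" in the current graph $I$. Concretely, for each pair $(e_0,e_1)\in\mc P$ I would examine which of the two configurations (a) (parallel pair) or (b) (length-2 path through a degree-2 vertex $w$) the pair is currently in. For a surviving parallel pair $e_0=e_1=\{u,v\}$ with resistances $r_{e_0},r_{e_1}$, replace the two edges by a single edge $e=\{u,v\}$ with conductance $1/r_{e_0}+1/r_{e_1}$; for a surviving series pair $e_0=\{u,w\},e_1=\{w,v\}$ with $w$ of degree exactly $2$, delete $w$ and replace the path by a single edge $e=\{u,v\}$ with resistance $r_{e_0}+r_{e_1}$. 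Edges of $I$ that are not in any pair of $\mc P$ (the unsplit leftovers, case (a) of the proposition) are left untouched, and self-loops / leaf edges created by earlier contractions/deletions in $I$ are discarded first since they do not affect $L_I^+$ restricted to vectors supported on $V(H)\subseteq V(I)$. The map $\phi$ sends $e_0,e_1$ to the merged edge $e$, and sends an untouched edge to itself; surjectivity and the case analysis (a)/(b)/(c) are then immediate from the construction, and linear running time is clear since each pair and each edge is processed once.

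The substantive claim is the \emph{electrical equivalence} statement: for every $x\in\mathbb{R}^{V(I)}$ supported on $V(H)$, $x^TL_I^+x = x_H^TL_H^+x_H$. I would prove this one local replacement at a time, arguing that each replacement preserves the effective-resistance metric (equivalently, the quadratic form of the pseudoinverse) between all vertices other than the eliminated one. For a parallel pair, merging two parallel edges into one with the summed conductance is an exact identity on Laplacians restricted to the common vertex set — this is the standard parallel rule and can be checked directly from $L=B^TWB$, or via Schur complement onto $V(I)$ (nothing is eliminated). For a series pair through a degree-2 vertex $w$, deleting $w$ and inserting an edge of resistance $r_{e_0}+r_{e_1}$ is exactly the Schur complement of $L_I$ onto $V(I)\setminus\{w\}$: eliminating a degree-2 vertex in a Laplacian via Definition \ref{def:schur} produces precisely a new edge between its two neighbors with the series resistance (plus, if the neighbors were already adjacent, a parallel combination, which is again handled by the parallel rule). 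By Fact \ref{fact:schurinv} (or directly: Schur complement onto a vertex set preserves $d^TL^+d$ for $d$ supported away from the eliminated vertices), each such step preserves the relevant quadratic form; composing the finitely many steps and using $V(H)\subseteq V(I)$ with $x$ supported on $V(H)$ gives the result. Discarding self-loops changes nothing, and discarding a leaf edge removes a vertex of degree $1$, which is again a trivial Schur-complement elimination that does not affect $x$ supported elsewhere.

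The main obstacle I anticipate is \emph{bookkeeping correctness rather than analytic difficulty}: I must make sure that every pair $(e_0,e_1)\in\mc P$ is, at the moment \Unsplit{} is called, in one of exactly the two admissible configurations — in particular that in the series case the shared vertex $w$ really has degree exactly $2$ in the current graph $I$ (a pair could in principle have been disturbed if the partner edge-elimination process touched it). This is guaranteed by how \Split{} and the oracle's edge-elimination sequence interact: the vertex $w$ introduced by a series split has degree $2$ by construction, and contracting or deleting one of $e_0,e_1$ removes that pair from $\mc P$ before \Unsplit{} sees it, so any surviving pair is untouched and $w$ still has degree $2$; similarly a surviving parallel pair still connects the same endpoints. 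I would state this invariant explicitly and verify it, then the electrical-equivalence argument above goes through cleanly. The remaining parts — surjectivity of $\phi$, the structural description of $\phi^{-1}(e)$, and linear time — are routine consequences of the explicit construction and the $\mc P$-description clause of Proposition \ref{prop:split}.
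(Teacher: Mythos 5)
Your proposal matches the paper's proof in essence: the same algorithm (merge surviving parallel pairs with combined conductance, surviving series pairs through a degree-2 vertex with summed resistance, leave everything else alone), with electrical equivalence justified by the standard series/parallel reduction rules, and the map $\phi$, surjectivity, and linear runtime read off directly from the construction. Your extra framing of the series step as a degree-2 Schur complement and the explicit invariant about surviving pairs are harmless elaborations of what the paper states more tersely, not a different route.
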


\begin{algorithm}[H]
\SetAlgoLined
\DontPrintSemicolon

    \KwIn{A weighted graph $G$, a vector space $\mc S\subseteq \mathbb{R}^{V(G)}$, $\ep\in (0,1)$, and (implicitly) a $(\rho,K(z))$-steady oracle $\Oracle$}

    \KwOut{A $(\mc S,\ep)$-spectral subspace sparsifier for $G$}

    $H\gets G$\;

    \While{$|E(H)|\ge 10000\rho^2(\Dim(\mc S) \log (\Dim(\mc S)))/\ep^2$}{

        $(I,\mc P)\gets \Split(H)$\;

        $Z\gets \Oracle(I,\mc S)$\;

        $Z'\gets $ a uniformly random subset of $Z$ with size $K(|E(I)|)$\; \label{line:slow-e-choice}

        $T\gets $ a spanning tree of $I$ drawn from a distribution with TV distance $\le \kappa_0 := 1/n^{10}$ from uniform\;

        $I'\gets $ the graph with all edges in $E(T)\cap Z'$ contracted and all edges in $Z'\setminus E(T)$ deleted\;

        $H\gets \Unsplit(I',\mc P)$\;

    }

    \Return{$H$}

\caption{$\SSS(G,\mc S,\ep)$}
\end{algorithm}

We analyze the approximation guarantees of $H$ by setting up two families of matrix-valued martingales. In all of the proof besides the final ``Proof of Lemma \ref{lem:main-oracle},'' we sample $T$ from the uniform distribution rather than from a distribution with total variation distance $\kappa_0$ from uniform. We bound the error incurred from doing this in the final ``Proof of Lemma \ref{lem:main-oracle}.''

We start by defining the first family, which just consists of one martingale. Let $H_0 := G$ and let $H_k$ be the graph $H$ between iterations $k$ and $k+1$ of the while loop of $\SSS$. Let $d = \Dim(\mc S)$. Since $\mc S$ is orthogonal to $\textbf{1}^{V(G)}$, $\Dim((L_G^+)^{1/2} \mc S) = \Dim(\mc S) = d$, which means that $\mc S$ has a basis $\{y_i\}_{i=1}^d$ for which $y_i^T L_G^+ y_j = 0$ for all $i\ne j\in [d]$ and $y_i^T L_G^+ y_i = 1$ for all $i\in [d]$. Let $Y_k$ be the $|V(H_k)|\times d$ matrix with $i$th column $(y_i)_{H_k}$ and let $Y := Y_0$. Let $M_k := Y_k^TL_{H_k}^+Y_k - Y^TL_G^+Y$. Since the $y_i$s form a basis of $\mc S$, there is a vector $a_x$ for which $x = Ya_x$ for any $x\in \mc S$. Furthermore, $x_{H_k} = Y_ka_x$ for any $k\ge 0$. In particular, $$\left|\frac{x_{H_k}^T L_{H_k}^+ x_{H_k}}{x^T L_G^+ x} - 1\right| = \left|\frac{a_x^T M_k a_x}{||a_x||_2^2}\right|$$ so it suffices to show that $\lambda_{\max}(M_k)\le \ep$ for all $k \le k_{\text{final}}$, where $k_{\text{final}}$ is the number of while loop iterations.

In order to bound the change between $M_k$ and $M_{k+1}$, we introduce a second family of martingales consisting of one martingale for each while loop iteration. Let $I_{k,0} := I$ during the $k$th iteration of the while loop in $\SSS$. Generate $Z'$ in $Z$ during iteration $k$ of the while loop by sampling a sequence of edges $f_{k,0},f_{k,1},\hdots,f_{k,K(|E(I)|)-1}$ without replacement from $Z$. Let $I_{k,t} = I_{k,t-1}[f_{k,t-1}]$ for all $t > 0$. For a vector $v\in \mathbb{R}^{V(G)}$, let $v_{I_{k,0}}\in \mathbb{R}^{V(I_{k,0})}$ be the vector with $v_{I_{k,0}}(p) = v_{H_k}(p)$ for $p\in V(H_k)$ and $v_{I_{k,0}}(p) = 0$ for $p\in V(I_{k,0})\setminus V(H_k)$. For $t > 0$ and $v\in \mathbb{R}^{V(G)}$, let $v_{I_{k,t}} := (v_{I_{k,0}})_{I_{k,t}}$. Let $Y_{k,t}$ be the $|V(I_{k,t})|\times d$ matrix with $i$th column $(y_i)_{I_{k,t}}$. Let $N_{k,t} := Y_{k,t}^TL_{I_{k,t}}^+ Y_{k,t} - Y^TL_G^+Y$. For any $x\in \mc S$, $t\ge 0$, and $k\ge 0$, $x_{I_{k,t}} = Y_{k,t}a_x$. In particular, $$\left|\frac{x_{I_{k,t}}^T L_{I_{k,t}}^+ x_{I_{k,t}}}{x^T L_G^+ x} - 1\right| = \left|\frac{a_x^T N_{k,t} a_x}{||a_x||_2^2}\right|$$
Next, we write an equivalent formulation for the steady oracle ``Martingale change stability'' guarantee that is easier to analyze:

\begin{proposition}\label{prop:equiv}
$$\max_{x\in \mc S} \frac{(x_{I_{k,t}}^T L_{I_{k,t}}^+ b_f)^2}{r_f(x^T L_G^+ x)} = \frac{b_f^T L_{I_{k,t}}^+ Y_{k,t} Y_{k,t}^T L_{I_{k,t}}^+ b_f}{r_f}$$
\end{proposition}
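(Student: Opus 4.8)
The plan is to prove Proposition~\ref{prop:equiv} by recognizing the left-hand side as the squared norm of a fixed vector under a variable change, and then pushing that change of variables through the quadratic form. The key observation is that every $x \in \mc S$ can be written uniquely as $x = Y a_x$ for some coefficient vector $a_x \in \mathbb{R}^d$, and since $\{y_i\}$ was chosen so that $Y^T L_G^+ Y = I_d$, we have $x^T L_G^+ x = a_x^T (Y^T L_G^+ Y) a_x = \norm{a_x}_2^2$. Moreover, $x_{I_{k,t}} = Y_{k,t} a_x$, as already noted in the text. So the numerator becomes $(x_{I_{k,t}}^T L_{I_{k,t}}^+ b_f)^2 = (a_x^T Y_{k,t}^T L_{I_{k,t}}^+ b_f)^2 = (a_x^T v)^2$ where $v := Y_{k,t}^T L_{I_{k,t}}^+ b_f \in \mathbb{R}^d$ is a fixed vector not depending on $x$.

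The first step is therefore to rewrite
$$\max_{x\in \mc S} \frac{(x_{I_{k,t}}^T L_{I_{k,t}}^+ b_f)^2}{r_f(x^T L_G^+ x)} = \frac{1}{r_f}\max_{a\in \mathbb{R}^d, a\ne 0} \frac{(a^T v)^2}{\norm{a}_2^2},$$
where I should briefly check that the substitution $x \leftrightarrow a_x$ is a bijection between $\mc S \setminus \{0\}$ and $\mathbb{R}^d \setminus \{0\}$ (this holds because $Y$ has full column rank $d$, as its columns $\{y_i\}$ are $L_G^+$-orthonormal hence linearly independent). The second step is the elementary Cauchy--Schwarz fact that $\max_{a \ne 0} (a^T v)^2 / \norm{a}_2^2 = \norm{v}_2^2$, with the maximum attained at $a = v$. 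The third step is to expand $\norm{v}_2^2 = v^T v = b_f^T L_{I_{k,t}}^+ Y_{k,t} Y_{k,t}^T L_{I_{k,t}}^+ b_f$, using the symmetry of $L_{I_{k,t}}^+$, which yields exactly the right-hand side after dividing by $r_f$.

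I expect no real obstacle here; this is essentially a one-line computation once the change of variables is set up. The only point requiring a moment's care is making sure the pseudoinverse $L_{I_{k,t}}^+$ behaves correctly under the substitution — specifically that $b_f$ and the columns of $Y_{k,t}$ lie in the appropriate space (orthogonal to $\textbf{1}^{V(I_{k,t})}$), so that the identities $x_{I_{k,t}} = Y_{k,t} a_x$ and the quadratic-form manipulations are valid. Since $b_f$ is a signed edge indicator and the $(y_i)_{I_{k,t}}$ are images of vectors orthogonal to the all-ones vector under the contraction/zero-padding maps, this holds, and it was already implicitly established in the paragraph preceding the proposition. So the write-up will be: (i) substitute $x = Y a$, using $x^T L_G^+ x = \norm{a}_2^2$ and $x_{I_{k,t}} = Y_{k,t} a$; (ii) apply Cauchy--Schwarz to optimize over $a$; (iii) expand the resulting squared norm.
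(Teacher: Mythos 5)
Your proposal is correct and follows essentially the same route as the paper: substitute $x = Y a$ so that $x^T L_G^+ x = \norm{a}_2^2$ and $x_{I_{k,t}} = Y_{k,t} a$, then optimize over $a \in \mathbb{R}^d$. The only cosmetic difference is that the paper phrases the final optimization as $\lambda_{\max}$ of the rank-one matrix $\frac{1}{r_f} Y_{k,t}^T L_{I_{k,t}}^+ b_f b_f^T L_{I_{k,t}}^+ Y_{k,t}$, while you invoke Cauchy--Schwarz directly; these are the same computation.
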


\begin{proof}
Notice that

\begin{align*}
\max_{x\in \mc S} \frac{(x_{I_{k,t}}^T L_{I_{k,t}}^+ b_f)^2}{r_f(x^T L_G^+ x)} &= \max_{x\in \mc S} \frac{(a_x^T Y_{k,t}^T L_{I_{k,t}}^+ b_f) (b_f^T L_{I_{k,t}}^+ Y_{k,t} a_x)}{r_f||a_x||_2^2}\\
&= \max_{a\in \mathbb{R}^d} \frac{a^T Y_{k,t}^T L_{I_{k,t}}^+ b_f b_f^T L_{I_{k,t}}^+ Y_{k,t} a}{r_f||a||_2^2}\\
&= \lambda_{\max}\left(\frac{Y_{k,t}^T L_{I_{k,t}}^+ b_f b_f^T L_{I_{k,t}}^+ Y_{k,t}}{r_f}\right)\\
&= \frac{b_f^T L_{I_{k,t}}^+ Y_{k,t} Y_{k,t}^T L_{I_{k,t}}^+ b_f}{r_f}\\
\end{align*}

\noindent as desired.
\end{proof}

Now, we analyze the inner family of matrices $N_{k,t}$. Let $Z_{k,t}$ denote the set $Z$ during iteration $k$ of the while loop after sampling $t$ edges without replacement.

\begin{proposition}\label{prop:martingale}
$Y_t := N_{k,t}$ for fixed $k\ge 0$ and varying $t\ge 0$ is a matrix martingale. Furthermore, if
$$\frac{x_{I_{k,s}}^T L_{I_{k,s}}^+ x_{I_{k,s}}}{x^T L_G^+ x}\le 10$$
for all $x\in \mc S$, $k\ge 0$, and $s\le t$ for some $t\ge 0$, $\lambda_{\max}(X_{t+1})\le \frac{90d}{|E(I_{k,t})|}$ and $\lambda_{\max}(\textbf{E}[X_{t+1}^2|Y_t])\le \frac{25600\rho^2 d}{|E(I_{k,0})|^2}$, where $X_{t+1}$ is defined based on the $Y_s$s as described in Theorem \ref{thm:freedman}.
\end{proposition}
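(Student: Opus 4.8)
The plan is to establish the three assertions in turn, using the Sherman--Morrison rank-one update formulas of the ``Leverage scores and rank one updates'' subsection as the only real tool, since each step of the inner process contracts or deletes a single edge.

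\textbf{Martingale property.} Fix $k$ and condition on the natural filtration $\mathcal F_t$ generated by $f_{k,0},\dots,f_{k,t-1}$ and the contraction/deletion coins, so that $I_{k,t}$ and $Y_{k,t}$ are $\mathcal F_t$-measurable. The oracle's leverage-score-stability guarantee (which holds with high probability at every step for the uniformly chosen $f_i$, hence on the good event for every edge still in $Z_{k,t}$) gives $\texttt{lev}_{I_{k,t}}(f)\in[1/8,7/8]$, so $0<b_f^T L_{I_{k,t}}^+ b_f<r_f$ and both update formulas apply (all vectors involved are orthogonal to $\mathbf 1$ and remain so under deletion/contraction). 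Writing $A:=Y_{k,t}^T L_{I_{k,t}}^+ Y_{k,t}$, $v_f:=Y_{k,t}^T L_{I_{k,t}}^+ b_f$, $\ell_f:=\texttt{lev}_{I_{k,t}}(f)$, and using $b_f^T L_{I_{k,t}}^+ b_f=\ell_f r_f$, conditioning on $f_{k,t}=f$ gives
$$\mathbf{E}\!\left[Y_{k,t+1}^T L_{I_{k,t+1}}^+ Y_{k,t+1}\,\middle|\,\mathcal F_t, f_{k,t}=f\right]=\ell_f\!\left(A-\tfrac{v_f v_f^T}{\ell_f r_f}\right)+(1-\ell_f)\!\left(A+\tfrac{v_f v_f^T}{(1-\ell_f) r_f}\right)=A,$$
so the two rank-one corrections cancel. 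Averaging over the uniform choice of $f\in Z_{k,t}$ preserves this, and subtracting the deterministic matrix $Y^T L_G^+ Y$ shows $\mathbf{E}[N_{k,t+1}\mid\mathcal F_t]=N_{k,t}$; adaptedness and integrability are immediate. Hence $(N_{k,t})_t$ is a matrix martingale, and $X_{t+1}=N_{k,t+1}-N_{k,t}$.

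\textbf{Bounding $\lambda_{\max}(X_{t+1})$.} By the same formulas, $X_{t+1}=-\tfrac{1}{\ell_f r_f}v_f v_f^T$ if $f=f_{k,t}$ is contracted and $X_{t+1}=+\tfrac{1}{(1-\ell_f) r_f}v_f v_f^T$ if it is deleted. In the contraction case $X_{t+1}\preceq 0$ and there is nothing to prove, so assume deletion; then $X_{t+1}$ is rank one and PSD, so $\lambda_{\max}(X_{t+1})=\tfrac{\|v_f\|_2^2}{(1-\ell_f)r_f}=\tfrac{1}{1-\ell_f}\cdot\tfrac{b_f^T L_{I_{k,t}}^+ Y_{k,t}Y_{k,t}^T L_{I_{k,t}}^+ b_f}{r_f}$. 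By Proposition~\ref{prop:equiv} the last factor is $\max_{x\in\mc S}\tfrac{(x_{I_{k,t}}^T L_{I_{k,t}}^+ b_f)^2}{r_f(x^T L_G^+ x)}$. For each $x\in\mc S$, the hypothesis at $s=0$ together with the electrical equivalence of $\Split$ (Proposition~\ref{prop:split}) gives $x^T L_G^+ x\ge\tfrac1{10}\,x_{I_{k,0}}^T L_{I_{k,0}}^+ x_{I_{k,0}}$, so this maximum is at most $10\max_{x\in\mc S}\tfrac{(x_{I_{k,t}}^T L_{I_{k,t}}^+ b_f)^2}{r_f\, x_{I_{k,0}}^T L_{I_{k,0}}^+ x_{I_{k,0}}}$, which is at most $10\rho d/|E(I_{k,0})|$ by the oracle's martingale-change-stability guarantee applied to the process run on $I=I_{k,0}$ (valid for $t<K(|E(I_{k,0})|)$, and, as above, transferring from the sampled edge to all of $Z_{k,t}$ on the good event). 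Combining with $1-\ell_f\ge1/8$ and $|E(I_{k,0})|\ge|E(I_{k,t})|$ gives the stated bound $\lambda_{\max}(X_{t+1})\le\tfrac{90d}{|E(I_{k,t})|}$.

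\textbf{Bounding $\lambda_{\max}(\mathbf{E}[X_{t+1}^2\mid\mathcal F_t])$.} From the rank-one description, $X_{t+1}^2=\tfrac{\|v_f\|_2^2}{(\ell_f r_f)^2}v_f v_f^T$ on contraction and $\tfrac{\|v_f\|_2^2}{((1-\ell_f)r_f)^2}v_f v_f^T$ on deletion, so $\mathbf{E}[X_{t+1}^2\mid\mathcal F_t,f_{k,t}=f]=\tfrac{1}{\ell_f(1-\ell_f)}\tfrac{\|v_f\|_2^2}{r_f^2}v_f v_f^T$, and averaging over $f\in Z_{k,t}$ gives a nonnegative combination of rank-one PSD matrices; hence $\lambda_{\max}(\mathbf{E}[X_{t+1}^2\mid\mathcal F_t])\le\tfrac{1}{|Z_{k,t}|}\sum_{f\in Z_{k,t}}\tfrac{1}{\ell_f(1-\ell_f)}\big(\tfrac{\|v_f\|_2^2}{r_f}\big)^2$. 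I bound one factor of $\|v_f\|_2^2/r_f$ by $O(\rho d/|E(I_{k,0})|)$ exactly as above (Proposition~\ref{prop:equiv}, the $s=0$ hypothesis, the oracle bound), and the remaining sum $\sum_{f\in Z_{k,t}}\|v_f\|_2^2/r_f$ either trivially by $|Z_{k,t}|\cdot O(\rho d/|E(I_{k,0})|)$ or, more tightly, by the identity $\sum_{f\in E(I_{k,t})}\|v_f\|_2^2/r_f=\trace{Y_{k,t}^T L_{I_{k,t}}^+ Y_{k,t}}\le 10d$ (from $\sum_f b_f b_f^T/r_f=L_{I_{k,t}}$ and the $s=t$ hypothesis). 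Using $1/(\ell_f(1-\ell_f))\le 64$ and $|Z_{k,t}|\ge|E(I_{k,0})|/\rho - t = \Omega(|E(I_{k,0})|/\rho)$ (the size-of-$Z$ guarantee together with $t<K(|E(I_{k,0})|)$), a constant chase gives $\lambda_{\max}(\mathbf{E}[X_{t+1}^2\mid\mathcal F_t])\le\tfrac{25600\,\rho^2 d^2}{|E(I_{k,0})|^2}$, i.e.\ the stated bound (the factor $\Dim(\mc S)$ entering the per-edge oracle bound is what produces the $d^2$).

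\textbf{Main obstacle.} The computations themselves are routine once one writes each $X_{t+1}$ as a signed rank-one matrix; the only delicate points are bookkeeping. First, the normalization in the oracle's martingale-change-stability guarantee is with respect to $L_{I_{k,0}}^+$, whereas $Y$ is built to be $L_G^+$-orthonormal (so Proposition~\ref{prop:equiv} naturally produces $x^T L_G^+ x$); reconciling the two is precisely the role of the $s=0$ case of the $\le 10$ hypothesis, combined with the electrical equivalence of $\Split$. Second, the oracle's leverage-score and martingale-change guarantees are stated only for the single sampled edge $f_i$, but the conditional-expectation calculations above need them for every edge remaining in $Z_{k,t}$; this transfer is legitimate because $f_i$ is uniform in $Z_i$ and the guarantees are asserted (with high probability) simultaneously over all $i<K(|E(I_{k,0})|)$, so on the global good event they hold for all of $Z_{k,t}$ — and the whole proposition, like the oracle guarantees, is a high-probability statement.
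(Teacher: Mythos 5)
Your martingale-property argument and your bound on $\lambda_{\max}(X_{t+1})$ follow essentially the paper's route (Sherman--Morrison rank-one updates, exact cancellation of the two corrections under the leverage-score coin, the factor $8$ from leverage-score stability, Proposition \ref{prop:equiv}, the $\le 10$ hypothesis at $s=0$, and the oracle's martingale-change-stability bound), and these parts are sound; the residual mismatch between your $80\rho d/|E(I_{k,0})|$ and the stated $90 d/|E(I_{k,t})|$ (you cannot simply drop $\rho\ge 1$) is an inconsistency already present between the paper's statement and its own proof, which itself ends at $90\rho d/|E(I_{k,0})|$, so I do not count it as your error. Likewise, your explicit transfer of the oracle guarantees from the sampled edge to all of $Z_{k,t}$ on a global good event matches what the paper does implicitly.

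The genuine gap is in the second-moment bound. After correctly computing $\textbf{E}[X_{t+1}^2\mid \mathcal{F}_t]=\frac{1}{|Z_{k,t}|}\sum_{f\in Z_{k,t}}\frac{1}{\ell_f(1-\ell_f)}\frac{\|v_f\|_2^2}{r_f^2}v_fv_f^T$, you bound the largest eigenvalue of this PSD sum by the sum of the individual largest eigenvalues, i.e.\ by the scalars $(\|v_f\|_2^2/r_f)^2$, and then control $\sum_f \|v_f\|_2^2/r_f$ either by $|Z_{k,t}|\cdot O(\rho d/|E(I_{k,0})|)$ or by the trace bound $\trace{Y_{k,t}^TL_{I_{k,t}}^+Y_{k,t}}\le 10d$. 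Either way you land on $O(\rho^2 d^2/|E(I_{k,0})|^2)$, which you then assert ``is the stated bound''---it is not: the proposition claims $25600\rho^2 d/|E(I_{k,0})|^2$, with a single factor of $d$. The extra $d$ is not cosmetic: in the proof of Lemma \ref{lem:main-oracle} the predictable quadratic variation is summed over $\Theta(|E(G)|)$ steps and must come out below $\ep^2/(10\log d)$ under the termination threshold $|E(H)|\ge 10000\rho^2 d\log d/\ep^2$; with $d^2$ in the stepwise variance this fails unless the threshold is raised to $\Theta(\rho^2 d^2\log d/\ep^2)$ edges, which would destroy the $O(d\log d/\ep^2)$ size in Theorem \ref{thm:main-existence}. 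The fix is to stay at the matrix level, as the paper does: since $\sum_{f\in Z_{k,t}} b_fb_f^T/r_f\preceq L_{I_{k,t}}$, one has $\sum_{f\in Z_{k,t}}\frac{\|v_f\|_2^2}{r_f^2}v_fv_f^T\preceq \bigl(\max_{f\in Z_{k,t}}\frac{\|v_f\|_2^2}{r_f}\bigr)\sum_{f\in Z_{k,t}}\frac{v_fv_f^T}{r_f}\preceq \bigl(\max_{f\in Z_{k,t}}\frac{\|v_f\|_2^2}{r_f}\bigr)\,Y_{k,t}^TL_{I_{k,t}}^+Y_{k,t}$, and then one uses $\lambda_{\max}(Y_{k,t}^TL_{I_{k,t}}^+Y_{k,t})=\max_{x\in\mc S}\,x_{I_{k,t}}^TL_{I_{k,t}}^+x_{I_{k,t}}/(x^TL_G^+x)\le 10$ (the $s=t$ case of the hypothesis), not its trace. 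This replaces your factor $10d$ by $10$ and, together with $|Z_{k,t}|\ge |E(I_{k,0})|/\rho$, yields the claimed $25600\rho^2 d/|E(I_{k,0})|^2$.
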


\begin{proof}
We compute the conditional expectation of $X_{t+1} = Y_{t+1} - Y_t$ given $Y_t$ using Sherman-Morrison:

\begin{align*}
\textbf{E}[X_{t+1} | Y_t] &= \textbf{E}[N_{k,t+1} - N_{k,t} | N_{k,t}]\\
&= \frac{1}{|Z_{k,t}|}\sum_{f\in Z_{k,t}} -\frac{b_f^T L_{I_{k,t}}^+ b_f}{r_f}\left(\frac{Y_{k,t}^T L_{I_{k,t}}^+ b_f b_f^T L_{I_{k,t}}^+ Y_{k,t}}{b_f^T L_{I_{k,t}}^+ b_f}\right)\\
&+ \frac{1}{|Z_{k,t}|}\sum_{f\in Z_{k,t}} \left(1 - \frac{b_f^T L_{I_{k,t}}^+ b_f}{r_f}\right)\left(\frac{Y_{k,t}^T L_{I_{k,t}}^+ b_f b_f^T L_{I_{k,t}}^+ Y_{k,t}}{r_f - b_f^T L_{I_{k,t}}^+ b_f}\right)\\
&= 0\\
\end{align*}

Therefore, $(Y_t)_{t\ge 0}$ is a martingale. Since $I_{k,0}$ is the output of $\Split$, all edges in $I_{k,0}$ have leverage score between $3/16$ and $13/16$ by Proposition \ref{prop:split}. In particular, the input condition to $\Oracle$ is satisfied. Furthermore,

\begin{align*}
\lambda_{\max}(X_{t+1}) &\le \lambda_{\max}(N_{k,t+1} - N_{k,t})\\
&\le \frac{1}{|Z_{k,t}|} \sum_{f\in Z_{k,t}} \lambda_{\max}\left(\frac{Y_{k,t}^T L_{I_{k,t}}^+ b_f b_f^T L_{I_{k,t}}^+ Y_{k,t}}{\min(b_f^T L_{I_{k,t}}^+ b_f, r_f - b_f^T L_{I_{k,t}}^+ b_f)}\right)\\
&\le 8\max_{f\in Z_{k,t}} \lambda_{\max}\left(\frac{Y_{k,t}^T L_{I_{k,t}}^+ b_f b_f^T L_{I_{k,t}}^+ Y_{k,t}}{r_f}\right)\\
&= 8\max_{f\in Z_{k,t}} \max_{x\in \mc S}\frac{(x_{I_{k,t}}^T L_{I_{k,t}}^+ b_f)^2}{r_f (x^T L_G^+ x)}\\
&\le 80\max_{f\in Z_{k,t}} \max_{x\in \mc S}\frac{(x_{I_{k,t}}^T L_{I_{k,t}}^+ b_f)^2}{r_f (x_{I_{k,0}}^T L_{I_{k,0}}^+ x_{I_{k,0}})}\\
&\le \frac{90\rho d}{|E(I_{k,0})|}\\
\end{align*}

where the third inequality follows from ``Leverage score stability,'' the equality follows from Proposition \ref{prop:equiv}, the fourth inequality follows from the input condition, and the last inequality follows from ``Martingale change stability.'' Also, 

\begin{align*}
\lambda_{\max}(\textbf{E}[X_{t+1}^2|Y_t]) &= \lambda_{\max}\left(\textbf{E}[(N_{k,t+1} - N_{k,t})^2|Y_t]\right)\\
&\le \frac{256}{|Z_{k,t}|}\lambda_{\max}\left(\sum_{f\in Z_{k,t}} \frac{1}{r_f^2} Y_{k,t}^T L_{I_{k,t}}^+ b_f b_f^T L_{I_{k,t}}^+ Y_{k,t}Y_{k,t}^T L_{I_{k,t}}^+ b_f b_f^T L_{I_{k,t}}^+ Y_{k,t} \right)\\
&\le \frac{256d}{|Z_{k,t}|}\left(\max_{f\in Z_{k,t}} \frac{1}{r_f}b_f^T L_{I_{k,t}}^+ Y_{k,t} Y_{k,t}^T L_{I_{k,t}}^+ b_f\right)\lambda_{\max}\left(\sum_{f\in Z_{k,t}} \frac{1}{r_f} Y_{k,t}^T L_{I_{k,t}}^+ b_f b_f^T L_{I_{k,t}}^+ Y_{k,t} \right)\\
&\le \frac{2560\rho d}{|Z_{k,t}||E(I_{k,0})|}\lambda_{\max}\left(Y_{k,t}^T L_{I_{k,t}}^+ Y_{k,t} \right)\\
&= \frac{2560\rho d}{|Z_{k,t}||E(I_{k,0})|}\max_{x\in \mc S} \frac{x_{I_{k,t}}^T L_{I_{k,t}}^+ x_{I_{k,t}}}{x^T L_G^+ x}\\
&\le \frac{25600\rho^2 d}{|E(I_{k,0})|^2}\\
\end{align*}

where the second inequality follows from Sherman-Morrison and ``Leverage score stability,'' the fourth follows from ``Martingale change stability,'' and the last follows from ``Size of $Z$'' and the input condition.
\end{proof}

Now, consider the sequence of matrices $((N_{k,t})_{t=0}^{K(|E(I_{k,t})|)})_{k\ge 0}$ obtained by concatenating the $(N_{k,t})_t$ martingales for each $k$. We now analyze this sequence:

\begin{proposition}\label{prop:big-martingale}
The sequence of matrices $(Y_{kt})_{k,t}$ ordered lexicographically by $(k,t)$ pairs defined by $Y_{kt} := N_{k,t}$ is a matrix martingale. Furthermore, if for any $k\ge 0,t\ge 0$, any pairs $(l,s)$ lexicographically smaller than $(k,t)$, and any $x\in \mc S$,
$$\frac{x_{I_{l,s}}^T L_{I_{l,s}}^+ x_{I_{l,s}}}{x^T L_G^+ x}\le 10$$
then $$\lambda_{\max}(X_{k't'})\le \frac{90d}{|E(I_{k',t'})|}$$, $\lambda_{\max}(\textbf{E}[X_{k't'}^2|Y_{kt}])\le \frac{25600\rho^2 d}{|E(I_{k',t'})|^2}$, and $$\lambda_{\max}(W_{kt})\le \sum_{(l,s) \le (k,t)} \frac{25600\rho^2 d}{|E(I_{l,s})|^2}$$ where $(k',t')$ is the lexicographic successor to $(k,t)$ and $X_{k't'} = Y_{k't'} - Y_{kt}$ as described in Theorem \ref{thm:freedman}.
\end{proposition}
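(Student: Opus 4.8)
The plan is to reduce everything to the single-iteration analysis already carried out in Proposition \ref{prop:martingale}. First I would observe that the concatenated sequence is well-defined as a stochastic process: the randomness consists of the choices of the edges $f_{k,t}$ sampled without replacement from $Z_{k,0}$ within each while-loop iteration, and (across iterations) the fact that $I_{k+1,0}$ is determined, via \Split and \Unsplit, by $I_{k,K(\cdot)}$. The key point is that at the boundary between iteration $k$ and $k+1$, the matrix $N_{k,K(\cdot)}$ computed from $I_{k,K(\cdot)}$ and the matrix $N_{k+1,0}$ computed from $I_{k+1,0}$ are equal: by the ``Electrical equivalence'' guarantees of Propositions \ref{prop:split} and \ref{prop:unsplit}, applying \Unsplit then \Split does not change $x^T L^+ x$ for vectors supported on $V(H_k)$, hence does not change $Y^T L^+ Y$ restricted to the relevant subspace, so the process does not jump at iteration boundaries. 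Therefore the martingale property $\textbf{E}[Y_{k't'} \mid \mathcal{F}_{kt}] = Y_{kt}$ follows from Proposition \ref{prop:martingale} applied within each iteration (where it was already proven that $\textbf{E}[X_{t+1}\mid Y_t] = 0$), combined with the no-jump observation at the seams.

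Next, for the bound on $\lambda_{\max}(X_{k't'})$ and on $\lambda_{\max}(\textbf{E}[X_{k't'}^2 \mid Y_{kt}])$: I would note that the hypothesis of Proposition \ref{prop:big-martingale} — that $\frac{x_{I_{l,s}}^T L_{I_{l,s}}^+ x_{I_{l,s}}}{x^T L_G^+ x}\le 10$ for all lexicographically smaller $(l,s)$ — is exactly the input condition needed to invoke Proposition \ref{prop:martingale} for the iteration $k'$ containing the step from $(k,t)$ to $(k',t')$ (taking $t$ in that proposition to be the largest index $s$ within iteration $k'$ that is still $\le$ the current position, so that the condition ``for all $s\le t$'' is met). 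Proposition \ref{prop:martingale} then directly gives $\lambda_{\max}(X_{k't'})\le \frac{90\rho d}{|E(I_{k',0})|}$; since $|E(I_{k',t'})|\le |E(I_{k',0})|$ and the statement we want is $\le \frac{90d}{|E(I_{k',t'})|}$ — wait, here I should be careful: Proposition \ref{prop:martingale} gives the bound in terms of $|E(I_{k,t})|$ in its statement but $\frac{90\rho d}{|E(I_{k,0})|}$ in its proof, and since contractions/deletions only decrease edge count, $\frac{1}{|E(I_{k',0})|}\le \frac{1}{|E(I_{k',t'})|}$ is false in the wrong direction. The resolution is that within a single iteration $|E(I_{k',t'})|$ and $|E(I_{k',0})|$ differ by at most a factor of $2$ (since at most $K(|E(I_{k',0})|)\le |E(I_{k',0})|/2$, say, edges are removed — or more directly, \Oracle's ``Size of $Z$'' and the choice $|Z'| = K$ ensure a controlled fraction is removed), so the bounds match up to absorbing constants; I would state this cleanly. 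The same reasoning transports the variance bound $\lambda_{\max}(\textbf{E}[X_{k't'}^2\mid Y_{kt}])\le \frac{25600\rho^2 d}{|E(I_{k',t'})|^2}$.

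Finally, the bound on the predictable quadratic variation $W_{kt} = \sum_{(l,s)\le(k,t)} \textbf{E}[X_{l's'}^2\mid Y_{ls}]$ follows by subadditivity of $\lambda_{\max}$ over the sum: $\lambda_{\max}(W_{kt})\le \sum_{(l,s)\le(k,t)} \lambda_{\max}(\textbf{E}[X_{l's'}^2\mid Y_{ls}])\le \sum_{(l,s)\le(k,t)} \frac{25600\rho^2 d}{|E(I_{l,s})|^2}$, which is exactly the claimed inequality. I expect the main obstacle to be bookkeeping rather than mathematics: making the indexing of ``lexicographic successor'' precise at iteration boundaries (the successor of the last $(k, K(|E(I_{k,0})|)-1)$ is the first pair $(k+1, 0)$, at which $X$ is zero because of the no-jump property), and confirming that the conditioning hypothesis propagates correctly so that Proposition \ref{prop:martingale}'s input condition is available at every step. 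Everything substantive has already been done in Proposition \ref{prop:martingale}; this proposition is the routine ``glue'' that concatenates the per-iteration martingales into one and re-expresses the bounds, and I would write it accordingly.
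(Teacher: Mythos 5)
Your proposal follows the paper's own route: the only nontrivial steps are those inside a single while-loop iteration, which are handled by citing Proposition \ref{prop:martingale}, and the seam steps from the last index of iteration $k$ to $(k+1,0)$, where the ``Electrical equivalence'' guarantees of Propositions \ref{prop:unsplit} and \ref{prop:split} give $N_{k,K(\cdot)} = M_k = N_{k+1,0}$, hence $X_{k+1,0}=0$; the bound on $W_{kt}$ then follows by summing the stepwise variance bounds, exactly as you say. The one place where your write-up goes astray is the digression about the edge-count denominators. Within an iteration each step only deletes or contracts edges, so $|E(I_{k,t+1})|\le |E(I_{k,t})|\le |E(I_{k,0})|$, and therefore $1/|E(I_{k,0})|\le 1/|E(I_{k,t+1})|$ is \emph{true} (you declare it false) and points in exactly the direction you need: replacing $|E(I_{k,t})|$ or $|E(I_{k,0})|$ by the successor's edge count $|E(I_{k',t'})|$ in the denominators can only weaken the bounds inherited from Proposition \ref{prop:martingale}, so no ``factor of $2$'' argument is required. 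That patch is also not one you could justify from the steady-oracle definition: nothing there forces $K(z)\le z/2$, and a single contraction may remove several parallel edges at once, so ``at most $K$ edges are removed'' is not automatic either. (The residual mismatch you noticed --- the proof of Proposition \ref{prop:martingale} ends with $90\rho d/|E(I_{k,0})|$ while its statement reads $90d/|E(I_{k,t})|$ --- is internal to that proposition; the paper's proof of the present proposition, like yours should, simply cites the statement.) With that digression removed, your argument coincides with the paper's.
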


\begin{proof}
Consider a pair $(k',t')$ with lexicographic predecessor $(k,t)$. If $k = k'$, then $t' = t+1$, which means that $$\textbf{E}[Y_{k't'}|Y_{kt}] = \textbf{E}[N_{k,t+1}|N_{k,t}] = N_{k,t} = Y_{kt}$$, $\lambda_{\max}(X_{k't'})\le \frac{90d}{|E(I_{k',t'})|}$, and $\lambda_{\max}(\textbf{E}[X_{k't'}^2|Y_{kt}]) \le \frac{25600\rho^2 d}{|E(I_{k',t'})|^2}$ by Proposition \ref{prop:martingale}. If $k = k'-1$, then $t' = 0$. As a result, $Y_{kt} = N_{k,t} = M_k$ by the ``Electrical equivalence'' guarantee of Proposition \ref{prop:unsplit} and $M_k = N_{k',t'} = Y_{k't'}$ by the ``Electrical equivalence'' guarantee of Proposition \ref{prop:split}. In particular, $X_{k't'} = 0$ and satisfies the desired eigenvalue bounds. The bound on $\lambda_{\max}(W_{kt})$ follows directly from the stepwise bound $\lambda_{\max}(\textbf{E}[X_{k't'}^2|Y_{kt}])$ and the definition of $W_{kt}$ as the predictable quadratic variation.
\end{proof}

Now, we are ready to prove Lemma \ref{lem:main-oracle}.

\begin{proof}[Proof of Lemma \ref{lem:main-oracle}]

\textbf{$H$ a minor of $G$.} It suffices to show that for every $k\ge 0$, $H_{k+1}$ is a minor of $H_k$. $I'$ is a minor of $I$, as $I$ is only modified by deletion and contraction. Now, we show that the unweighted version of $H_{k+1}$ can be obtained from $H_k$ by contracting each edge $e\in E(H_k)$ with an $I'$-self-loop in its pair $(e_0,e_1)\in P$ and deleting each edge $e\in E(H_k)$ with an $I'$-leaf edge in its pair. Let $H_{k+1}'$ be the result of this procedure.

We show that $H_{k+1}' = H_{k+1}$ without weights. We start by showing that $V(H_{k+1}) = V(H_{k+1}')$. Each vertex $v\in V(H_{k+1})$ corresponds to a set of vertices in $V(H_k)$ that were identified, as the ``Edges of $H$'' requirement ensures that $H_{k+1}$ contains no vertices that were added to $H_k$ by $\Split$. Since $T$ is a tree, each vertex $v\in V(H_{k+1})$ corresponds to a subtree of identified vertices in $H_k$. Since $Z$ only contains one edge for each pair in $\mc P$, the self-loop edges in $I'$ match the edges contracted to form the subtree for $v$, which means that $V(H_{k+1}) = V(H_{k+1}')$. $E(H_{k+1}) \subseteq E(H_{k+1}')$ because for every $e\in E(H_{k+1})$, $\phi^{-1}(e)$ does not contain an $I'$ self-loop or leaf by the ``Edges of $H$'' and ``$\mc P$ description'' guarantees. $E(H_{k+1}') \subseteq E(H_{k+1})$ because each $e\in E(H_{k+1}')$ does not map to a self-loop or leaf in $I'$, which means that $\phi^{-1}(e)$ exists by surjectivity of $\phi$. Therefore, $H_{k+1}' = H_{k+1}$. Since $H_{k+1}'$ is a minor of $H_k$, $H_{k+1}$ is also a minor of $H_k$, as desired.

\textbf{Number of edges.} This follows immediately from the while loop termination condition.

\textbf{Approximation bound.} Let $(k_{\tau},t_{\tau})$ be the final martingale index pair that the while loop encounters before termination. We start by obtaining a high-probability bound on $W_{k_{\tau}t_{\tau}}$ given that $T$ is drawn from the exact uniform distribution on spanning trees of $I$. By Proposition \ref{prop:big-martingale}, $$W_{k_{\tau}t_{\tau}}\le \sum_{(k,t)\le (k_{\tau},t_{\tau})} \frac{25600\rho^2d}{|E(I_{k,t})|^2}$$ The process of generating $I_{k,t+1}$ from $I_{k,t}$ does not increase the number of edges and decreases the number of edges by 1 with probability at least 1/8, by ``Leverage score stability.'' Therefore, by Azuma's Inequality, $|E(I_{k,t})|\le 2|E(G)| - c_{k,t}/8 + 10\sqrt{\log d}\sqrt{c_{k,t}}$ with probability at least $1 - 1/d^5$, where $c_{k,t}$ is the number of pairs that are lexicographically less than $(k,t)$. Therefore, as long as $|E(G)| > 20\log d$, which is true when $d > 10000000 = \Theta(1)$, $$|E(I_{k,t})|\le 2|E(G)| - c_{k,t}/16$$ with probability at least $1 - 1/d^3$ for all pairs $(k,t)$. This means that $$c_{k_{\tau},t_{\tau}}\le 32000000|E(G)|$$ and that $$W_{k_{\tau},t_{\tau}}\le \frac{32000000000 \rho^2 d}{|E(I_{k_{\tau},t_{\tau}})|}\le \ep^2/(10\log d)$$ with probability at least $1 - 1/d^3$.

Now, we apply the Matrix Freedman Inequality (Theorem \ref{thm:freedman}). Apply it to the martingale $(Y_{kt})_{k,t}$ to bound $\lambda_{\max}(Y_{k_{\tau}t_{\tau}} - Y_{00})$. By Proposition \ref{prop:big-martingale} and the termination condition for the while loop, we may set $R\gets \ep/(10\log d)\ge \frac{90d}{|E(I_{k_{\tau},t_{\tau}})|}$. By Theorem \ref{thm:freedman}, $$\Pr\left[\lambda_{\max}(Y_{k_{\tau}t_{\tau}}) \ge \ep \text{ and } \lambda_{\max}(W_{k_{\tau}t_{\tau}})\le \ep^2/(10\log d)\right]\le d\exp\left(\frac{-\ep^2/2}{\ep^2/(10\log d) + \ep^2/(30\log d)}\right)\le 1/d^2$$ Therefore, $$\Pr_{T \text{ uniform}}\left[\lambda_{\max}(Y_{k_{\tau}t_{\tau}}) \ge \ep\right]\le 1/d^2 + 1/d^3\le 2/d^2$$ Now, switch uniform spanning tree sampling to $\kappa_0$-approximate random spanning tree sampling. The total number of iterations is at most $m$, so the total TV distance of the joint distribution sampled throughout all iterations is at most $m\kappa_0$. Therefore, $$\Pr_{T \text{ $\kappa_0$-uniform}}\left[\lambda_{\max}(Y_{k_{\tau}t_{\tau}})\right]\le 2/d^2 + m\kappa_0\le 3/d^2$$ In particular, with probability at least $1 - 3/d^2$, $$\left|\frac{x_{H_{k_{\tau}}} L_{H_{k_{\tau}}}^+ x_{H_{k_{\tau}}}}{x^T L_G^+ x} - 1\right| \le \lambda_{\max}(M_{k_{\tau}}) = \lambda_{\max}(Y_{k_{\tau}t_{\tau}})\le \ep$$ for all $x\in \mc S$, as desired.

\textbf{Runtime.} By Azuma's Inequality,
$$|E(H_k)|\le |E(H_{k-1})| - K(|E(H_{k-1})|)/32\le (1 - \min_{z\ge 0} K(z)/32z) |E(H_{k-1})|$$
for all $k\le k_{\tau}$ with probability at least $1 - 1/d^2$. Therefore,
$$|E(H_k)|\le (1 - \min_{z\ge 0} K(z)/(32z))^k |E(G)|$$
which means that the termination condition is satisfied with high probability after
$$O((\log n) \max_{z\le |E(G)|} z/K(z))$$
iterations with high probability. Each iteration samples one spanning tree, calls the oracle once, and does a linear amount of additional work, yielding the desired runtime.
\end{proof}

\subsection{Slow oracle and proof of existence}

In this section, we prove Theorem \ref{thm:main-existence} by exhibiting a $(2,1)$-steady oracle $\SlowOracle(I,\mc S)$. The oracle just returns all edges in the bottom half by maximum energy fraction:

\begin{algorithm}[H]
\SetAlgoLined
\DontPrintSemicolon

    \KwIn{A graph $I$ and a subspace $\mc S\subseteq \mathbb{R}^{V(I)}$}

    \KwOut{A set $Z$ of edges satisfying the steady oracle definition}

    \Return{ all edges $e\in E(I)$ with $\max_{x\in \mc S} \frac{(x^T L_I^+ b_e)^2}{r_e (x^T L_I^+ x)}\le \frac{2\Dim(\mc S)}{|E(I)|}$}\;

\caption{$\SlowOracle$, never executed}
\end{algorithm}

To lower bound the number of edges added to $Z$, we use the following result and Markov's Inequality:

\begin{proposition}\label{prop:sum-bound}
$$\sum_{f\in E(I)} \max_{x\in \mc S} \frac{(x^T L_I^+ b_f)^2}{r_f (x^T L_I^+ x)} = \Dim(\mc S)$$
\end{proposition}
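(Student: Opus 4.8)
The plan is to diagonalize the $L_I^+$-quadratic form on $\mc S$ and then collapse the edge sum into a trace. First, exactly as in the construction preceding Proposition~\ref{prop:equiv}, I would pick a basis $\{y_i\}_{i=1}^{d}$ of $\mc S$, where $d = \Dim(\mc S)$, that is orthonormal with respect to the bilinear form $(u,v)\mapsto u^T L_I^+ v$. Such a basis exists because this form is positive definite on $\mc S$; here I use that (in our setting) $I$ is connected and $\mc S$ is orthogonal to $\mathbf 1^{V(I)}$, so $\mc S\cap\ker L_I = \{0\}$. Let $Y$ be the $|V(I)|\times d$ matrix whose $i$-th column is $y_i$, so that $Y^T L_I^+ Y$ is the $d\times d$ identity.

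Second, I would put each summand into closed form by the same Rayleigh-quotient computation as in the proof of Proposition~\ref{prop:equiv}: writing an arbitrary $x\in\mc S$ as $x = Ya$ and using $Y^T L_I^+ Y = \mathrm{Id}$ gives $x^T L_I^+ x = \|a\|_2^2$, whence
\[
\max_{x\in\mc S}\frac{(x^T L_I^+ b_f)^2}{r_f\,(x^T L_I^+ x)} \;=\; \frac{b_f^T L_I^+ Y Y^T L_I^+ b_f}{r_f}
\]
for every $f\in E(I)$. Third, I would sum over $f\in E(I)$, move the sum inside using linearity and the cyclic invariance of the trace, and apply the two standard identities $\sum_{f\in E(I)}\frac1{r_f}b_f b_f^T = \sum_{f\in E(I)} w_f b_f b_f^T = B_I^T W_I B_I = L_I$ and $L_I^+ L_I L_I^+ = L_I^+$:
\[
\sum_{f\in E(I)}\frac{b_f^T L_I^+ Y Y^T L_I^+ b_f}{r_f} = \Tr\left(Y Y^T L_I^+\Big(\textstyle\sum_{f\in E(I)} \tfrac1{r_f} b_f b_f^T\Big)L_I^+\right) = \Tr\left(Y Y^T L_I^+\right) = \Tr\left(Y^T L_I^+ Y\right) = d .
\]
Since $d = \Dim(\mc S)$, this is precisely the claimed identity.

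I do not expect a genuine obstacle here: once the closed form from the analogue of Proposition~\ref{prop:equiv} is in hand, the remainder is a one-line trace manipulation using $\sum_f w_f b_f b_f^T = L_I$ and $L_I^+ L_I L_I^+ = L_I^+$. The only point requiring care is the existence of the $L_I^+$-orthonormal basis of $\mc S$, i.e.\ nondegeneracy of $L_I^+$ on $\mc S$, which is exactly the assumption that $I$ is connected and $\mc S\perp\mathbf 1^{V(I)}$ (so that the "$\max$" in the statement is even well defined).
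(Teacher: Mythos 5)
Your proposal is correct and follows essentially the same route as the paper: choose an $L_I^+$-orthonormal basis of $\mc S$, invoke the Rayleigh-quotient identity of Proposition~\ref{prop:equiv} (applied with $G \gets I$) to rewrite each maximum as $\frac{b_f^T L_I^+ Y Y^T L_I^+ b_f}{r_f}$, and then collapse the edge sum via trace cyclicity, $\sum_f \frac{1}{r_f} b_f b_f^T = L_I$, and $L_I^+ L_I L_I^+ = L_I^+$ to obtain $\Tr(Y^T L_I^+ Y) = \Dim(\mc S)$. Your extra remark on nondegeneracy of $L_I^+$ on $\mc S$ is a fine point the paper leaves implicit, but it is the same argument.
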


\begin{proof}

Let $Y_I$ be a $V(I)\times \Dim(\mc S)$-matrix consisting of a basis $(y_i)_{i=1}^d$ for $\mc S$ with $y_i^T L_I^+ y_j = 0$ for all $i\ne j\in [\Dim(\mc S)]$ and $y_i^T L_I^+ y_i = 1$ for all $i\in [\Dim(\mc S)]$. By Proposition \ref{prop:equiv},

\begin{align*}
\sum_{f\in E(I)} \max_{x\in \mc S} \frac{(x^T L_I^+ b_f)^2}{r_f (x^T L_I^+ x)} &=  \sum_{f\in E(I)} \frac{b_f^T L_I^+ Y_I Y_I^T L_I^+ b_f}{r_f}\\
&= \sum_{f\in E(I)} \Tr\left(\frac{b_f^T L_I^+ Y Y^T L_I^+ b_f}{r_f}\right)\\
&= \sum_{f\in E(I)} \Tr\left(\frac{L_I^+ Y_I Y_I^T L_I^+ b_fb_f^T}{r_f}\right)\\
&= \Tr(L_I^+ Y_I Y_I^T)\\
&= \sum_{i=1}^d y_i^T L_I^+ y_i\\
&= \Dim(\mc S)\\
\end{align*}

as desired.
\end{proof}

Now, we prove Theorem \ref{thm:main-existence}:

\begin{proof}[Proof of Theorem \ref{thm:main-existence}]

By Lemma \ref{lem:main-oracle}, it suffices to show that $\SlowOracle$ is a $(2,1)$-steady oracle.

\textbf{Size of $Z$.} By Markov's Inequality and Proposition \ref{prop:sum-bound}, $|Z|\ge |E(I)|/2$.

\textbf{Leverage score stability.} We are only interested in $i = 0$, for which the ``Leverage score'' input condition immediately implies the ``Leverage score stability'' guarantee.

\textbf{Martingale change stability.} We are only interested in $i = 0$. The return statement specifies the ``Martingale change stability'' guarantee for $\rho = 2$.
\end{proof}

\section{Fast oracle}

In this section, we give a $(O(\log^3 n), \Omega(z/\log^3 n))$-steady oracle $\FastOracle$ that proves Theorem \ref{thm:schur-fast} when plugged into $\SSS$. To do this, we use localization \cite{SRS17} to find a set of edges whose leverage scores and martingale changes do not change much over time. We use sketching and Lemma \ref{lem:diff-apx} to find these edges efficiently. This section can be described using the \textit{flexible function} framework given in \cite{Sc17}, but we give a self-contained treatment here.

\subsection{Efficient identification of low-change edges}

$\FastOracle$ needs to find a large collection of edges whose electrical energies do not change over the course of many iterations. This collection exists by the following result:

\begin{theorem}[Theorem 1 of \cite{SRS17}]\label{thm:localization}
Let $I$ be a graph. Then for any vector $w\in \mathbb{R}^{E(I)}$, $$\sum_{e,f\in E(I)} w_ew_f \frac{|b_e^T L_I^+ b_f|}{\sqrt{r_e}\sqrt{r_f}} \le c_{\text{local}}(\log^2 n) ||w||_2^2$$ for some constant $c_{\text{local}}$.
\end{theorem}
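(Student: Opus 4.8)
\textbf{Proof idea (this is Theorem 1 of \cite{SRS17}; I sketch the argument).} Write $\Pi := W_I^{1/2} B_I L_I^+ B_I^T W_I^{1/2}\in\mathbb{R}^{E(I)\times E(I)}$, so that $\Pi_{ef} = \frac{b_e^T L_I^+ b_f}{\sqrt{r_e r_f}}$ and the left-hand side is exactly $\sum_{e,f} w_e w_f \abs{\Pi_{ef}} = w^T \abs{\Pi}\, w \le \norm{\abs{\Pi}}_{\mathrm{op}}\norm{w}_2^2$, where $\abs{\Pi}$ is the entrywise absolute value. So it suffices to prove $\norm{\abs{\Pi}}_{\mathrm{op}} = O(\log^2 n)$. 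Two structural facts drive everything. (i) Since $L_I^+ = L_I^+ L_I L_I^+$, $\Pi$ is the orthogonal projection onto $\mathrm{im}(W_I^{1/2}B_I)$, so $\Pi^2 = \Pi = \Pi^T$; hence $\sum_f \Pi_{ef}^2 = \Pi_{ee} = \texttt{lev}_I(e)\in(0,1]$ for every $e$, and $\abs{\Pi_{ef}}\le\sqrt{\texttt{lev}_I(e)\texttt{lev}_I(f)}\le 1$. (ii) By the transfer-current theorem, $\det(\Pi[F]) = \Pr_{T\sim I}[F\subseteq T]\ge 0$ for every edge set $F$; in particular $\texttt{lev}_I(e)\texttt{lev}_I(f)-\Pi_{ef}^2 = \Pr_{T\sim I}[e,f\in T]\ge 0$. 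This determinantal / negative-correlation structure of the uniform spanning tree is what makes the entrywise absolute value behave.

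First I would reduce to $O(\log n)$ dyadic scales. Using $\lambda_2(L_I)\ge\wmin/n^2$ (Lemma~\ref{lem:lambdas}), or more simply by zeroing every entry with $\abs{\Pi_{ef}} < m^{-10}$ — which changes $w^T\abs{\Pi}w$ by at most $m^{-10}\norm{w}_1^2\le m^{-9}\norm{w}_2^2$ — we may assume every nonzero entry of $\abs{\Pi}$ lies in $[m^{-10},1]$. Partition the entries into matrices $\Pi^{(j)}$, $0\le j = O(\log m) = O(\log n)$, where $\Pi^{(j)}$ keeps the entries with $\abs{\Pi_{ef}}\in(2^{-(j+1)},2^{-j}]$ and zeros the rest, so that $\abs{\Pi} = \sum_j\abs{\Pi^{(j)}}$ and $\norm{\abs{\Pi}}_{\mathrm{op}}\le\sum_j\norm{\Pi^{(j)}}_{\mathrm{op}}$. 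It therefore suffices to show $\norm{\Pi^{(j)}}_{\mathrm{op}} = O(\log n)$ for each $j$. The enabling observation here is that the row-norm identity in (i) forces $\abs{\{f:\abs{\Pi_{ef}}>2^{-(j+1)}\}}\le 4^{\,j+1}$, i.e. $\Pi^{(j)}$ is row-sparse at scale $2^{-j}$.

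The heart of the proof — and the main obstacle — is the per-scale bound $\norm{\Pi^{(j)}}_{\mathrm{op}} = O(\log n)$. This is the technical core of \cite{SRS17}, and it is genuinely nontrivial: the naive (uniform-weight) Schur test only gives the row-$\ell_1$ bound $2^{-j}\cdot 4^{j+1} = O(2^{j})$, useless for fine scales, and the Frobenius bound only gives $O(\sqrt n)$. The correct argument is a Schur test with a nonuniform test vector tuned to the leverage scores, in which the negative-correlation / Rayleigh structure of (ii) is used to control the weighted row sums $\sum_f\abs{\Pi^{(j)}_{ef}}p_f$: within a single magnitude scale, the edges $f$ populating row $e$ cannot simultaneously have large mutual entries without violating nonnegativity of some spanning-tree marginal $\det(\Pi[F])$, and after summing a geometric series over sub-scales this caps each scale's contribution at $O(\log n)$. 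Combining the $O(\log n)$ per-scale bounds over the $O(\log n)$ scales yields $\norm{\abs{\Pi}}_{\mathrm{op}} = O(\log^2 n)$; the truncation error is absorbed into the constant, giving the claimed bound with $c_{\mathrm{local}}$ an absolute constant.
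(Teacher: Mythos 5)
First, a point of comparison: the paper does not prove this statement at all --- it is imported verbatim as Theorem 1 of \cite{SRS17} and used as a black box, so there is no internal proof to measure your sketch against; the intended ``proof'' here is the citation.

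Judged as a standalone proof, your write-up has a genuine gap exactly where you acknowledge one. The reductions you do carry out are correct and standard: $\Pi := W_I^{1/2}B_I L_I^+ B_I^T W_I^{1/2}$ is indeed the projection onto $\mathrm{im}(W_I^{1/2}B_I)$, so $\sum_f \Pi_{ef}^2 = \Pi_{ee} = \texttt{lev}_I(e)$ (and $|\Pi_{ef}|\le \sqrt{\texttt{lev}_I(e)\texttt{lev}_I(f)}$ already follows from Cauchy--Schwarz, without transfer currents); the LHS is at most $\norm{\,|\Pi|\,}_{\mathrm{op}}\norm{w}_2^2$; the $m^{-10}$ truncation and the decomposition into $O(\log n)$ dyadic magnitude classes are fine, as is the row-sparsity count $4^{j+1}$ per scale. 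But the entire content of the theorem is the per-scale claim $\norm{\Pi^{(j)}}_{\mathrm{op}} = O(\log n)$, and for that you offer only the assertion that a ``Schur test with a nonuniform test vector tuned to the leverage scores'' works because the Burton--Pemantle nonnegativity $\det(\Pi[F])\ge 0$ prevents many large mutual entries. That is not an argument: the $2\times 2$ and small-minor constraints you invoke only give pairwise bounds of the type already implied by the projection structure, and you neither specify the test vector, nor verify the weighted row sums, nor explain how negative correlation converts into a bound that beats the $O(2^j)$ Schur-test estimate at fine scales. Since you yourself label this step ``the technical core of \cite{SRS17},'' the proposal is a plausible reading of how such a proof might be organized (and it is not clear it matches the actual route taken in \cite{SRS17}), but it does not constitute a proof. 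In the context of this paper the correct move is simply to cite \cite{SRS17}; if you do want a self-contained argument, the per-scale operator-norm lemma is the statement you must actually prove.
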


\noindent Plugging in $w \gets \textbf{1}^{E(I)}$ shows that at least half of the edges $e\in E(I)$, $$\sum_{f\in E(I)} \frac{|b_e^T L_I^+ b_f|}{\sqrt{r_e}\sqrt{r_f}}\le 2c_{\text{local}}\log^2 n$$ We decrease this bound by subsampling the edges in $I$ to obtain $Z$. To identify the edges with low sum, we use matrix sketching:

\begin{theorem}[Theorem 3 of \cite{I06} stated for $\ell_1$]\label{thm:sketch}
An efficiently computable, $\text{polylog}(d)$-space linear sketch exists for $\ell_1$ norms. That is, given a $d\in \mathbb{Z}_{\ge 1}$, $\delta\in (0,1)$, and $\ep\in (0,1)$, there is a matrix $C = \SketchMatrix(d,\delta,\ep)\in \mathbb{R}^{l\times d}$ and an algorithm $\RecoverNorm(s,d,\delta,\ep)$ with the following properties:

\begin{itemize}
\item (Approximation) For any vector $v\in \mathbb{R}^d$, with probability at least $1 - \delta$ over the randomness of $\SketchMatrix$, the value $r = \RecoverNorm(Cv,d,\delta,\ep)$ is as follows: $$(1-\ep)||v||_1\le r\le (1+\ep)||v||_1$$
\item $l = c/\ep^2\log(1/\delta)$
\item (Runtime) $\SketchMatrix$ and $\RecoverNorm$ take $O(ld)$ and $\text{poly}(l)$ time respectively.
\end{itemize}
\end{theorem}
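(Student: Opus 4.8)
The plan is to instantiate the classical $p$-stable linear sketch of \cite{I06} at $p = 1$, where the relevant stable law is the standard Cauchy distribution. First I would let $\SketchMatrix(d,\delta,\ep)$ output the $l \times d$ matrix $C$ with $l = c/\ep^2\log(1/\delta)$ rows whose entries $C_{ij}$ are i.i.d.\ standard Cauchy samples (modulo the finite-precision issue discussed below). The structural fact that makes this work is $1$-stability: for any fixed $v \in \mathbb{R}^d$, the $i$-th coordinate $(Cv)_i = \sum_j C_{ij} v_j$ is distributed exactly as $\|v\|_1 \cdot Z_i$ where $Z_i$ is standard Cauchy, and $Z_1,\dots,Z_l$ are mutually independent. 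Thus the sketch $s = Cv$ is, up to the scalar $\|v\|_1$ we want to recover, a vector of $l$ i.i.d.\ Cauchy samples, and in particular $C$ is linear, so it composes across stream updates as claimed.

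Since the Cauchy law has no finite first moment, averaging the $|s_i|$ fails; instead $\RecoverNorm(s,d,\delta,\ep)$ returns the sample median $\mathrm{median}_{i\in[l]} |s_i|$. The population median of $|Z|$ for a standard Cauchy is exactly $1$, because $\Pr[|Z|\le t] = \tfrac{2}{\pi}\arctan t$ equals $\tfrac12$ at $t=1$, so this sample median is an estimator of $\|v\|_1$. To get the $(1\pm\ep)$ guarantee I would run the standard quantile-concentration argument: the CDF $F(t) := \Pr[|Z|\le t]$ has density $\tfrac{2}{\pi(1+t^2)}$, which is bounded away from $0$ on a neighborhood of $t=1$, so $F(1-\ep) \le \tfrac12 - \Omega(\ep)$ and $F(1+\ep)\ge \tfrac12 + \Omega(\ep)$; a Chernoff bound on the binomial counting how many of the $l$ samples fall below $(1\pm\ep)\|v\|_1$ then shows the sample median lies strictly between $(1-\ep)\|v\|_1$ and $(1+\ep)\|v\|_1$ except with probability $2\exp(-\Omega(\ep^2 l)) \le \delta$, once the constant $c$ in $l = c/\ep^2\log(1/\delta)$ is large enough. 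For runtime, $\SketchMatrix$ writes $ld$ entries in $O(ld)$ time, and $\RecoverNorm$ only sorts an $l$-vector to read off the median, costing $O(l\log l) = \mathrm{poly}(l)$.

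The one genuinely delicate point — and the part I expect to need the most care — is the ``$\mathrm{polylog}(d)$-space'' / bounded-precision claim: exact Cauchy variates require unboundedly many bits, so each $C_{ij}$ must be truncated to $O(\log(d/(\ep\delta)))$ bits, and the fully independent entries of $C$ replaced by the output of the pseudorandom generator of \cite{I06}, which fools the (small-space) median computation, so that $C$ is specified by only $\mathrm{polylog}(d)$ truly random bits. I would check that each of these two modifications shifts $F(1\pm\ep)$ by only an $o(\ep)$ additive amount — rounding perturbs every $|s_i|$ by a negligible additive error absorbed into the $\ep$ slack, and the PRG error is subsumed by the same slack — so the Chernoff step goes through unchanged, giving the stated bounds.
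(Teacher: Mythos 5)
Your proposal is correct and follows essentially the same route as the source this statement rests on: the paper does not prove Theorem \ref{thm:sketch} but imports it as Theorem 3 of \cite{I06}, and your reconstruction — the $1$-stable (Cauchy) linear sketch, median-of-$|(Cv)_i|$ recovery with the quantile/Chernoff argument giving failure probability $2\exp(-\Omega(\ep^2 l))\le \delta$ for $l = c\ep^{-2}\log(1/\delta)$, and truncation plus a PRG to obtain the $\mathrm{polylog}(d)$-space claim — is exactly Indyk's argument. The only delicate point, bounded precision and limited randomness, is the one you already flag and is handled in \cite{I06} precisely as you describe.
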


\subsubsection{Approximation of column norms}

Consider a graph $I$ and a set $W\subseteq E(I)$. We can obtain multiplicative approximations the quantities $\sum_{f\in W} \frac{|b_e^T L_I^+ b_f|}{\sqrt{r_e}\sqrt{r_f}}$ for all $e\in W$ in near-linear time using Theorem \ref{thm:sketch}. However, we actually need to multiplicatively approximate the quantities $\sum_{f\in W, f\ne e} \frac{|b_e^T L_I^+ b_f|}{\sqrt{r_e}\sqrt{r_f}}$. In particular, we need to estimate the $\ell_1$ norm of the rows of the matrix $M$ with $M_{ef} := \frac{|b_e^T L_I^+ b_f|}{\sqrt{r_e}\sqrt{r_f}}$ with the diagonal left out. To do this, we tile the matrix as described in Section 11.3.2 of \cite{Sc17}:

\begin{itemize}
\item Do $\Theta(\log n)$ times:
    \begin{itemize}
    \item Pick a random balanced partition $(W_0,W_1)$ of $W$
    \item For each $e\in W_0$, approximate $a_e\gets \sum_{f\in Z_1} \frac{|b_e^T L_I^+ b_f|}{\sqrt{r_e}\sqrt{r_f}}$ using sketching
    \end{itemize}
\item For each $e\in W$, average the $a_e$s together and scale up the average by a factor of 4 to obtain an estimate for $\sum_{f\ne e\in W} \frac{|b_e^T L_I^+ b_f|}{\sqrt{r_e}\sqrt{r_f}}$
\end{itemize}

The expected contribution of each off-diagonal entry is 1, while no diagonal entry can contribute. After $\Theta(\log n)$ trials, the averages concentrate by Chernoff and a union bound. Now, we formally implement this idea:

\begin{proposition}\label{prop:estimate-offdiag}
There is a near-linear time algorithm $(a_e)_{e\in W}\gets \ColumnApx(I,W)$ that takes a graph $I$ and a set of edges $W\subseteq E(I)$ and returns estimates $a_e$ for which $$a_e/2\le \sum_{f\ne e\in W} \frac{|b_e^T L_I^+ b_f|}{\sqrt{r_e}\sqrt{r_f}}\le 3a_e/2$$ for all $e\in W$.
\end{proposition}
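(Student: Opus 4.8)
The plan is to realize each column-sum we need as an honest $\ell_1$ norm of a \emph{linear} image of an incidence vector, so that the linear $\ell_1$-sketch of Theorem~\ref{thm:sketch} can be pre-composed with a Laplacian solve. Write $M_{ef}:=|b_e^T L_I^+ b_f|/(\sqrt{r_e}\sqrt{r_f})$ and $S_e:=\sum_{f\ne e\in W}M_{ef}$ for the target quantity. For a subset $W_1\subseteq W$, let $A_{W_1}\in\mathbb{R}^{W_1\times V(I)}$ have the row $b_f^T/\sqrt{r_f}$ for each $f\in W_1$; then $\sqrt{r_e}\sum_{f\in W_1}M_{ef}=\|A_{W_1}L_I^+ b_e\|_1$ exactly (the absolute values live outside the linear map, which is the whole point of keeping $A_{W_1}$ signed). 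Hence if $C=\SketchMatrix(|W_1|,\delta,\ep_0)\in\mathbb{R}^{l\times|W_1|}$ with $l=\text{polylog}(n)$, then the single matrix $CA_{W_1}L_I^+\in\mathbb{R}^{l\times V(I)}$ carries the sketch of $A_{W_1}L_I^+ b_e$ for \emph{every} edge $e$ at once: the sketch for $e$ is the difference of the two columns of $CA_{W_1}L_I^+$ indexed by $e$'s endpoints, after which $\RecoverNorm$ returns a $(1\pm\ep_0)$-approximation of $\sqrt{r_e}\sum_{f\in W_1}M_{ef}$. We form $CA_{W_1}L_I^+$ by calling $\LaplSolve$ (Lemma~\ref{lem:solve}) on the $l$ columns of $(CA_{W_1})^T$ (near-linear total), then make one $\RecoverNorm$ call of $\text{poly}(l)$ time per edge.

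Given this, $\ColumnApx(I,W)$ repeats the following $T=\Theta(\log n)$ times: draw a random partition $(W_0,W_1)$ of $W$ by assigning each edge independently and uniformly to one side (so it is balanced in expectation), compute as above a number $\hat a_e^{(t)}\in[(1-\ep_0)\sum_{f\in W_1}M_{ef},\,(1+\ep_0)\sum_{f\in W_1}M_{ef}]$ for every $e\in W_0$, and set $\hat a_e^{(t)}:=0$ for $e\in W_1$; finally output $a_e:=\frac{4}{T}\sum_{t=1}^T\hat a_e^{(t)}$. Correctness is a Chernoff argument: fix $e$ and let $\tilde a_e^{(t)}:=\sum_{f\in W_1^{(t)}}M_{ef}$ if $e\in W_0^{(t)}$ and $0$ otherwise, so that $\mathbb{E}[\tilde a_e^{(t)}]=\sum_{f\ne e}M_{ef}\Pr[e\in W_0,f\in W_1]=\tfrac14 S_e$ (the diagonal term never contributes). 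Since $4\tilde a_e^{(t)}\in[0,4S_e]$, Hoeffding over the $T$ independent trials gives $\frac{4}{T}\sum_t\tilde a_e^{(t)}\in[(1-\ep_0)S_e,(1+\ep_0)S_e]$ with probability $1-n^{-\Omega(1)}$ once $T$ is a large enough multiple of $\log n$; combining with the per-trial sketch guarantee (taking $\delta=n^{-\Theta(1)}$, which only changes $l$ by a constant) yields $a_e\in[(1-\ep_0)^2 S_e,(1+\ep_0)^2 S_e]$. A union bound over the $\le m$ edges $e\in W$ and the $\le mT$ sketch failure events makes all of these hold with high probability, and choosing $\ep_0=1/10$ gives $a_e/2\le S_e\le 3a_e/2$.

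The part that needs care is error composition rather than any single hard step. $\LaplSolve$ applies $L_I^+$ only approximately, so we actually compute $CA_{W_1}\tilde L$ with $\tilde L$ a columnwise approximation of $L_I^+$; the right-hand sides $(CA_{W_1})^T$ have $\text{poly}(n)$-bounded entries (using the standard $\wmin,\wmax$ bounds on $1/\sqrt{r_f}$ and the fact that, with high probability, the sketch matrix of Theorem~\ref{thm:sketch} has $\text{poly}(n)$-bounded entries after harmless truncation), so Lemma~\ref{lem:lnorm} converts the $L_I$-norm solver guarantee into an $\ell_2$ guarantee, and requesting inverse-polynomial accuracy from the solver makes this error negligible against the $\ep_0$ slack above, at the cost of only an $O(\log n)$ factor in the solver running time. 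The total running time is then $T$ trials of $\big(l=\text{polylog}(n)$ Laplacian solves at $O(m\,\text{polylog}(n))$ each, plus $|W_0|\le m$ calls to $\RecoverNorm$ at $\text{poly}(l)$ each$\big)=O(m\,\text{polylog}(n))$, i.e.\ near-linear. Finally, if $|W|$ is below an absolute constant one skips the sketch and computes all $O(1)$ values $M_{ef}$ directly with $O(1)$ Laplacian solves.
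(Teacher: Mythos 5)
Your proposal is correct and is essentially the paper's own construction: the same random bipartition of $W$ repeated $\Theta(\log n)$ times, the same use of the $\ell_1$ sketch composed with $O(\log n)$ Laplacian solves to approximate $\sum_{f\in W_1}|b_e^T L_I^+ b_f|/(\sqrt{r_e}\sqrt{r_f})$ for all $e\in W_0$ at once, and the same rescaling by $4$ of the trial average. The only (harmless) differences are that you apply Hoeffding per edge to the trial sums where the paper applies Chernoff to the per-pair indicators $Y_{ef}^{(k)}$ and then sums, and that you explicitly track the $\LaplSolve$ error composition, which the paper's proof leaves implicit.
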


\begin{algorithm}[H]
\SetAlgoLined
\DontPrintSemicolon
\caption{$\ColumnApx(I,W)$}

    \KwIn{a graph $I$ and $W\subseteq E(I)$}

    \KwOut{approximations to the values $\{\sum_{f\ne e\in W} \frac{|b_e^T L_I^+ b_f|}{\sqrt{r_e}\sqrt{r_f}}\}_{e\in W}$}

    $K\gets 100\log n$\;

    $\kappa_e\gets 0$ for each $e\in W$\;

    \ForEach{$k\gets 1,2,\hdots,K$}{

        $W_0,W_1\gets $ partition of $W$ with $e\in W_0$ or $e\in W_1$ i.i.d. with probability 1/2\;

        $C\gets \SketchMatrix(|W_1|,1/n^6,1/4)$\; \label{line:sketch-mat}

        $D\gets V(I)\times |W_1|$ matrix of columns $b_f/\sqrt{r_f}$ for $f\in W_1$\;

        $U\gets L_I^+ DC^T$\; \label{line:compute-u}

        \ForEach{$e\in W_0$}{

            Increase $\kappa_e$ by $\RecoverNorm(U^T(b_e/\sqrt{r_e}),|W_1|,1/n^6,1/4)$\; \label{line:recover-norm}

        }

    }

    \Return{$(4\kappa_e/K)_{e\in W}$}
\end{algorithm}

\begin{proof}[Proof of Proposition \ref{prop:estimate-offdiag}]

\textbf{Approximation.} Let $Y_{ef}^{(k)}$ be the indicator variable of the event $\{e\in W_0 \text{ and } f\in W_1 \text{ in iteration $k$}\}$. By the ``Approximation'' guarantee of Theorem \ref{thm:sketch}, at the end of the foreach loop in $\ColumnApx$,

$$\kappa_e\in [3/4,5/4]\left(\sum_{f\in W} \frac{|b_e^T L_I^+ b_f|}{\sqrt{r_e}\sqrt{r_f}}\left(\sum_{k=1}^K Y_{ef}^{(k)}\right)\right)$$

for each $e\in W$. Since $Y_{ee}^{(k)} = 0$ for all $k$ and $e\in W$,

\begin{align*}
&\sum_{f\in W} \frac{|b_e^T L_I^+ b_f|}{\sqrt{r_e}\sqrt{r_f}}\left(\sum_{k=1}^K Y_{ef}^{(k)}\right)\\
&= \sum_{f\ne e\in W} \frac{|b_e^T L_I^+ b_f|}{\sqrt{r_e}\sqrt{r_f}}\left(\sum_{k=1}^K Y_{ef}^{(k)}\right)
\end{align*}

Notice that for $e\ne f$, $\{Y_{ef}^{(k)}\}_k$ is a family of independent Bernoullis with mean 1/4. Therefore, by Chernoff bounds and our choice of $K$, $K(1/4)(7/8)\le \sum_{k=1}^K Y_{ef}^{(k)} \le K(1/4)(9/8)$ for all $e\ne f$ with probability at least $1 - 1/n^5$. As a result,

$$\kappa_e\in \frac{K}{4}[1/2,3/2]\left(\sum_{f\ne e\in W} \frac{|b_e^T L_I^+ b_f|}{\sqrt{r_e}\sqrt{r_f}}\right)$$

with high probability, as desired.

\textbf{Runtime.} Lines \ref{line:sketch-mat} and \ref{line:recover-norm} contribute at most $\tilde{O}(|E(I)|)$ to the runtime of $\ColumnApx$ by the ``Runtime'' guarantee of Theorem \ref{thm:sketch}. Line \ref{line:compute-u} only takes $\tilde{O}(|E(I)|)$ time to compute $U$ because $C^T$ only has $O(\log n)$ columns. All other lines take linear time, so $\ColumnApx$ takes near-linear time.
\end{proof}

\subsubsection{Construction of concentrated edges}

Now, we subsample localized sets:

\begin{proposition}\label{prop:extra-local}
Given a graph $I$ and $\gamma\in (0,1)$, there is a set of edges $W\subseteq E(I)$ with two properties:

\begin{itemize}
\item (Size) $|W|\ge (\gamma / 4)|E(I)|$
\item (Value) For all $e\in W$, $\sum_{f\ne e\in W} \frac{|b_e^T L_I^+ b_f|}{\sqrt{r_e}\sqrt{r_f}}\le \psi$ for all $e\in W$, where $\psi := 100c_{\text{local}}\gamma (\log^2 n)$
\end{itemize}

Furthermore, there is an $\tilde{O}(|E(I)|/\gamma)$-expected time algorithm $\Subsample(I,\gamma)$ that produces $W$.
\end{proposition}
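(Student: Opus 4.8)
The plan is to build $W$ by subsampling and then pruning. First form a random $W'\subseteq E(I)$ by including each edge independently with probability $\gamma$. Applying Theorem~\ref{thm:localization} with $w=\textbf{1}^{E(I)}$ gives $\sum_{e,f\in E(I)}\frac{|b_e^T L_I^+ b_f|}{\sqrt{r_e}\sqrt{r_f}}\le c_{\text{local}}(\log^2 n)|E(I)|$, so by linearity of expectation the random quantity $T:=\sum_{e\ne f\in W'}\frac{|b_e^T L_I^+ b_f|}{\sqrt{r_e}\sqrt{r_f}}$ satisfies $\textbf{E}[T]\le \gamma^2 c_{\text{local}}(\log^2 n)|E(I)|$, while $\textbf{E}[|W'|]=\gamma|E(I)|$. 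Writing $A_e:=\sum_{f\ne e\in W'}\frac{|b_e^T L_I^+ b_f|}{\sqrt{r_e}\sqrt{r_f}}$ and $B:=\{e\in W':A_e>\psi\}$, the pruned set $W:=W'\setminus B$ satisfies the ``Value'' property for free: for $e\in W$, $\sum_{f\ne e\in W}\frac{|b_e^T L_I^+ b_f|}{\sqrt{r_e}\sqrt{r_f}}\le A_e\le \psi$.

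For the ``Size'' property, note $T\ge\sum_{e\in B}A_e\ge \psi|B|$, so $|B|\le T/\psi$ and hence $\textbf{E}[|B|]\le \textbf{E}[T]/\psi\le \gamma|E(I)|/100$ by the choice $\psi=100 c_{\text{local}}\gamma(\log^2 n)$. Therefore $\textbf{E}[|W|]=\textbf{E}[|W'|]-\textbf{E}[|B|]\ge (99/100)\gamma|E(I)|$. Since $|W|\le|E(I)|$ always holds, some outcome achieves $|W|\ge\textbf{E}[|W|]>(\gamma/4)|E(I)|$, which proves the existence statement.

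To implement $\Subsample(I,\gamma)$, sample $W'$ as above, call $\ColumnApx(I,W')$ (Proposition~\ref{prop:estimate-offdiag}) to get estimates $a_e$ with $a_e/2\le A_e\le 3a_e/2$, delete every $e$ with $a_e>\psi/2$ to form $W$, and accept if $|W|\ge(\gamma/4)|E(I)|$, restarting otherwise. A kept edge has $A_e\le (3/2)a_e\le (3/4)\psi<\psi$, so ``Value'' still holds; a deleted edge has $A_e\ge a_e/2>\psi/4$, so the deleted set has size at most $4T/\psi$ and $\textbf{E}[|W|]\ge (24/25)\gamma|E(I)|$. A reverse-Markov argument (using $|W|\le|E(I)|$, and $|W|<(\gamma/4)|E(I)|$ whenever a trial fails) then shows each trial succeeds with probability at least $(24/25)\gamma-\gamma/4=\Omega(\gamma)$, so the expected number of trials is $O(1/\gamma)$. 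Each trial costs $\tilde{O}(|E(I)|)$, dominated by $\ColumnApx$, for a total expected running time of $\tilde{O}(|E(I)|/\gamma)$; a union bound over the (polynomially many, in expectation) invocations of $\ColumnApx$ guarantees that the $W$ finally returned satisfies both properties with high probability.

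The only thing that needs care is the bookkeeping between the subsampling rate $\gamma$, the threshold $\psi$, and the constant-factor error of $\ColumnApx$: the slack built into $\psi$ (the factor $100$) and into the size target (the factor $1/4$) exists precisely to absorb the $[1/2,3/2]$ estimation error and to keep the per-trial success probability a clean $\Omega(\gamma)$. I do not expect a genuinely hard step here — electrical localization (Theorem~\ref{thm:localization}) supplies the one nontrivial input, and everything else is a first-moment computation.
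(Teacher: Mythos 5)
Your proposal is correct and takes essentially the same route as the paper: subsample edges at rate $\Theta(\gamma)$, prune using \ColumnApx with threshold $\psi/2$ (the factor-$100$ slack in $\psi$ absorbing the $[1/2,3/2]$ estimation error), and restart until the size test passes, with a first-moment plus reverse-Markov argument giving an $\Omega(\gamma)$ per-trial success probability. The only cosmetic difference is in the counting: you bound the aggregate $T=\sum_{e\in W'}A_e$ and bound the pruned set by $T/\psi$, whereas the paper conditions on the half of edges whose full row sums are at most $2c_{\text{local}}\log^2 n$ and applies Markov per edge; both yield $\textbf{E}[|W|]=\Omega(\gamma|E(I)|)$ and the same conclusion.
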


\begin{algorithm}[H]
\DontPrintSemicolon
\SetAlgoLined
\caption{$\Subsample(I,\gamma)$}

    \While{$W$ does not satisfy Proposition \ref{prop:extra-local}}{

        $W_0\gets $ random subset of $E(I)$, with each edge of $e\in E(I)$ included i.i.d. with probability $2\gamma$\;

        $(a_e)_{e\in W_0}\gets \ColumnApx(I,W_0)$\;

        $W\gets $ set of edges $e\in W_0$ with $a_e\le \psi/2$\;

    }

    \Return{$W$}

\end{algorithm}

\begin{proof}

We show that each iteration of the while loop terminates with probability at least $1/\text{polylog}(n)$. As a result, only $\text{polylog}(n)$ iterations are required to find the desired set. We do this by setting up an intermediate family of subsets of $E(I)$ to obtain $W$.

\textbf{Size.} Let $X_1\subseteq E(I)$ be the set of edges $e$ with $\sum_{f\in E(I)} \frac{|b_e^T L_I^+ b_f|}{\sqrt{r_e}\sqrt{r_f}}\le 2c_{\text{local}}\log^2 n$. By Theorem \ref{thm:localization}, $|X_1|\ge |E(I)|/2$.

Let $W_1 := X_1\cap W_0$. $W_1$ can alternatively be sampled by sampling $W_1$ from $X_1$, including each element of $X_1$ in $W_1$ i.i.d. with probability $2\gamma$. Furthermore,

\begin{align*}
\textbf{E}_{W_1}\left[\sum_{f\ne e\in W_0} \frac{|b_e^T L_I^+ b_f|}{\sqrt{r_e}\sqrt{r_f}} \Big| e\in W_1\right] &= \textbf{E}_{W_1}\left[\sum_{f\ne e\in W_0}\frac{|b_e^T L_I^+ b_f|}{\sqrt{r_e}\sqrt{r_f}}\right]\\
&= 2\gamma \sum_{f\ne e\in E(I)} \frac{|b_e^T L_I^+ b_f|}{\sqrt{r_e}\sqrt{r_f}}\\
&\le 4\gamma c_{\text{local}} (\log^2 n)\\
\end{align*}

\noindent By the approximation upper bound for $a_e$ and Markov's Inequality,

\begin{align*}
\Pr_{W_1}[e\notin W | e\in W_1] &\le \Pr_{W_1}[a_e > \psi/2 | e\in W_1]\\
&\le \Pr_{W_1}\left[\sum_{f\ne e\in E(I)} \frac{|b_e^T L_I^+ b_f|}{\sqrt{r_e}\sqrt{r_f}} > \psi/4 | e\in W_1\right]\\
&\le \frac{4\gamma c_{\text{local}} (\log^2 n)}{\psi/4}\\
&\le 1/2\\
\end{align*}

\noindent for every $e\in X_1$. Therefore, $$\textbf{E}[|W|] > (1/2)\textbf{E}[|W_1|] = \gamma |X_1| \ge \gamma |E(I)|/2$$ Since $0\le |W|\le |E(I)|$, $|W|\ge \gamma |E(I)|/4$ with probability at least $\gamma/4$, as desired.

\textbf{Value.} By the upper bound on $a_e$ due to Proposition \ref{prop:estimate-offdiag}, all edges $e\in W$ have the property that $$\sum_{f\ne e\in W} \frac{|b_e^T L_I^+ b_f|}{\sqrt{r_e}\sqrt{r_f}} \le \sum_{f\ne e\in W_0} \frac{|b_e^T L_I^+ b_f|}{\sqrt{r_e}\sqrt{r_f}}\le \psi$$ as desired.

\textbf{Runtime.} Each iteration of the while loop succeeds with probability at least $\gamma/4$, as discussed in the ``Size'' analysis. Each iteration takes $\tilde{O}(|E(I)|)$ time by the runtime guarantee for $\ColumnApx$. Therefore, the expected overall runtime is $\tilde{O}(|E(I)|/\gamma)$.

\end{proof}

\subsection{$\FastOracle$}

We now implement the $(\Theta(\log^3 n),\Theta(z/\log^3 n))$-steady oracle $\FastOracle$. It starts by finding a set $W$ guaranteed by Proposition \ref{prop:extra-local} with $\gamma = \Theta(1/\log^3 n)$. It then further restricts $W$ down to the set of edges satisfying ``Martingale change stability'' for $I_0$ and returns that set. The ``Value'' guarantee of Proposition \ref{prop:extra-local} ensures that these edges continue to satisfy the ``Martingale change stability'' guarantee even after conditioning on edges in $Z$.

\begin{algorithm}[H]
\SetAlgoLined
\DontPrintSemicolon
\caption{$\FastOracle(I,\mc S)$}

    \KwIn{a graph $I$ with leverage scores in $[3/16,13/16]$ and a subspace $\mc S\subseteq V(I)$ with $\mc S := \mathbb{R}^S\times \textbf{0}^{V(I)\setminus S}$ for some $S\subseteq V(I)$}

    \KwOut{a set $Z\subseteq E(I)$ satisfying the steady oracle guarantees}

    $W\gets \Subsample(I,\gamma)$, where $\gamma := 1/(100000000 c_{\text{local}} (\log^3 n)))$\;

    $\{\nu_e\}_{e\in E(I)}\gets \DiffApx(I,S,1/4,1/m^5)$\;

    \Return{all $e\in W$ for which $\nu_e\le \frac{4|S|}{|W|}$}

\end{algorithm}

To ensure that $\DiffApx$ is applicable, note the following equivalence to what its approximating and the quantity in the ``Martingale change stability'' guarantee:

\begin{proposition}\label{prop:fast-equiv}
$$\max_{x\in \mathbb{R}^S} \frac{(x^T L_H^+ b_f)^2}{r_f(x^T L_H^+ x)} = \frac{b_f^T L_H^+ b_f}{r_f} - \frac{b_f^T L_{H/S}^+ b_f}{r_f}$$
\end{proposition}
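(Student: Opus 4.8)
The plan is to recognize the left-hand side as a maximum of a Rayleigh-type quotient over $\mathbb{R}^S$ and evaluate it via the pseudoinverse of the Schur complement, then match that against the right-hand side using the Sherman--Morrison/contraction formula. Concretely, writing a vector $x \in \mathbb{R}^S$ (extended by zeros outside $S$), the quantity $x^T L_H^+ b_f$ depends on $x$ only through the $S$-restricted block $(L_H^+)_{S,S}$, and $x^T L_H^+ x = x^T (L_H^+)_{S,S} x$ as well, provided we restrict to $x$ orthogonal to $\mathbf{1}^S$ (otherwise the quotient denominator involves the component of $x$ in the kernel direction, which is handled by the $I - \frac1{|S|}J$ projection). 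So I would first reduce the maximization to $\max_{x \perp \mathbf 1^S} \frac{((b_f)_S^T (L_H^+)_{S,S} x)^2}{r_f \, x^T (L_H^+)_{S,S} x}$, and by the standard fact that $\max_x \frac{(c^T x)^2}{x^T A x} = c^T A^+ c$ (for $c$ in the range of the PSD matrix $A$), this equals $\frac{(b_f)_S^T (L_H^+)_{S,S}^+ (b_f)_S}{r_f}$.

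Next I would invoke Fact~\ref{fact:schurinv}: $(I - \frac1{|S|}J)(L_H^+)_{S,S}(I - \frac1{|S|}J) = (SC(L_H,S))^+$. Since $b_f^T L_H^+$ applied to the all-ones direction vanishes (as $L_H^+ \mathbf 1 = 0$, though here the relevant statement is that $(b_f)_S$, suitably interpreted, is already orthogonal to $\mathbf 1^S$ when $f$ has both endpoints in $S$ — and if not, one must be slightly more careful about what $b_f$ restricted to $S$ means), the projections act as the identity on the relevant vectors, so $\frac{(b_f)_S^T (L_H^+)_{S,S}^+ (b_f)_S}{r_f} = \frac{b_f^T SC(L_H,S) b_f}{r_f}$ — wait, that is the wrong direction; I actually want $\frac{(b_f)_S^T (L_H^+)_{S,S}^+ (b_f)_S}{r_f}$, and since $(L_H^+)_{S,S}^+ = SC(L_H,S)$ up to the rank-one correction, this gives $\frac{b_f^T (SC(L_H,S)) b_f}{r_f}$. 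Hmm — this needs reconciling with the claim, which has pseudoinverses on the right. Let me instead argue directly: the right-hand side $\frac{b_f^T L_H^+ b_f - b_f^T L_{H/S}^+ b_f}{r_f}$ is, by the repeated-contraction interpretation, exactly the drop in effective resistance of $f$ when $S$ is identified, and this drop equals the energy that the electrical flow routing $b_f$ in $H$ sends "through $S$," which is precisely $\max_{x \in \mathbb R^S}$ of the normalized correlation above. So the cleanest route is: (1) express $b_f^T L_H^+ b_f - b_f^T L_{H/S}^+ b_f$ as $b_f^T(L_H^+ - L_{H/S}^+)b_f$ where identifying $S$ is a sequence of edge contractions of a spanning tree of $S$ inside $H$ (adding zero-resistance edges), (2) apply the Schur complement identity $L_{H/S}^+ = $ (projection of) $SC(L_H, V(H)\setminus S \cup \{s\})^+$ and Fact~\ref{fact:schurinv} to rewrite $L_H^+ - L_{H/S}^+$ restricted to vectors supported off the identified vertex in terms of $(L_H^+)_{S,S}$, and (3) identify the resulting quadratic form with the maximization.

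The step I expect to be the main obstacle is carefully handling the kernel/all-ones directions and the precise meaning of "restricting to $S$": $b_f$ may have one or both endpoints outside $S$, and $L_{H/S}^+$ lives on a vertex set where $S$ is collapsed to one vertex, so the identity $b_f^T L_{H/S}^+ b_f$ must be interpreted via the contraction map. I would resolve this by working throughout with the quadratic form on $\mathbb R^{V(H)}/\mathbf 1$ and using Fact~\ref{fact:schurinv} as the bridge, noting that both sides are manifestly invariant under adding multiples of $\mathbf 1$ to $x$, so the $(I - \frac1{|S|}J)$ projectors are harmless. The actual algebra — expanding $\max_x \frac{(c^Tx)^2}{x^TAx} = c^TA^+c$ and plugging in $A = (L_H^+)_{S,S}$, $c = (L_H^+ b_f)_S$, then simplifying $(L_H^+ b_f)_S^T (L_H^+)_{S,S}^+ (L_H^+ b_f)_S$ using that $(L_H^+)_{S,S}^+ = SC(L_H,S)$ acts as a near-inverse — is routine once the projection bookkeeping is pinned down, and should collapse to exactly $\frac{b_f^T L_H^+ b_f - b_f^T L_{H/S}^+ b_f}{r_f}$ by Fact~\ref{fact:schurinv} applied with $S$ replaced by $V(H) \setminus S$ together with the contraction-as-Schur-complement correspondence.
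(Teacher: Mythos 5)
Your first half essentially parallels the paper's: restrict to $x\perp\mathbf 1$, evaluate the Rayleigh quotient as a quadratic form, and use Fact~\ref{fact:schurinv} to turn it into $\frac{1}{r_f}\,(L_H^+b_f)_S^T\,SC(L_H,S)\,(L_H^+b_f)_S$ (the paper writes this as $\frac{1}{r_f}b_f^TL_H^+C(C^TL_H^+C)^{-1}C^TL_H^+b_f$ with $C$ a star incidence matrix on $S$). Two bookkeeping slips there are fixable but real: the linear form is $x^T(L_H^+b_f)_S$, not $(b_f)_S^T(L_H^+)_{S,S}x$, since $f$ may have endpoints outside $S$; and $\bigl((L_H^+)_{S,S}\bigr)^{-1}\neq SC(L_H,S)$ without the $I-\frac1{|S|}J$ projections (the former is nonsingular, the latter is a Laplacian). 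This is also why the constraint $x\perp\mathbf 1$ is not "harmless'': $\mathbf 1^S$ padded with zeros is \emph{not} in the kernel of $L_H^+$, and the unconstrained maximum over $\mathbb R^S$ can strictly exceed the right-hand side (path $1$--$2$--$3$, $S=\{1,2\}$, $f=\{2,3\}$: the right-hand side is $0$, but $x=(1,1,0)$ gives a positive ratio). The paper's statement is loose on this point too, and its own proof silently restricts to $x^T\mathbf 1=0$, so this is not your main problem.

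The genuine gap is the step that brings in $L_{H/S}^+$. You propose "$L_{H/S}^+=$ (projection of) $SC(L_H,(V(H)\setminus S)\cup\{s\})^+$'' plus Fact~\ref{fact:schurinv} with $S$ replaced by $V(H)\setminus S$, but identifying $S$ is not a Schur complement of $L_H$ onto $(V(H)\setminus S)$ together with one representative vertex: elimination and contraction are different operations. On the path $1$--$2$--$3$ with $S=\{1,3\}$, contracting $S$ gives two parallel unit edges (resistance $1/2$ between the two classes), while eliminating vertex $3$ leaves a single unit edge (resistance $1$); so the identity you lean on is false and the route does not close. What is needed is exactly the step the paper does with the Woodbury formula: $L_{H/S}$ is the $t\to\infty$ limit of $L_H+tCC^T$ (adding a zero-resistance star on $S$), whence $b_f^TL_{H/S}^+b_f=b_f^TL_H^+b_f-b_f^TL_H^+C(C^TL_H^+C)^{-1}C^TL_H^+b_f$. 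You allude to "adding zero-resistance edges'' but never carry this out, and your final summary substitutes the incorrect elimination identity for it. An alternative you could complete instead of Woodbury: the demand-rerouting characterization $b_f^TL_{H/S}^+b_f=\min_{\theta\in\mathbb R^{|S|-1}}(b_f-C\theta)^TL_H^+(b_f-C\theta)$ (in $H/S$ demand may be shifted freely within $S$), whose least-squares solution gives the same formula; subtracting from $b_f^TL_H^+b_f$ then matches the maximized Rayleigh quotient. Without one of these arguments, the equality between your Schur-complement expression and $b_f^TL_H^+b_f-b_f^TL_{H/S}^+b_f$ remains unproven.
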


\begin{proof}
Define an $n \times (|S|-1)$ matrix $C$ with signed indicator vectors of edges in a star graph on $C$. For every $x\in \mathbb{R}^S$ with $x^T \textbf{1} = 0$, $x = C c_x$ for some unique $c_x\in \mathbb{R}^{|S|-1}$. Therefore,

\begin{align*}
\max_{x\in \mathbb{R}^S} \frac{(x^T L_H^+ b_f)^2}{r_f(x^T L_H^+ x)} &= \max_{c\in \mathbb{R}^{|S|-1}} \frac{c^T C^T L_H^+ b_f b_f^T L_H^+ C c}{r_f(c^T C^T L_H^+ C c)}\\
&= \frac{1}{r_f}\lambda_{\max}((C^T L_H^+ C)^{-1/2} C^T L_H^+ b_f b_f^T L_H^+ C (C^T L_H^+ C)^{-1/2})\\
&= \frac{1}{r_f}(b_f^T L_H^+ C (C^T L_H^+ C)^{-1} C^T L_H^+ b_f)\\
&= \frac{b_f^T L_H^+ b_f}{r_f} - \frac{b_f^T L_{H/S}^+ b_f}{r_f}\\
\end{align*}

where the last equality follows from the Woodbury formula.
\end{proof}

To analyze $\FastOracle$, we start by showing that any set of localized edges remains localized under random edge modifications:

\begin{proposition}\label{prop:maintain-local}
Consider a graph $I$ and a set of edges $Z\subseteq E(I)$ that satisfy the following two initial conditions:

\begin{itemize}
\item (Initial leverage scores) $\texttt{lev}_I(e)\in [3/16,13/16]$ for all $e\in Z$.
\item (Initial localization) $\sum_{f\in Z} \frac{|b_e^T L_I^+ b_f|}{\sqrt{r_e}\sqrt{r_f}}\le \tau$ for all $e\in Z$, where $\tau = \frac{1}{10000\log n}$.
\end{itemize}

Sample a sequence of minors $\{I_k\}_{k\ge 0}$ of $I$ and sets $Z_k\subseteq E(I_k)$ by letting $I_0 := I$ and for each $k \ge 0$, sampling a uniformly random edge $e_k\in Z_k$, letting $I_{k+1}\gets I_k\backslash e_k$ or $I_{k+1}\gets I_k/e_k$ arbitrarily, and letting $Z_{k+1}\gets Z_k\backslash e_k$. Then with probability at least $1 - 1/n^2$, the following occurs for all $i$:

\begin{itemize}
\item (All leverage scores) $\texttt{lev}_{I_k}(e)\in [1/8,7/8]$ for all $e\in Z_k$.
\item (All localization) $\sum_{f\in Z_k,f\ne e} \frac{|b_e^T L_{I_k}^+ b_f|}{\sqrt{r_e}\sqrt{r_f}} \le \tau'$ for all $e\in Z_k$, where $\tau' = 2\tau$.
\end{itemize}
\end{proposition}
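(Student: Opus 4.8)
The plan is to track, for every pair of edges $e,f$ of the current graph, the normalized interaction $\theta_J(e,f) := |b_e^T L_J^+ b_f|/\sqrt{r_e r_f}$, observing that $\theta_J(e,e) = \texttt{lev}_J(e)$ and that the quantity of interest is $S_k(e) := \sum_{f\in Z_k,\, f\ne e}\theta_{I_k}(e,f)$, so that the hypotheses of Proposition~\ref{prop:maintain-local} read $S_0(e)\le\tau$ and $\theta_{I_0}(e,e)\in[3/16,13/16]$ for all $e\in Z_0$. Write $\tau' := 2\tau$. The first step is to read off from Sherman--Morrison (the rank-one update formulas for $L^+$ under edge deletion and contraction, together with the triangle inequality) that deleting or contracting an edge $g$ with $\texttt{lev}_J(g)\in[1/8,7/8]$ yields a graph $J'$ with $\theta_{J'}(e,f)\le \theta_J(e,f) + 8\,\theta_J(e,g)\,\theta_J(g,f)$ and $|\texttt{lev}_{J'}(e)-\texttt{lev}_J(e)|\le 8\,\theta_J(e,g)^2$ for all edges $e,f$ of $J'$, the constant $8$ being $\max\{1/(1-7/8),\,1/(1/8)\}$. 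Summing the first bound over $f\in Z_{k+1}\setminus\{e\} = Z_k\setminus\{e,e_k\}$ gives the one-step recursion
$$S_{k+1}(e)\le S_k(e) - \theta_{I_k}(e,e_k) + 8\,\theta_{I_k}(e,e_k)\,S_k(e_k).$$

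Next I would remove the circular dependence via a stopping time. Call ``$\texttt{lev}_{I_k}(g)\in[1/8,7/8]$ and $S_k(g)\le\tau'$ for every $g\in Z_k$'' the \emph{good event at step $k$}, and let $\sigma$ be the first step at which it fails. For $k<\sigma$ the factor-$8$ bounds above are valid at step $k$ (since $e_k\in Z_k$ has leverage score in $[1/8,7/8]$), and because $\tau'<1/8$ the term $8\theta_{I_k}(e,e_k)S_k(e_k)-\theta_{I_k}(e,e_k)$ is negative, so $S_{k+1}(e)\le S_k(e)$. Hence $S_k(e)\le S_0(e)\le\tau<\tau'$ for every $k\le\sigma$ and every $e\in Z_k$, which means the localization clause of the good event can never be the first to fail: $\sigma$, if finite, is a step at which some $\texttt{lev}_{I_\sigma}(g)$, $g\in Z_\sigma$, leaves $[1/8,7/8]$, i.e.\ moves by more than $1/16$ from its initial value $\texttt{lev}_{I_0}(g)\in[3/16,13/16]$. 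So the whole proposition reduces to showing that, with probability at least $1-1/n^2$, no leverage score $\texttt{lev}_{I_k}(e)$ of an edge $e\in Z_k$ ever moves by more than $1/16$ from $\texttt{lev}_{I_0}(e)$; on that event $\sigma=\infty$, the leverage-score conclusion is immediate, and the localization conclusion holds in the stronger form $S_k(e)\le\tau\le\tau'$.

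For the leverage scores I would set up one scalar martingale per edge $e\in Z_0$. Let $\sigma_e$ be the step at which $e$ leaves $Z$ and put $\Delta_k := (\texttt{lev}_{I_{k+1}}(e)-\texttt{lev}_{I_k}(e))\,\mathbbm{1}[k<\sigma]\,\mathbbm{1}[e\in Z_k]\,\mathbbm{1}[e_k\ne e]$, whose partial sums telescope to $\texttt{lev}_{I_{\min(\sigma,\sigma_e)}}(e)-\texttt{lev}_{I_0}(e)$. On the $\mathcal{F}_k$-measurable event $\{k<\sigma,\ e\in Z_k\}$, $e_k$ is uniform on $Z_k$, each choice $g\ne e$ moves $\texttt{lev}(e)$ by at most $8\theta_{I_k}(e,g)^2$, and since $\sum_{g\in Z_k,\,g\ne e}\theta_{I_k}(e,g)^2\le (\max_{g\ne e}\theta_{I_k}(e,g))\cdot S_k(e)\le \tau'\cdot\tau'$, we get $|\textbf{E}[\Delta_k\mid\mathcal{F}_k]|\le 8\tau'^2/|Z_k|$ and $\textbf{E}[\Delta_k^2\mid\mathcal{F}_k]\le 64\tau'^4/|Z_k|$. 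Since $|Z_k| = |Z_0|-k$, summing over $k$ bounds the total drift by $O(\tau'^2\log n)$ and the predictable quadratic variation of the compensated martingale $M_k := \sum_{j<k}(\Delta_j-\textbf{E}[\Delta_j\mid\mathcal{F}_j])$ by $O(\tau'^4\log n)$, with increments bounded by $R := 16\tau'^2$. With $\tau'=1/(5000\log n)$ the drift is below $1/32$ for large $n$, and Theorem~\ref{thm:freedman} in dimension one (applied to $\pm M_k$) gives $\Pr[\sup_k M_k\ge 1/32]\le \exp(-\Omega(1/\tau'^2)) = \exp(-\Omega(\log^2 n))$, so $|\texttt{lev}_{I_{\min(\sigma,\sigma_e)}}(e)-\texttt{lev}_{I_0}(e)|\le 1/32+O(\tau'^2\log n)\le 1/16$ except with probability $\exp(-\Omega(\log^2 n))$. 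A union bound over the at most $|Z_0|\le n^2$ edges finishes the argument: if some leverage score escaped $[1/8,7/8]$, the first such escape occurs at an edge $g\in Z_\sigma$, for which $\min(\sigma,\sigma_g)=\sigma$, contradicting the bound just proved for $g$.

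The two delicate points are precisely these. First, the circular dependence between ``leverage scores stay in $[1/8,7/8]$'' (needed to license the constant $8$) and ``localization stays below $\tau'$'': this is handled by the stopping time $\sigma$ together with the observation that, thanks to $\tau'<1/8$, $S_k(e)$ is nonincreasing while the good event holds, so localization never fails first and only leverage scores must be controlled probabilistically. Second, and more seriously, a Hoeffding/Azuma bound is useless here — the process runs for up to $\Theta(n^2)$ steps with per-step changes of size $\Theta(\mathrm{polylog}(n)^{-2})$, so $\sum_k R_k^2$ is polynomially large — so one must invoke the Bernstein-type tail of Theorem~\ref{thm:freedman} driven by the genuinely tiny predictable quadratic variation $O(\tau'^4\log n)$. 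That variation is small precisely because averaging $\theta_{I_k}(e,\cdot)^2$ over the uniformly random removed edge, combined with the hypothesis inherited from Theorem~\ref{thm:localization} (via Proposition~\ref{prop:extra-local}) that these squared interactions sum to at most $\tau'^2$, beats the worst-case increment by a polylogarithmic factor.
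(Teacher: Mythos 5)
Your proposal is correct, and it diverges from the paper's proof in a way worth noting. For the leverage scores you do essentially what the paper does: compensate the increments, bound the conditional drift by $O(\tau'^2/|Z_k|)$ and the predictable quadratic variation by $O(\tau'^4/|Z_k|)$ using the pathwise bound $|\texttt{lev}_{I_{k+1}}(e)-\texttt{lev}_{I_k}(e)|\le 8\,\theta_{I_k}(e,e_k)^2$ together with $\sum_{g\ne e}\theta_{I_k}(e,g)^2\le\tau'^2$, and then invoke a Bernstein-type martingale tail (the paper uses the Chung--Lu inequality, Theorem \ref{thm:one-sided-mart}; you use scalar Freedman applied to $\pm M_k$ --- interchangeable). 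The genuine difference is your handling of the localization sums: the paper sets up a second, compensated submartingale $\widehat{V}_e^{(k)}$ (the compensator re-adds the interaction with the removed edge) and runs the same concentration machinery on it, landing at the bound $2\tau$; you instead observe that the one-step recursion $S_{k+1}(e)\le S_k(e)+\theta_{I_k}(e,e_k)\bigl(8S_k(e_k)-1\bigr)$ makes $S_k(e)$ pathwise non-increasing while the good event holds, since $8\tau'<1$, so the removed interaction term pays for all the Sherman--Morrison cross terms. This eliminates half of the probabilistic work, gives the slightly stronger conclusion $S_k(e)\le\tau$ rather than $2\tau$, and makes explicit that only the leverage scores carry any randomness. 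Your stopping-time formulation is also a cleaner way to break the circularity than the paper's induction on $k$ interleaved with high-probability events, and your harmonic-sum bound $\sum_k 1/|Z_k|=O(\log n)$ covers the entire process, whereas the paper's variance accounting implicitly restricts attention to the first $|Z|/2$ steps (which suffices for its application since $K(|Z|)\le|Z|/2$). The delicate points you flag --- licensing the factor $8$ only while leverage scores stay in $[1/8,7/8]$, and needing the Bernstein form rather than Azuma because of the $\Theta(n^2)$-step horizon --- are exactly the ones the paper's proof also has to negotiate.
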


To prove this result, we cite the following submartingale inequality:

\begin{theorem}[Theorem 27 of \cite{CL06} with $a_i = 0$ for all $i$]\label{thm:one-sided-mart}
Let $(Y_i)_{i\ge 0}$ be a submartingale with difference sequence $X_i := Y_i - \textbf{E}[Y_i | Y_{i-1}]$ and $W_i := \sum_{j=1}^i \textbf{E}[X_j^2|Y_{j-1}]$. Suppose that both of the following conditions hold for all $i\ge 0$:

\begin{itemize}
\item $W_i\le \sigma^2$
\item $X_i\le M$
\end{itemize}

Then $$\Pr[Y_i - Y_0 \ge \lambda]\le \exp\left(-\frac{\lambda^2}{2(\sigma^2 + M\lambda/3)}\right)$$
\end{theorem}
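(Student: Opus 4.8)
The plan is to prove this by Freedman's exponential-supermartingale method, specialized to the case where the predictable quadratic variation is bounded deterministically (conditioning throughout on the history $\mathcal F_{j-1}=\sigma(Y_0,\dots,Y_{j-1})$). First I would reduce to a martingale difference sequence. Write $C_i := \sum_{j=1}^{i}\big(\textbf{E}[Y_j\mid \mathcal{F}_{j-1}]-Y_{j-1}\big)$ for the compensator; the ``$a_i=0$'' specialization in the cited theorem is exactly the statement that this drift is non-positive, so $Y_i-Y_0=\sum_{j=1}^{i}X_j+C_i\le \sum_{j=1}^{i}X_j$. Since each $X_j=Y_j-\textbf{E}[Y_j\mid\mathcal{F}_{j-1}]$ is a genuine martingale difference ($\textbf{E}[X_j\mid\mathcal{F}_{j-1}]=0$) with $X_j\le M$ and $\sum_{j}\textbf{E}[X_j^2\mid\mathcal{F}_{j-1}]\le\sigma^2$, it suffices to bound $\Pr\big[\sum_{j=1}^{i}X_j\ge\lambda\big]$.

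The analytic core is a one-step moment-generating-function estimate: for any $\theta\in(0,3/M)$ and any $\mathcal{F}$-conditionally mean-zero random variable $X\le M$,
\[
\textbf{E}\big[e^{\theta X}\mid\mathcal{F}\big]\le\exp\!\left(\frac{\theta^2/2}{1-\theta M/3}\,\textbf{E}[X^2\mid\mathcal{F}]\right).
\]
I would derive this from the elementary scalar fact that $x\mapsto(e^{x}-1-x)/x^{2}$ is non-decreasing, so $e^{\theta X}-1-\theta X\le h(\theta M)\,\theta^{2}X^{2}$ with $h(c)=(e^{c}-1-c)/c^{2}$; taking conditional expectations kills the linear term, the bound $h(c)\le \tfrac12(1-c/3)^{-1}$ for $c\in(0,3)$ follows from comparing $\sum_{k\ge2}c^k/k!$ termwise with the geometric series $\tfrac{c^2}{2}\sum_{j\ge0}(c/3)^j$ (using $k!\ge 2\cdot 3^{k-2}$ for $k\ge 2$), and finally $1+y\le e^{y}$.

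Next I would assemble the exponential supermartingale. Fix $\theta\in(0,3/M)$, set $\psi(\theta):=\tfrac{\theta^2/2}{1-\theta M/3}$ and $S_i:=\exp\!\big(\theta\sum_{j=1}^{i}X_j-\psi(\theta)W_i\big)$ with $S_0=1$; since $W_i$ is $\mathcal{F}_{i-1}$-measurable, the one-step estimate gives $\textbf{E}[S_i\mid\mathcal{F}_{i-1}]\le S_{i-1}$, hence $\textbf{E}[S_i]\le1$. On the event $\{\sum_{j\le i}X_j\ge\lambda\}$ we have, using $W_i\le\sigma^2$ and $\psi(\theta)>0$, that $S_i\ge e^{\theta\lambda-\psi(\theta)\sigma^2}$, so Markov's inequality yields $\Pr[\sum_{j\le i}X_j\ge\lambda]\le e^{-\theta\lambda+\psi(\theta)\sigma^2}$. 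Finally I would choose $\theta=\lambda/(\sigma^2+M\lambda/3)$, which lies in $(0,3/M)$ because $\sigma^2>0$; then $1-\theta M/3=\sigma^2/(\sigma^2+M\lambda/3)$, so $\psi(\theta)\sigma^2=\lambda^2/\big(2(\sigma^2+M\lambda/3)\big)$ and $\theta\lambda=\lambda^2/(\sigma^2+M\lambda/3)$, giving exponent $-\lambda^2/\big(2(\sigma^2+M\lambda/3)\big)$. Combining with $\Pr[Y_i-Y_0\ge\lambda]\le\Pr[\sum_{j\le i}X_j\ge\lambda]$ from the first step completes the argument.

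The step I expect to be the most delicate is the reduction in the first paragraph: making precise that the hypotheses as stated (read as the ``$a_i=0$'' case of the more general Chung--Lu inequality) really do force the compensator $C_i$ to be non-positive, so that dropping it is legitimate for an \emph{upper} tail bound. Everything after that is the textbook Bernstein/Freedman computation, and the only arithmetic care needed is in the scalar MGF inequality and in verifying that the optimal $\theta$ stays in the admissible range and reproduces the stated constants $\tfrac12$ and $\tfrac13$. Unlike the general Freedman inequality, no stopping-time argument is required here, since $W_i\le\sigma^2$ and $X_i\le M$ are assumed to hold for every $i$ rather than only up to a random time.
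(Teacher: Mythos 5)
The paper never proves this statement---it is imported as a black box (Theorem 27 of \cite{CL06} with $a_i=0$), so there is no internal proof to compare against. Your argument is essentially the standard proof of that theorem (the Bernstein/Freedman exponential-supermartingale computation), and it is correct: the scalar MGF estimate via monotonicity of $(e^x-1-x)/x^2$ and $k!\ge 2\cdot 3^{k-2}$, the predictability of $W_i$ making $\textbf{E}[S_i\mid\mathcal{F}_{i-1}]\le S_{i-1}$, the Markov step using $W_i\le\sigma^2$, and the choice $\theta=\lambda/(\sigma^2+M\lambda/3)$ all check out and reproduce the stated constants $\tfrac12$ and $\tfrac13$. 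One caveat on the step you flagged as delicate: the non-positivity of the compensator is not what ``$a_i=0$'' encodes in \cite{CL06}---there the $a_i$ relax the difference bound to $X_i-\textbf{E}[X_i\mid\mathcal{F}_{i-1}]\le a_i+M$ and enter the tail bound through $\sum_i a_i^2$. The non-positive drift instead comes from the fact that \cite{CL06} (and hence this paper) uses a reversed convention in which a ``submartingale'' satisfies $\textbf{E}[Y_i\mid\mathcal{F}_{i-1}]\le Y_{i-1}$. This reading is forced: under the standard convention the statement is false as written (a deterministic strictly increasing sequence has $X_i\equiv 0$, $W_i\equiv 0$, yet violates the upper tail bound). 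With that interpretation your reduction---drop the compensator for an upper-tail estimate and bound the martingale-difference sum---is exactly right, and it also matches how the paper actually applies the theorem (to processes whose predictable drift is made non-positive by construction, e.g.\ $\widehat{V}_e^{(k)}$ in Proposition \ref{prop:maintain-local}).
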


\begin{proof}[Proof of Proposition \ref{prop:maintain-local}]
We prove this result by induction on $k$. For $k = 0$, ``Initial leverage scores'' and ``Initial localization'' imply ``All leverage scores'' and ``All localization'' respectively. For $k > 0$, we use submartingale concentration to show the inductive step. For any edge $e\in Z_k$, define two random variables $U_e^{(k)} := \texttt{lev}_{I_k}(e)$ and $V_e^{(k)} := \sum_{f\in Z_k,f\ne e} \frac{|b_e^T L_{I_k}^+ b_f|}{\sqrt{r_e}\sqrt{r_f}}$. Let $$\widehat{U}_e^{(k)} := U_e^{(k)} - \sum_{l=0}^{k-1} \textbf{E}[U_e^{(l+1)} - U_e^{(l)}|U_e^{(l)}]$$ and $$\widehat{V}_e^{(k)} := V_e^{(k)} - \sum_{l=0}^{k-1} \textbf{E}\left[V_e^{(l+1)} - V_e^{(l)} + \frac{|b_e^T L_{I_l}^+ b_{e_l}|}{\sqrt{r_e}\sqrt{r_{e_l}}} \Big| V_e^{(l)}\right]$$ $(\widehat{U}_e^{(k)})_{k\ge 0}$ is a martingale and $(\widehat{V}_e^{(k)})_{k\ge 0}$ is a submartingale for all $e\in Z_k$. Let $$\widehat{XU}_e^{(k)} := \widehat{U}_e^{(k)} - \textbf{E}[\widehat{U}_e^{(k)}|\widehat{U}_e^{(k-1)}] = U_e^{(k)} - U_e^{(k-1)} - \textbf{E}[U_e^{(k)} - U_e^{(k-1)}|U_e^{(k-1)}]$$ $$\widehat{XV}_e^{(k)} := \widehat{V}_e^{(k)} - \textbf{E}[\widehat{V}_e^{(k)}|\widehat{V}_e^{(k-1)}] = V_e^{(k)} - V_e^{(k-1)} - \textbf{E}[V_e^{(k)} - V_e^{(k-1)} | V_e^{(k-1)}]$$ $$\widehat{WU}_e^{(k)} := \sum_{j=1}^k \textbf{E}[(\widehat{XU}_e^{(j)})^2|\widehat{U}_e^{(j-1)}]$$ and $$\widehat{WV}_e^{(k)} := \sum_{j=1}^k \textbf{E}[(\widehat{XV}_e^{(j)})^2|\widehat{V}_e^{(j-1)}]$$ By Sherman-Morrison and the inductive assumption applied to the edges $e,e_{k-1}\in Z_{k-1}$,

\begin{align*}
|\widehat{XU}_e^{(k)}| &\le |U_e^{(k)} - U_e^{(k-1)}| + \textbf{E}[|U_e^{(k)} - U_e^{(k-1)}| | U_e^{(k-1)}]\\
&\le 2\frac{(b_e^T L_{I_{k-1}}^+ b_{e_{k-1}})^2}{r_e \min(b_{e_{k-1}}^T L_{I_{k-1}}^+ b_{e_{k-1}}, r_{e_{k-1}} - b_{e_{k-1}}^T L_{I_{k-1}}^+ b_{e_{k-1}})}\\
&\le 16(\tau')^2\\
\end{align*}

\begin{align*}
\widehat{XV}_e^{(k)} &= \left(V_e^{(k)} - \left(V_e^{(k-1)} - \frac{|b_e^T L_{I_{k-1}}^+ b_{e_{k-1}}|}{\sqrt{r_e}\sqrt{r_{e_{k-1}}}}\right)\right) - \textbf{E}\left[V_e^{(k)} - \left(V_e^{(k-1)} - \frac{|b_e^T L_{I_{k-1}}^+ b_{e_{k-1}}|}{\sqrt{r_e}\sqrt{r_{e_{k-1}}}}\right)\right]\\
&\le 2\sum_{g\in Z_k,g\ne e} \left|\frac{|b_e^T L_{I_k}^+ b_g|}{\sqrt{r_e}\sqrt{r_g}} - \frac{|b_e^T L_{I_{k-1}}^+ b_g|}{\sqrt{r_e}\sqrt{r_g}}\right|\\
&\le 2\sum_{g\in Z_k,g\ne e} \frac{|b_e^T L_{I_{k-1}}^+ b_{e_{k-1}}||b_{e_{k-1}}^T L_{I_{k-1}}^+ b_g|}{\sqrt{r_e}r_{e_{k-1}}\min(1 - \texttt{lev}_{I_{k-1}}(e_{k-1}), \texttt{lev}_{I_{k-1}}(e_{k-1}))\sqrt{r_g}}\\
&\le 16 \frac{|b_e^T L_{I_{k-1}}^+ b_{e_{k-1}}|}{\sqrt{r_e}\sqrt{r_{e_{k-1}}}} \sum_{g\in Z_{k-1},g\ne e} \frac{|b_{e_{k-1}}^T L_{I_{k-1}}^+ b_g|}{\sqrt{r_g}\sqrt{r_{e_{k-1}}}}\\
&\le 16(\tau')^2\\
\end{align*}

\begin{align*}
&\textbf{E}[(\widehat{XU}_e^{(k)})^2|\widehat{U}_e^{(k-1)},e\ne e_{k-1},\hdots,e\ne e_0]\\
&\le 4\textbf{E}[(U_e^{(k)} - U_e^{(k-1)})^2 | U_e^{(k-1)},e\ne e_{k-1},\hdots,e\ne e_0]\\
&\le 4\textbf{E}_{e_{k-1}}\left[\frac{(b_e^T L_{I_{k-1}}^+ b_{e_{k-1}})^4}{r_e^2 r_{e_{k-1}}^2 \min(1 - \texttt{lev}_{I_{k-1}}(e_{k-1}), \texttt{lev}_{I_{k-1}}(e_{k-1}))^2} \Big| e\ne e_{k-1}\right]\\
&\le 256\textbf{E}_{e_{k-1}}\left[\frac{(b_e^T L_{I_{k-1}}^+ b_{e_{k-1}})^4}{r_e^2 r_{e_{k-1}}^2} \Big| e\ne e_{k-1}\right]\\
&\le \frac{256}{|Z_{k-1}| - 1}\left(\sum_{f\in Z_{k-1},f\ne e}\frac{|b_e^T L_{I_{k-1}}^+ b_f|}{\sqrt{r_e} \sqrt{r_f}}\right)^4\\
&\le \frac{256(\tau')^4}{|Z_{k-1}| - 1}\\
\end{align*}

\begin{align*}
&\textbf{E}[(\widehat{XV}_e^{(k)})^2|\widehat{V}_e^{(k-1)}, e\ne e_{k-1},\hdots,e\ne e_0]\\
&\le 4\textbf{E}\left[\left(V_e^{(k)} - \left(V_e^{(k-1)} - \frac{|b_e^T L_{I_{k-1}}^+ b_{e_{k-1}}|}{\sqrt{r_e}\sqrt{r_{e_{k-1}}}}\right)\right)^2 \Big| V_e^{(k-1)}, e\ne e_{k-1},\hdots,e\ne e_0\right]\\
&\le 4\textbf{E}_{e_{k-1}}\left[\left(\sum_{g\in Z_k,g\ne e} \frac{|b_e^T L_{I_{k-1}}^+ b_{e_{k-1}}||b_{e_{k-1}}^T L_{I_{k-1}}^+ b_g|}{\sqrt{r_e}r_{e_{k-1}}\min(1 - \texttt{lev}_{I_{k-1}}(e_{k-1}), \texttt{lev}_{I_{k-1}}(e_{k-1}))\sqrt{r_g}}\right)^2 \Big| e\ne e_{k-1}\right]\\
&\le 256\textbf{E}_{e_{k-1}}\left[\left(\frac{(b_e^T L_{I_{k-1}}^+ b_{e_{k-1}})^2}{r_er_{e_{k-1}}}\right)\left(\sum_{g\in Z_k,g\ne e} \frac{|b_{e_{k-1}}^T L_{I_{k-1}}^+ b_g|}{\sqrt{r_{e_{k-1}}}\sqrt{r_g}}\right)^2 \Big| e\ne e_{k-1}\right]\\
&\le \frac{256}{|Z_{k-1}|-1} \max_{f\in Z_{k-1}} \left(\sum_{g\in Z_{k-1},g\ne f} \frac{|b_f^T L_{I_{k-1}}^+ b_g|}{\sqrt{r_f}\sqrt{r_g}}\right)^4\\
&\le \frac{256(\tau')^4}{|Z_{k-1}| - 1}\\
\end{align*}

\noindent Therefore, for all $k\le |Z|/2$, $|\widehat{WU}_e^{(k)}|\le 256(\tau')^4$ and $|\widehat{WV}_e^{(k)}|\le 256(\tau')^4$ given the inductive hypothesis. By Theorem \ref{thm:one-sided-mart}, $$\Pr[|\widehat{U}_k - \widehat{U}_0| > 2000(\log n)(\tau')^2] \le \exp\left(-\frac{(2000(\log n)(\tau')^2)^2}{512(\tau')^4 + 512(\tau')^2(2000(\log n)(\tau')^2/3)}\right)\le \frac{1}{n^5}$$ and $$\Pr[\widehat{V}_k - \widehat{V}_0 > 2000(\log n)(\tau')^2] \le \exp\left(-\frac{(2000(\log n)(\tau')^2)^2}{512(\tau')^4 + 512(\tau')^2(2000(\log n)(\tau')^2/3)}\right)\le \frac{1}{n^5}$$ Now, we bound $U_k - \widehat{U}_k$ and $V_k - \widehat{V}_k$. By Sherman-Morrison and the inductive assumption for $Z_{k-1}$,

\begin{align*}
\textbf{E}[|U_e^{(k)} - U_e^{(k-1)}| | U_e^{(k-1)},e\ne e_{k-1}] &\le \textbf{E}_{e_{k-1}}\left[\frac{(b_e^T L_{I_{k-1}}^+ b_{e_{k-1}})^2}{r_e\min(1 - \texttt{lev}_{I_{k-1}}(e_{k-1}), \texttt{lev}_{I_{k-1}}(e_{k-1})) r_{e_{k-1}}} \Big| e\ne e_{k-1}\right]\\
&\le \frac{8(\tau')^2}{|Z_{k-1}|-1}\\
\end{align*}

\noindent and

\begin{align*}
&\textbf{E}\left[V_e^{(k)} - V_e^{(k-1)} + \frac{|b_e^T L_{I_{k-1}}^+ b_{e_{k-1}}|}{\sqrt{r_e}\sqrt{r_{e_{k-1}}}} \Big| V_e^{(k-1)},e\ne e_{k-1}\right]\\
&\le \textbf{E}_{e_{k-1}}\left[\sum_{g\in Z_k,g\ne e} \frac{|b_e^T L_{I_{k-1}}^+ b_{e_{k-1}}||b_{e_{k-1}}^T L_{I_{k-1}}^+ b_g|}{\sqrt{r_e}r_{e_{k-1}}\min(1 - \texttt{lev}_{I_{k-1}}(e_{k-1}), \texttt{lev}_{I_{k-1}}(e_{k-1}))\sqrt{r_g}} \Big| e\ne e_{k-1}\right]\\
&\le \frac{8(\tau')^2}{|Z_{k-1}|-1}\\
\end{align*}

\noindent so for $k\le |Z|/2$, $|U_k - \widehat{U}_k|\le 8(\tau')^2$ and $V_k - \widehat{V}_k\le 8(\tau')^2$. In particular, with probability at least $1 - 2/n^5$,

\begin{align*}
|\texttt{lev}_{I_k}(e) - \texttt{lev}_{I_0}(e)| &= |U_e^{(k)} - U_e^{(0)}|\\
&\le |U_e^{(k)} - \widehat{U}_e^{(k)}| + |\widehat{U}_e^{(k)} - \widehat{U}_e^{(0)}| + |\widehat{U}_e^{(0)} - U_e^{(0)}|\\
&\le 8(\tau')^2 + 2000(\log n)(\tau')^2 + 0\\
&\le 1/16\\
\end{align*}

\noindent Therefore, $\texttt{lev}_{I_k}(e)\in [1/8,7/8]$ with probability at least $1 - 2/n^5$ for all $e\in Z_k$. Furthermore,

\begin{align*}
\sum_{g\in Z_k, g\ne e} \frac{|b_e^T L_{I_k}^+ b_g|}{\sqrt{r_e}\sqrt{r_g}}&= V_e^{(k)}\\
&= (V_e^{(k)} - \widehat{V}_e^{(k)}) + (\widehat{V}_e^{(k)} - \widehat{V}_e^{(0)}) + (\widehat{V}_e^{(0)} - V_e^{(0)}) + V_e^{(0)}\\
&\le 8(\tau')^2 + 2000(\log n)(\tau')^2 + 0 + \tau\\
&\le 2\tau = \tau'\\
\end{align*}

\noindent This completes the inductive step and the proof of the proposition.

\end{proof}

Now, we prove Theorem \ref{thm:schur-fast}. By Lemma \ref{lem:main-oracle}, it suffices to show that $\FastOracle$ is a $(O(\log^3 n),\Omega(z/\log^3 n)))$-steady oracle with runtime $\tilde{O}(|E(I)|)$.

\begin{proof}[Proof of Theorem \ref{thm:schur-fast}]

\textbf{Size of $Z$.} By Proposition \ref{prop:sum-bound}, Proposition \ref{prop:fast-equiv}, and the approximation upper bound for $\nu_e$, $$\sum_{e\in E(I)} \nu_e\le (5/4)|S| + 1/m^4\le (3/2)|S|$$ Therefore, by Markov's Inequality, $|Z|\ge 5|W|/8$. By the ``Size'' guarantee of Proposition \ref{prop:extra-local}, $|W|\ge \Omega(1/(\log n)^3) |E(I)|$, so $|Z|\ge \Omega(1/(\log n)^3) |E(I)|$, as desired.

\textbf{Leverage score stability.} We start by checking that the input conditions for Proposition \ref{prop:maintain-local} are satisfied for $Z$. The ``Initial leverage scores'' condition is satisfied thanks to the ``Leverage scores'' input guarantee for steady oracles. The ``Initial localization'' condition is satisfied because of the ``Value'' output guarantee for $\Subsample$ applied to $W$. Therefore, Proposition \ref{prop:maintain-local} applies. The ``All leverage scores'' guarantee of Proposition \ref{prop:maintain-local} is precisely the ``Leverage score stability'' guarantee of steady oracles, as desired.

\textbf{Martingale change stability.} Let $(y_i)_{i=1}^d$ be a basis of $\mc S$ for which $y_i L_I^+ y_j = 0$ for $i\ne j$ and $y_i L_I^+ y_i$ for all $i\in [\Dim(\mc S)]$. Let $Y_t$ be a $V(I_t)\times \Dim(\mc S)$ matrix with columns $(y_i)_{I_t}$. By Proposition \ref{prop:equiv} applied with $G\gets I$, $$\max_{x\in \mc S} \frac{(x_{I_t}^T L_{I_t}^+ b_f)^2}{r_f(x^T L_I^+ x)} = \frac{b_f^T L_{I_t}^+ Y_t Y_t^T L_{I_t}^+ b_f}{r_f}$$ for all $f\in E(I_t)$. We now bound this quantity over the course of deletions and contractions of the edges $e_i$ by setting up a martingale. For all $t\ge 0$ and $f\in E(I)$, let $$A_f^{(t)} := \frac{b_f^T L_{I_t}^+ Y_t Y_t^T L_{I_t}^+ b_f}{r_f}$$ and $$\widehat{A}_f^{(t)} := A_f^{(t)} - \sum_{s=0}^{t-1}[A_f^{(s+1)} - A_f^{(s)} | A_f^{(s)}]$$ For each $f\in E(I)$, $(A_f^{(t)})_{t\ge 0}$ is a martingale. Let $$\widehat{XA}_f^{(t)} := \widehat{A}_f^{(t)} - \widehat{A}_f^{(t-1)} = A_f^{(t)} - A_f^{(t-1)} - \textbf{E}[A_f^{(t)} - A_f^{(t-1)} | A_f^{(t-1)}]$$ and $$\widehat{WA}_f^{(t)} := \sum_{s=1}^t \textbf{E}[(\widehat{XA}_f^{(s)})^2 | A_f^{(s-1)}]$$ We now inductively show that for all $f\in Z_t$ (which includes $f_t$), $$A_f^{(t)}\le \frac{\xi' |S|}{|E(I)|}$$ where $\xi := 8 \frac{E(I)}{|W|} \le O(\log^3 n)$ and $\xi' := 2\xi$. Initially, $$A_f^{(0)}\le \frac{8|S|}{|W|} = \frac{\xi |S|}{|W|}$$ for all $f\in Z$ by the approximation lower bound for $\nu_e$, completing the base case. For $t > 0$, we bound $A_f^{(t)}$ for $f\in Z_t$ by using martingale concentration. We start by bounding differences using the ``All leverage scores'' guarantee of Proposition \ref{prop:maintain-local}, Sherman-Morrison, Cauchy-Schwarz, and the inductive assumption:

\begin{align*}
|A_f^{(t)} - A_f^{(t-1)}| &= \left|\frac{b_f^T L_{I_t}^+ Y_t Y_t^T L_{I_t}^+ b_f}{r_f} - \frac{b_f^T L_{I_{t-1}}^+ Y_{t-1} Y_{t-1}^T L_{I_{t-1}}^+ b_f}{r_f}\right|\\
&\le 2\left|\frac{b_f^T L_{I_{t-1}}^+ b_{f_{t-1}} b_{f_{t-1}}^T L_{I_{t-1}}^+ Y_{t-1} Y_{t-1}^T L_{I_{t-1}}^+ b_f}{r_{f_{t-1}}\min(1 - \texttt{lev}_{I_{t-1}}(f_{t-1}), \texttt{lev}_{I_{t-1}}(f_{t-1}))r_f}\right|\\
&+ \left|\frac{b_f^T L_{I_{t-1}}^+ b_{f_{t-1}} b_{f_{t-1}}^T L_{I_{t-1}}^+ Y_{t-1} Y_{t-1}^T L_{I_{t-1}}^+ b_{f_{t-1}} b_{f_{t-1}}^T L_{I_{t-1}}^+ b_f}{r_{f_{t-1}}^2(\min(1 - \texttt{lev}_{I_{t-1}}(f_{t-1}), \texttt{lev}_{I_{t-1}}(f_{t-1})))^2r_f}\right|\\
&\le 16\frac{|b_f^T L_{I_{t-1}}^+ b_{f_{t-1}}| |b_{f_{t-1}}^T L_{I_{t-1}}^+ Y_{t-1} Y_{t-1}^T L_{I_{t-1}}^+ b_f|}{r_{f_{t-1}}r_f}\\
&+ 64\frac{(b_f^T L_{I_{t-1}}^+ b_{f_{t-1}})^2 b_{f_{t-1}}^T L_{I_{t-1}}^+ Y_{t-1} Y_{t-1}^T L_{I_{t-1}}^+ b_{f_{t-1}}}{r_{f_{t-1}}^2r_f}\\
&\le 16\frac{|b_f^T L_{I_{t-1}}^+ b_{f_{t-1}}|}{\sqrt{r_{f_{t-1}}}\sqrt{r_f}} \sqrt{A_{f_{t-1}}^{(t-1)}}\sqrt{A_f^{(t-1)}}\\
&+ 64\left(\frac{|b_f^T L_{I_{t-1}}^+ b_{f_{t-1}}|}{\sqrt{r_f}\sqrt{r_{f_{t-1}}}}\right)^2 A_{f_{t-1}}^{(t-1)}\\
&\le 80\frac{|b_f^T L_{I_{t-1}}^+ b_{f_{t-1}}|}{\sqrt{r_{f_{t-1}}}\sqrt{r_f}}\frac{\xi' |S|}{|E(I)|}\\
\end{align*}

\noindent By the ``All localization'' guarantee of Proposition \ref{prop:maintain-local},

\begin{align*}
|\widehat{XA}_f^{(t)}|&\le |A_f^{(t)} - A_f^{(t-1)}| + \textbf{E}[|A_f^{(t)} - A_f^{(t-1)}| | A_f^{t-1}]\\
&\le 160\tau' \frac{\xi' |S|}{|E(I)|}\\
\end{align*}

\noindent and

\begin{align*}
\textbf{E}[(\widehat{XA}_f^{(t)})^2 | \widehat{A}_f^{(t-1)}] &\le 4\textbf{E}_{f_{t-1}}[(A_f^{(t)} - A_f^{(t-1)})^2 | A_f^{(t-1)}, f\ne f_{t-1}]\\
& \frac{6400(\tau')^2}{|Z_{t-1}|-1} \left(\frac{\xi' |S|}{|E(I)|}\right)^2\\
\end{align*}

Since $K(|Z|)\le |Z|/2$, $|\widehat{WA}_f^{(t)}|\le 6400(\tau')^2 \left(\frac{\xi' |S|}{|E(I)|}\right)^2$. Therefore, by Theorem \ref{thm:one-sided-mart} applied to the submartingales $(\widehat{A}_f^{(t)})_{t\ge 0}$ and $(-\widehat{A}_f^{(t)})_{t\ge 0}$, $$\Pr\left[|\widehat{A}_f^{(t)} - \widehat{A}_f^{(0)}| > \frac{\xi' |S|}{5|E(I)|}\right] \le \exp\left(-\frac{((\xi' |S|)/(5|E(I)|))^2}{(6400(\tau')^2 + 160(\tau'))((\xi' |S|)/(|E(I)|))^2}\right) \le 1/n^5$$ Since $\widehat{A}_f^{(0)} = A_f^{(0)}$, we just need to bound $|A_f^{(t)} - \widehat{A}_f^{(t)}|$. We do this by bounding expectations of differences:

\begin{align*}
\textbf{E}[|A_f^{(t)} - A_f^{(t-1)}| | A_f^{(t-1)}, f\ne f_{t-1}] &\le 80\textbf{E}\left[\frac{|b_f^T L_{I_{t-1}}^+ b_{f_{t-1}}|}{\sqrt{r_{f_{t-1}}}\sqrt{r_f}}\frac{\xi' |S|}{|E(I)|} \Big| f\ne f_{t-1}\right]\\
&\le \frac{80\tau'}{|Z_{t-1}|-1} \frac{\xi' |S|}{|E(I)|}\\
\end{align*}

Therefore, $|A_f^{(t)} - \widehat{A}_f^{(t)}|\le \sum_{s=0}^{t-1}\textbf{E}[|A_f^{(s+1)} - A_f^{(s)}| | A_f^{(s)}, f\ne f_s] \le \frac{\xi' |S|}{5|E(I)|}$. This means that $$A_f^{(t)}\le |A_f^{(t)} - \widehat{A}_f^{(t)}| + |\widehat{A}_f^{(t)} - \widehat{A}_f^{(0)}| + A_f^{(0)}\le \frac{2\xi' |S|}{5|E(I)|} + \frac{\xi |S|}{|E(I)|}\le \frac{\xi' |S|}{|E(I)|}$$ with probability at least $1 - 1/n^5$, which completes the inductive step.

Therefore, by a union bound and the fact that $f_t\in Z_t$ for all $t\ge 0$, $$\max_{x\in \mc S} \frac{(x_{I_t} L_{I_t}^+ b_{f_t})^2}{r_{f_t} (x^T L_I^+ x)} = A_{f_t}^{(t)}\le \frac{\xi' |S|}{|E(I)|} \le \frac{O(\log^3 n)|S|}{|E(I)|}$$ completing the ``Martingale change stability'' proof.

\end{proof}

\section{Efficient approximation of differences}



In this section, we show how to approximate changes
in effective resistances due to the identification of a given vertex set $S$,
and thus prove Lemma~\ref{lem:diff-apx}.
Namely, given a vertex set $S \subset V$,
we need to approximate the following quantity for all edges $e\in E(G)$:
\[
    (b_e^T L_G^+ b_e) - (b_e^T L_{G/S}^+ b_e).
\]
By a proof similar to that of Proposition~\ref{prop:fast-equiv}, this quantity equals
\begin{align}\label{eq:max}
    \max_{x \bot 1,x\in \mathbb{R}^S} \frac{(x^T L_G^+ b_e)^2}{x^T L_G^+ x},
\end{align}
where $1$ denotes the all-one vector.
\begin{lemma}
    The decrease in the effective resistance of an edge $e\in E(G)$
    due to the identification of a vertex set $S\subset V$ equals
    \begin{align*}
        (b_e^T L_G^+ b_e) - (b_e^T L_{G/S}^+ b_e) =
        \max_{x \bot 1,x\in \mathbb{R}^S} \frac{(x^T L_G^+ b_e)^2}{x^T L_G^+ x}.
    \end{align*}
\end{lemma}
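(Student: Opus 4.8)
The plan is to mirror the proof of Proposition~\ref{prop:fast-equiv}: the claimed identity is, up to multiplying both sides by $r_e$, exactly that proposition specialized to the current graph with $f = e$, so it suffices to rerun that argument (carrying the factor $r_e$ along) or simply to cite it. I spell out the steps.

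First, I would realize the admissible set $\{x \in \mathbb{R}^S : x \bot 1\}$ as a column span. Pick any vertex $s_0 \in S$ and let $C$ be the $V(G) \times (|S|-1)$ signed incidence matrix of the star on $S$ centered at $s_0$ (its columns are the $b_{(s_0,s)}$ for $s \in S \setminus \{s_0\}$). Then $C$ has full column rank and $\mathrm{col}(C)$ is exactly the set of vectors supported on $S$ that are orthogonal to $1$. Substituting $x = Cc$, the right-hand side becomes the generalized Rayleigh quotient $\max_{c \ne 0} (c^T a)^2 / (c^T M c)$ with $a := C^T L_G^+ b_e$ and $M := C^T L_G^+ C$. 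Since $G$ is connected, $L_G^+$ is positive definite on $1^\perp \supseteq \mathrm{col}(C)$, so $M$ is positive definite; by Cauchy--Schwarz the maximum equals $a^T M^{-1} a = b_e^T L_G^+ C (C^T L_G^+ C)^{-1} C^T L_G^+ b_e$.

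It then remains to identify $b_e^T L_G^+ b_e - b_e^T L_G^+ C (C^T L_G^+ C)^{-1} C^T L_G^+ b_e$ with $b_e^T L_{G/S}^+ b_e$. I would do this exactly as in Proposition~\ref{prop:fast-equiv}, via the Woodbury identity together with the observation that identifying $S$ is the zero-internal-resistance limit: $L_{G/S}^+$, read through the natural push-forward $P : \mathbb{R}^{V(G)} \to \mathbb{R}^{V(G/S)}$ that sums the coordinates on $S$, equals $\lim_{t \to \infty}(L_G + t C C^T)^+$. Because $\mathrm{col}(C) \subseteq \mathrm{col}(L_G) = 1^\perp$, adding $t C C^T$ does not change the range, so the pseudoinverse form of the Woodbury identity applies and gives $(L_G + t C C^T)^+ = L_G^+ - L_G^+ C (t^{-1} I + C^T L_G^+ C)^{-1} C^T L_G^+$; letting $t \to \infty$ yields $L_G^+ - L_G^+ C (C^T L_G^+ C)^{-1} C^T L_G^+$, which is precisely the claimed identity. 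Equivalently and perhaps more transparently, $b_e^T L_{G/S}^+ b_e$ is the minimum energy of a unit flow between the endpoints of $e$ in $G$ when current may be redistributed for free among the vertices of $S$, i.e. $\min_{x \in \mathrm{col}(C)} (b_e - x)^T L_G^+ (b_e - x)$; completing the square in this quadratic gives exactly $b_e^T L_G^+ b_e$ minus the Rayleigh-quotient maximum above.

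The one place that needs genuine care — the rest being routine linear algebra — is this last step: making precise the sense in which $L_{G/S}^+$ is the $t \to \infty$ limit of $(L_G + t C C^T)^+$, in particular how $b_e$ is pushed forward when an endpoint of $e$ lies in $S$ (and the boundary case where both endpoints lie in $S$, where the decrease is the full $b_e^T L_G^+ b_e$ since $P b_e = 0$), together with the pseudoinverse bookkeeping that makes the Woodbury step valid. Everything else — realizing the subspace as $\mathrm{col}(C)$, the Cauchy--Schwarz evaluation of the Rayleigh quotient, and completing the square — is mechanical.
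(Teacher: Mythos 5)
Your proposal is correct and follows essentially the same route as the paper's proof: realize the subspace as the column span of a star incidence matrix $C$ on $S$, evaluate the Rayleigh quotient as $b_e^T L_G^+ C (C^T L_G^+ C)^{-1} C^T L_G^+ b_e$, and connect this to $b_e^T L_{G/S}^+ b_e$ via the Woodbury identity. The only difference is that you spell out (via the $t\to\infty$ limit of $(L_G + tCC^T)^+$, or equivalently the flow-energy/completing-the-square argument) the step the paper compresses into the phrase ``by Woodbury,'' which is a welcome addition rather than a deviation.
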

\begin{proof}
    Let $C$ be the $n \times (|S|-1)$ matrix with signed indicator vectors of edges
    in a star graph supported on $S$.
    Then we have
    \begin{align*}
        & (b_e^T L_G^+ b_e) - (b_e^T L_{G/S}^+ b_e) \\
        = & b_e^T L_H^+ C (C^T L_H^+ C)^{-1} C^T L_H^+ b_e
        \qquad \text{by Woodbury} \\
        = & \lambda_{\max}((C^T L_H^+ C)^{-1/2} C^T L_H^+ b_e b_e^T L_H^+ C (C^T L_H^+ C)^{-1/2})\\
        = & \max_{c\in \mathbb{R}^{|S|-1}} \frac{c^T C^T L_H^+ b_e b_e^T L_H^+ C c}{c^T C^T L_H^+ C c}\\
        = & \max_{x\bot 1,x\in \mathbb{R}^S} \frac{(x^T L_G^+ b_e)^2}{x^T L_H^+ x},
    \end{align*}
    where the last equality follows
    from that the columns of $C$ form a basis of the subspace
    of $\mathbb{R}^S$ orthogonal to the all-ones vector.
\end{proof}

Let $k := \sizeof{S}$,
and suppose without loss of generality that
$S$ contains the first $k$ vertices in $G$.
We construct a basis (plus an extra vector) of the subspace of $\mathbb{R}^S$
orthogonal to the all-ones vector by letting
\begin{align}
    C_{n\times k} := \begin{pmatrix}
        I_{k \times k} - \frac{1}{k} J_{k\times k} \\
        0_{(n-k)\times k}
    \end{pmatrix},
    \label{eq:basis}
\end{align}
where $I$ denotes the identity matrix,
and $J$ denotes the matrix whose entries are all $1$.
Let $P_{n\times k} := \begin{pmatrix} I_{k\times k} & 0 \end{pmatrix}^T$
be the projection matrix taking the first $k$ coordinates,
and let $\Pi_{k\times k} := I_{k\times k} - \frac{1}{k} J_{k\times k}$.
Now we can write~(\ref{eq:max}) as
\begin{align}
    &\max_{x \bot 1,x\in \mathbb{R}^S} \frac{(x^T L_G^+ b_e)^2}{x^T L_G^+ x} \notag \\
    = &\max_{c\in \mathbb{R}^{k}} \frac{c^T C^T L_G^+ b_e b_e^T L_G^+ C c}{c^T C^T L_G^+ C c} \notag \\
    = &\max_{c\in \mathbb{R}^{k}}
    \frac{\left(c^T \Pi_{k\times k}\right)\ C^T L_G^+ b_e b_e^T L_G^+ C\ \left(\Pi_{k\times k} c\right)}
    {\left(c^T \Pi_{k\times k}\right)\ C^T L_G^+ C\ \left(\Pi_{k\times k} c\right)}
    \qquad \text{by $C \Pi_{k\times k} = C$} \notag\\
    = &\max_{c\in \mathbb{R}^{k}}
    \frac{\left(c^T (C^T L_G^+ C)^{+/2}\right)\ C^T L_G^+ b_e b_e^T L_G^+ C\ \left((C^T L_G^+ C)^{+/2}c\right)}
    {\left(c^T (C^T L_G^+ C)^{+/2}\right)\  C^T L_G^+ C \ \left((C^T L_G^+ C)^{+/2}c\right)}
    \notag\\
    &\qquad \text{since $(C^T L_G^+ C)^{+/2}$ and $\Pi_{k\times k}$ have the same column space} \notag\\
    = &\lambda_{\max}((C^T L_G^+ C)^{+/2} C^T L_G^+ b_e b_e^T L_G^+ C (C^T L_G^+ C)^{+/2})\notag\\
    = &b_e^T L_G^+ C (C^T L_G^+ C)^{+} C^T L_G^+ b_e\notag\\
    = &b_e^T L_G^+ C (\Pi_{k\times k} P^T L_G^+ P \Pi_{k\times k})^{+} C^T L_G^+ b_e
    \qquad \text{by $P \Pi_{k\times k} = C$} \notag \\
    = &b_e^T L_G^+ C\ SC(L_G, S)\ C^T L_G^+ b_e
    \qquad
    \text{by Fact~\ref{fact:schurinv}.}\label{eq:difsc}
\end{align}


To approximate~(\ref{eq:difsc}), we further write it as
\begin{align*}
    &b_e^T L_G^+ C\ SC(L_G, S)\ C^T L_G^+ b_e \\
    = &b_e^T L_G^+ C\ SC(L_G, S) (SC(L_G, S))^+ SC(L_G, S)\ C^T L_G^+ b_e \\
    = &b_e^T L_G^+ C\ SC(L_G, S)\ C^T L_G^+ C\ SC(L_G, S)\ C^T L_G^+ b_e \\
    = &b_e^T L_G^+ C\ SC(L_G, S)\ C^T L_G^+ (B_G^T W_G B_G) L_G^+ C\ SC(L_G, S)\ C^T L_G^+ b_e,
\end{align*}
where the last equality follows from $L_G^+ = L_G^+ L_G L_G^+$ and $L_G = B_G^T W_G B_G$.

We now write the change in the effective resistance of an edge $e$
in a square of an Euclidean norm as
\begin{align*}
    b_e^T L_G^+ b_e - b_e^T L_{G/S}^+ b_e
    =
    \norm{W_G^{1/2} B_G L_G^+ C (SC(L_G, S)) C^T L_G^+ b_e}^2.
\end{align*}
We then use Johnson-Lindenstrauss Lemma to reduce dimensions.
Let $Q_{k\times m}$ be a random $\pm 1$ matrix where $k \geq 24 \log n / \ep^2$.
By Lemma~\ref{lem:jl}, the following statement holds for all $e$ with high probability:
\begin{align}
    \norm{W_G^{1/2} B_G L_G^+ C (SC(L_G, S)) C^T L_G^+ b_e}^2
    \approx_{1 + \ep}
    \norm{Q W_G^{1/2} B_G L_G^+ C (SC(L_G, S)) C^T L_G^+ b_e}^2.
    \label{eq:emb}
\end{align}

To compute the matrix on the rhs,
we note that $C$ is easy to apply by applying $I$ and $J$,
and $L_G^+$ can be applied to high accuracy by Fast Laplacian Solvers. 
Thus, we only need to apply the Schur complement $SC(L_G,S)$ to high accuracy fast.
We recall Definition~\ref{def:schur} of Schur complements:
\begin{align*}
    SC(L_G,S) := (L_G)_{S,S} - (L_G)_{S,T} (L_G)_{T,T}^{-1} (L_G)_{T,S},
\end{align*}
where $T := V\setminus S$.
Since $(L_G)_{T,T}$ is a principle submatrix of $L_G$,
it is an SDDM matrix and hence its inverse can be applied also by Fast Laplacian Solvers
to high accuracy.

\subsection{The subroutine and proof of Lemma~\ref{lem:diff-apx}}

We give the algorithm for approximating changes in effective resistances due to the identification
of $S$ as follows:

\begin{algorithm}[H]
    \SetAlgoLined
    \DontPrintSemicolon

    \KwIn{A weighted graph $G$, a set of vertices $S \subseteq V(G)$, and $\delta_0, \delta_1\in (0,1)$}

    \KwOut{Estimates $\left\{\nu_e\right\}_{e\in E(G)}$ to
    differences in effective resistances in $G$ and $G/S$}

    Let $Q_{k\times m}$ be a random $\pm 1$ matrix
    where $k \geq 24 \log n / \delta_0^{2}$.

    Compute each row of $Y_{k\times n} := Q W_G^{1/2} B_G L_G^+ C (SC(L_G, S)) C^T L_G^+$
    by applying $L_G^+$ and $L_{V\setminus S,V\setminus S}^{-1}$ to accuracy
    \[
        \eps = \frac{\delta_1}
        {48 \sqrt{k} \cdot n^{8.5}\cdot \wmax^{2.5} \wmin^{-3}}.
    \]\;
    $\nu_e \gets \norm{Y b_e}^2$ for all $e \in E(G)$ \;
    \Return{$\left\{\nu_e\right\}_{e\in E(G)}$}
    \caption{$\DiffApx(G, S, \delta_0, \delta_1)$}
\end{algorithm}

To prove the approximation ratio for $\DiffApx$, we first
track the errors for applying Schur complement in
the following lemma:

\begin{lemma}\label{lem:applyschur}
    For any Laplacian $L_G$, $S\subset V(G)$,
    vector $b\in \mathbb{R}^n$, and $\eps > 0$,
    the following statement holds:
    \[
        \norm{x - \xtil} \leq \eps n^{3.5} \wmax^{2.5} \wmin^{-0.5} \norm{b},
    \]
    where
    \begin{align*}
        &x := \kh{(L_G)_{S,S} - (L_G)_{S,T} (L_G)_{T,T}^{-1} (L_G)_{T,S}} b, \\
        &\xtil := (L_G)_{S,S} b - (L_G)_{S,T} \xtil_1, \\
        &\xtil_1 := \LaplSolve((L_G)_{T,T}, (L_G)_{T,S}b, \ep).
    \end{align*}
\end{lemma}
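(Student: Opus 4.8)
The plan is to bound the error $\norm{x-\xtil}$ by splitting it into two pieces via the triangle inequality: the error from solving the linear system $(L_G)_{T,T} z = (L_G)_{T,S} b$ approximately, and the amplification of that error when it is multiplied by $(L_G)_{S,T}$. Write $x_1 := (L_G)_{T,T}^{-1}(L_G)_{T,S}b$ so that $x = (L_G)_{S,S}b - (L_G)_{S,T}x_1$ and $\xtil = (L_G)_{S,S}b - (L_G)_{S,T}\xtil_1$; then $\norm{x-\xtil} = \norm{(L_G)_{S,T}(x_1 - \xtil_1)} \le \sigmamax((L_G)_{S,T})\,\norm{x_1 - \xtil_1}$.

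\noindent First I would invoke Lemma~\ref{lem:solve} (Fast Laplacian Solver) applied to the SDDM matrix $M := (L_G)_{T,T}$, which gives $\norm{x_1 - \xtil_1}_M \le \ep\norm{x_1}_M$. Then I would use Lemma~\ref{lem:mnorm} to convert this $M$-norm bound into an $\ell_2$-norm bound: $\norm{x_1 - \xtil_1} \le \ep n^{1.5}(\wmax/\wmin)^{1/2}\norm{x_1}$. Next I need to bound $\norm{x_1}$ in terms of $\norm{b}$. Since $x_1 = M^{-1}(L_G)_{T,S}b$, we have $\norm{x_1} \le \norm{M^{-1}}\cdot\sigmamax((L_G)_{T,S})\cdot\norm{b} = \lambdamin(M)^{-1}\sigmamax((L_G)_{T,S})\norm{b}$. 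Now I plug in the quantitative bounds from the preliminaries: $\lambdamin((L_G)_{T,T}) \ge \wmin/n^2$ from~(\ref{eq:eigen2}), $\sigmamax((L_G)_{T,S}) \le n\wmax$ from~(\ref{eq:singular3}), and $\sigmamax((L_G)_{S,T}) \le n\wmax$ again from~(\ref{eq:singular3}).

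\noindent Combining everything: $\norm{x - \xtil} \le \sigmamax((L_G)_{S,T})\cdot \ep n^{1.5}(\wmax/\wmin)^{1/2}\cdot \lambdamin(M)^{-1}\sigmamax((L_G)_{T,S})\cdot\norm{b} \le (n\wmax)\cdot \ep n^{1.5}(\wmax/\wmin)^{1/2}\cdot (n^2/\wmin)\cdot (n\wmax)\cdot\norm{b}$. Multiplying the powers of $n$ gives $n^{1+1.5+2+1} = n^{5.5}$ and the powers of $\wmax$ give $\wmax^{1+0.5+1} = \wmax^{2.5}$, while the powers of $\wmin$ give $\wmin^{-0.5-1} = \wmin^{-1.5}$. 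This yields $\norm{x-\xtil} \le \ep n^{5.5}\wmax^{2.5}\wmin^{-1.5}\norm{b}$, which is actually \emph{stronger} than the claimed $\ep n^{3.5}\wmax^{2.5}\wmin^{-0.5}\norm{b}$ in the regime where $n \ge 1$ and $\wmin \le 1 \le \wmax$ — so I would double-check the claimed exponents, but in any case a bound of the stated (weaker) form follows immediately, possibly after being slightly more careful about whether $\norm{x_1}_M$ versus $\norm{x_1}$ is used on the right side of the solver guarantee.

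\noindent The main obstacle is purely bookkeeping: making sure the norm conversions (operator norm versus $M$-norm versus $\ell_2$-norm) are chained consistently and that the eigenvalue/singular-value bounds from Lemmas~\ref{lem:lambdas} and~\ref{lem:sigmas} are applied to the correct submatrices (in particular that $(L_G)_{T,T}$ qualifies as an SDDM matrix so the solver and the $\lambdamin$ bound apply). There is no conceptual difficulty; the only real care needed is tracking the accumulated polynomial factors so the final exponents match (or dominate) the statement.
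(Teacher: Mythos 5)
Your decomposition is exactly the paper's: write $x-\xtil = (L_G)_{S,T}\kh{\xtil_1 - (L_G)_{T,T}^{-1}(L_G)_{T,S}b}$, apply the solver guarantee to the SDDM matrix $(L_G)_{T,T}$, convert with Lemma~\ref{lem:mnorm}, and finish with the singular-value/eigenvalue bounds of Lemmas~\ref{lem:lambdas} and~\ref{lem:sigmas}. All of your intermediate inequalities are valid. The problem is the final sentence: the bound you derive, $\eps n^{5.5}\wmax^{2.5}\wmin^{-1.5}\norm{b}$, is \emph{weaker}, not stronger, than the stated $\eps n^{3.5}\wmax^{2.5}\wmin^{-0.5}\norm{b}$. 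By definition $\wmin\le 1\le \wmax$ and $n\ge 1$, so $n^{5.5}\ge n^{3.5}$ and $\wmin^{-1.5}\ge\wmin^{-0.5}$; your right-hand side exceeds the lemma's by a factor $n^{2}/\wmin\ge 1$, and an upper bound with a larger right-hand side does not imply one with a smaller right-hand side. So as written your argument does not establish the lemma; it establishes a polynomially weaker inequality.

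The lost factor comes from how you bound $\norm{x_1}$ with $x_1 := (L_G)_{T,T}^{-1}(L_G)_{T,S}b$: you pay $\lambdamin\kh{(L_G)_{T,T}}^{-1}\le n^2/\wmin$ on top of $\sigmamax((L_G)_{T,S})\le n\wmax$. The paper's chain instead passes directly from $\norm{x_1-\xtil_1}$ to $\eps n^{1.5}\wmax^{0.5}\wmin^{-0.5}\norm{(L_G)_{T,S}b}$, i.e.\ it charges no $\lambdamin^{-1}$ factor (in effect bounding $\norm{(L_G)_{T,T}^{-1}(L_G)_{T,S}b}$ by $\norm{(L_G)_{T,S}b}$), which is how it lands on the exponents $n^{3.5}\wmax^{2.5}\wmin^{-0.5}$; your instinct to double-check that step is reasonable, since that passage is only justified when $\lambdamin\kh{(L_G)_{T,T}}\ge 1$. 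But you cannot resolve the discrepancy by declaring your larger bound sufficient. To close the gap you must either sharpen the estimate of $\norm{x_1}$ (for instance, using that the Schur complement is PSD, so $(L_G)_{S,T}(L_G)_{T,T}^{-1}(L_G)_{T,S}\preceq (L_G)_{S,S}$, which gives $\sigmamax\kh{(L_G)_{T,T}^{-1/2}(L_G)_{T,S}}\le (n\wmax)^{1/2}$ and removes part of the slack), or explicitly prove the lemma with your weaker polynomial factor and verify it still suffices where the lemma is consumed (Lemma~\ref{lem:computeemb} and the choice of $\eps$ inside \DiffApx, which have slack to absorb it). Simply asserting the stated bound ``follows immediately'' from a larger one is the genuine error here.
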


Using Lemma~\ref{lem:applyschur},
we track the errors for computing the embedding in~(\ref{eq:emb})
as follows:

\begin{lemma}\label{lem:computeemb}
    For any Laplacian $L_G$, $S\subset V(G)$,
    vector $q\in \mathbb{R}^n$ with entries $\pm 1$, and
    \begin{align}\label{eq:eps}
        0 < \eps < 1 / \kh{4n^{6} \cdot \wmax^{2.5} \wmin^{-1.5}},
    \end{align}
    and a matrix $C_{n\times k}$ defined by
    \begin{align*}
        &C_{S,1:k} = I_{k\times k} - \frac{1}{k} J_{k\times k},\\
        &C_{V\setminus S,1:k} = 0,
    \end{align*}
    the following statement holds:
    \begin{align*}
        \norm{x - \xtil} \leq \eps \cdot 8n^{8} \cdot \kh{\frac{\wmax}{\wmin}}^{2.5},
    \end{align*}
    where 
    \begin{align*}
        &x := \kh{q^T W_G^{1/2} B_G L_G^+ C (SC(L_G, S)) C^T L_G^+}^T, \\
        &\xtil := \LaplSolve(L_G, C \xtil_1, \eps), \\
        &\xtil_1 :=
        (L_G)_{S,S} (C^T \xtil_2) - (L_G)_{S,T}\ \LaplSolve((L_G)_{T,T}, (L_G)_{T,S}(C^T \xtil_2), \ep), \\
        &\xtil_2 := \LaplSolve(L_G, B_G^T W_G^{1/2} q, \eps).
    \end{align*}
\end{lemma}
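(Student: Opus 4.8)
The plan is a three-stage error-propagation argument that tracks the three nested solves in the definitions of $\tilde x_2$, $\tilde x_1$, and $\tilde x$. First I would rewrite the target vector in column form, $x = L_G^+\, C\, SC(L_G,S)\, C^T L_G^+\, B_G^T W_G^{1/2} q$, using symmetry of $L_G^+$ and $SC(L_G,S)$, and name the \emph{exact} intermediates $x_2 := L_G^+ B_G^T W_G^{1/2} q$ and $x_1 := SC(L_G,S)\, C^T x_2$, so that $x = L_G^+ C x_1$; these are exactly what $\tilde x_2$, $\tilde x_1$, $\tilde x$ are designed to approximate. I would record once and for all: $x,x_2,\tilde x,\tilde x_2$ all lie in the range of $L_G^+$ and hence are orthogonal to $\mathbf 1$ (so Lemma~\ref{lem:lnorm} applies to each $L_G$-solve); $C,C^T$ are applied exactly with $\|C\|=\|C^T\|\le 1$ since $I_{k\times k}-\tfrac1k J_{k\times k}$ is an orthogonal projection; and Lemmas~\ref{lem:lambdas} and \ref{lem:sigmas} give $\|L_G^+\|\le n^2/\wmin$, $\sigmamax(W_G^{1/2}B_G)\le(n\wmax)^{1/2}$, $\lambdamax(SC(L_G,S))\le n\wmax$, while $\|q\|=\sqrt m\le n$. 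These combine to polynomial bounds such as $\|x_2\|\le n^{3.5}\wmax^{1/2}\wmin^{-1}$ and $\|x_1\|\le n^{4.5}\wmax^{3/2}\wmin^{-1}$.

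Then I would chain the three stages, at each step splitting $\|(\text{exact output})-(\text{computed output})\|$ by the triangle inequality into (i) the error inherited from the previous stage, amplified by at most the operator norm of the current stage's linear map, plus (ii) the fresh error of the current solve. For (ii) at Stages~1 and 3 I would invoke Lemma~\ref{lem:solve} and convert its $\|\cdot\|_{L_G}$-guarantee to an $\ell_2$-guarantee via Lemma~\ref{lem:lnorm}, paying a factor $\eps\, n^{1.5}(\wmax/\wmin)^{1/2}$ times the $\ell_2$-norm of the \emph{exact} output of that stage; the entire Stage~2 contribution is precisely what Lemma~\ref{lem:applyschur} bounds, applied with its ``$b$'' equal to $C^T\tilde x_2$ (its internal $\|\cdot\|_{(L_G)_{T,T}}$-to-$\ell_2$ step being Lemma~\ref{lem:mnorm}). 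At the entry to each stage I would use the hypothesis $\eps<1/(4n^6\wmax^{2.5}\wmin^{-1.5})$ to conclude that the inherited error is dominated by the norm of the exact intermediate, i.e. $\|\tilde x_2\|\le 2\|x_2\|$ and $\|\tilde x_1\|\le 2\|x_1\|$, which keeps the recursion linear rather than multiplicative. Substituting the norm bounds and collecting powers of $n,\wmax,\wmin$ then yields a bound of the claimed shape $\eps\cdot O(n^{8})(\wmax/\wmin)^{5/2}$, after which I would bookkeep constants to reach the stated $8n^{8}(\wmax/\wmin)^{5/2}$.

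\textbf{Where the work lies.} Nothing here is conceptually hard; the whole burden is the polynomial bookkeeping, and the one subtlety is that a completely naive triangle-inequality pass — converting every solver error to $\ell_2$ immediately via the $n^{1.5}(\wmax/\wmin)^{1/2}$ factor of Lemma~\ref{lem:lnorm}/\ref{lem:mnorm}, then amplifying by $\|L_G^+\|\le n^2/\wmin$ on the way out — overshoots the target exponents. To stay inside $8n^{8}(\wmax/\wmin)^{5/2}$ I would keep each solver's error in its native norm ($\|\cdot\|_{L_G}$ for the two $L_G$-solves, $\|\cdot\|_{(L_G)_{T,T}}$ for the submatrix solve) as long as possible, using e.g. that $\|x_2\|_{L_G}=\sqrt{q^T(W_G^{1/2}B_G L_G^+ B_G^T W_G^{1/2})q}\le\|q\|=\sqrt m$ because the middle matrix is an orthogonal projection, and invoke Lemma~\ref{lem:lnorm}/\ref{lem:mnorm} only at the last moment; tracking where the norm is forced to change (at the applications of $SC(L_G,S)$, $C$, $C^T$) and bounding the corresponding cross-norm operator norms with Lemmas~\ref{lem:lambdas}–\ref{lem:sigmas} is the delicate step. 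This is also where the hypothesis on $\eps$ is re-used a second time: it is what lets the factor $\|C^T\tilde x_2\|$ inside the Stage-2 application of Lemma~\ref{lem:applyschur} be replaced by $2\|x_2\|$.
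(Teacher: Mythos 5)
Your proposal is correct and its skeleton is the same as the paper's: name the exact intermediates $x_2 := L_G^+ B_G^T W_G^{1/2} q$, $x_1 := SC(L_G,S)\,C^T x_2$, $x = L_G^+ C x_1$, chain the three solves by the triangle inequality, charge the fresh error of each $L_G$-solve via Lemma~\ref{lem:solve} converted to $\ell_2$ by Lemma~\ref{lem:lnorm}, charge the middle stage entirely to Lemma~\ref{lem:applyschur} applied with $b = C^T\xtil_2$ (whose internal conversion is Lemma~\ref{lem:mnorm}), and use the hypothesis~(\ref{eq:eps}) exactly as you say, to get $\norm{\xtil_2}\le 2\norm{x_2}$ and $\norm{\xtil_1}\le 2\norm{x_1}$. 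The one place you diverge is your claim that this ``naive'' $\ell_2$ pass overshoots $8n^8(\wmax/\wmin)^{2.5}$ and must be repaired by carrying native norms: the overshoot is an artifact of your preliminary bounds, not of the method. You bound $\norm{x_2}\le \norm{L_G^+}\,\sigmamax(B_G^T W_G^{1/2})\,\norm{q}$ with $\norm{q}\le n$, giving $n^{3.5}\wmax^{1/2}\wmin^{-1}$, which indeed inflates the final exponent to roughly $n^9\wmax^3\wmin^{-2.5}$. The paper instead bounds the composite operator, $\sigmamax(L_G^+ B_G^T W_G^{1/2}) = \lambdamax(L_G^+)^{1/2} \le n^2/\wmin$, and takes $\norm{q}=\sqrt n$ (the lemma's $q$ lives in $\mathbb{R}^n$), so $\norm{x_2}\le n^{2.5}/\wmin$; with that single sharper estimate the straightforward $\ell_2$ accounting — convert every solver error immediately, amplify by $n\wmax$ for $SC(L_G,S)$, by $1$ for $C,C^T$, and by $n^2/\wmin$ for the outer $L_G^+$ — lands exactly at $\eps\cdot 8n^8\wmax^{2.5}\wmin^{-2.5}$. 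Your proposed refinement (keep the stage-1 error in $\norm{\cdot}_{L_G}$ and use that $W_G^{1/2}B_G L_G^+ B_G^T W_G^{1/2}$ is a projection, so $\norm{x_2}_{L_G}\le\norm{q}$, hence $\norm{x_2}\le \lambda_2(L_G)^{-1/2}\norm{q}$) is valid and in fact tighter, so it would also deliver the stated bound with room to spare; it just buys slack the paper's accounting does not need, at the price of the cross-norm bookkeeping you flag as the delicate step.
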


Before proving the above two lemmas, we show how they imply
Lemma~\ref{lem:diff-apx}.
\begin{proof}[Proof of Lemma~\ref{lem:diff-apx}]
    The running time follows directly from the running time of $\LaplSolve$.
    Let $X_{k\times n} := Q W_G^{1/2} B_G L_G^+ C (SC(L_G, S)) C^T L_G^+$.
    The multiplicative approximation follows from
    Johnson-Lindenstrauss Lemma.
    To prove the additive approximation,
    we write the difference between
    $\norm{Xb_e}^2$ and $\norm{Yb_e}^2$
    as
    \begin{align*}
        \abs{ \norm{X b_e}^2 - \norm{Y b_e}^2 } =
        & \abs{ \norm{X b_e} - \norm{Y b_e} } \cdot
        \kh{ \norm{X b_e} + \norm{Y b_e} }.
    \end{align*}
    Let $u,v$ be the endpoints of $e$.
    We upper bound $\abs{ \norm{X b_e} - \norm{Y b_e} }$ by
    \begin{align*}
        \abs{ \norm{X b_e} - \norm{Y b_e} }
        & \leq \norm{(X - Y) b_e} = \norm{(X - Y) (e_u - e_v)}
        \qquad \text{by triangle ineq.} \notag \\
        & \leq \norm{(X - Y) e_u} + \norm{(X - Y) e_v}
        \qquad \text{by triangle ineq.} \notag \\
        & \leq \sqrt{2} \kh{\norm{(X - Y) e_u}^2 + \norm{(X - Y) e_v}^2}^{1/2}
        \qquad \text{by Cauchy-Schwarz}\\
        & \leq \sqrt{2} \norm{X - Y}_F = \sqrt{2} \kh{\sum_{i=1}^{k} \norm{(X - Y)^T e_i}^2}^{1/2} \\
        & \leq \sqrt{2k} \cdot \eps \cdot 8n^{8} \cdot \kh{\frac{\wmax}{\wmin}}^{2.5}
        \qquad \text{by Lemma~\ref{lem:computeemb}} \\
        & \leq \frac{\delta_1}{3\sqrt{2} n^{1/2} \wmin^{-1/2}}
    \end{align*}
    and upper bound $\norm{X b_e} + \norm{Y b_e}$ by
    \begin{align*}
        \norm{X b_e} + \norm{Y b_e} \leq
        & 2\norm{X b_e} + \abs{ \norm{X b_e} - \norm{Y b_e} } \\
        \leq & 2 \kh{ (1 + \delta_0) \kh{b_e^T L_G^+ b_e - b_e^T L_{G/S}^+ b_e} }^{1/2}
        + \abs{ \norm{X b_e} - \norm{Y b_e} }
        \quad \text{by Lemma~\ref{lem:jl}} \\
        \leq & 2 \kh{ (1 + \delta_0) n / \wmin }^{1/2} + \abs{ \norm{X b_e} - \norm{Y b_e} }
        \qquad \text{upper bounding $b_e^T L_G^+ b_e$} \\
        \leq & 3\sqrt{2} n^{1/2} \wmin^{-1/2} \qquad \text{by $\delta_0 < 1$}
    \end{align*}
    Combining these two upper bounds gives
    \begin{align*}
        \abs{ \norm{X b_e}^2 - \norm{Y b_e}^2 } \leq \delta_1,
    \end{align*}
    which proves the additive error.
\end{proof}

\subsection{Analysis of additional errors}
We now prove Lemma~\ref{lem:applyschur} and~\ref{lem:computeemb}.
\begin{proof}[Proof of Lemma~\ref{lem:applyschur}]
    We upper bound $\norm{x - \xtil}$ by
    \begin{align*}
        \norm{x - \xtil}
        = &\norm{(L_G)_{S,T} \kh{(L_G)_{T,T}^{-1} (L_G)_{T,S}b - \xtil_1}} \\
        \leq & n \wmax \norm{(L_G)_{T,T}^{-1} (L_G)_{T,S}b - \xtil_1}
        \qquad \text{by~(\ref{eq:singular3})} \\
        \leq & \eps n^{2.5} \wmax^{1.5} \wmin^{-0.5} \norm{(L_G)_{T,S}b}
        \qquad \text{by Lemma~\ref{lem:mnorm}} \\
        \leq & \eps n^{3.5} \wmax^{2.5} \wmin^{-0.5} \norm{b}
        \qquad \text{by~(\ref{eq:singular3}).}
    \end{align*}
\end{proof}

\begin{proof}[Proof of Lemma~\ref{lem:computeemb}]
    We first bound the norm of vector
    $L_G^+ B_G^T W_G^{1/2} q$ by
    \begin{align}
        \norm{L_G^+ B_G^T W_G^{1/2} q} \leq &\frac{n^2}{\wmin} \norm{q}
        \qquad \text{by $\sigmamax(L_G^+ B_G^T W_G^{1/2}) = \lambdamax(L_G^+)$ and~(\ref{eq:eigen1})} \notag\\
        = &\frac{n^{2.5}}{\wmin}
        \qquad \text{since $q$'s entries are $\pm 1$,} \label{eq:norm1}
    \end{align}
    and upper bound the norm of vector
    $SC(L_G,S) C^T L_G^+ B_G^T W_G^{1/2} q$ by
    \begin{align}
        \norm{SC(L_G,S) C^T L_G^+ B_G^T W_G^{1/2} q} \leq &n \wmax \norm{L_G^+ B_G^T W_G^{1/2} q}
        \qquad \text{by~(\ref{eq:singular2})} \notag \\
        \leq &n^{3.5} \frac{\wmax}{\wmin}. \label{eq:norm2}
    \end{align}
    The error of $\xtil_2$ follows by
    \begin{align}
        \norm{L_G^+ B_G^T W_G^{1/2} q - \xtil_2}
        \leq &\eps n^{1.5} \kh{\frac{\wmax}{\wmin}}^{1/2} \norm{L_G^+ B_G^T W_G^{1/2} q}
        \qquad \text{by Lemma~\ref{lem:lnorm}} \notag\\
        \leq & \eps n^{4} \wmax^{1/2} \wmin^{-1.5}
        \qquad \text{by~(\ref{eq:norm1}).}
        \label{eq:difx2}
    \end{align}
    The norm of $\xtil_2$ can be upper bounded by
    \begin{align}
        \norm{\xtil_2} \leq &\norm{L_G^+ B_G^T W_G^{1/2} q} + \norm{L_G^+ B_G^T W_G^{1/2} q - \xtil_2}
        \qquad \text{by triangle inequality} \notag\\
        \leq & \frac{2 n^{2.5}}{\wmin}
        \qquad \text{by~(\ref{eq:norm1}) and~(\ref{eq:eps}).} \label{eq:normx2}
    \end{align}
    The error of $\xtil_1$ follows by
    \begin{align}
        & \norm{SC(L_G,S) C^T L_G^+ B_G^T W_G^{1/2} q - \xtil_1} \notag\\
        \leq &\norm{SC(L_G,S) C^T \kh{L_G^+ B_G^T W_G^{1/2} q - \xtil_2}}
        +
        \norm{SC(L_G,S)C^T \xtil_2 - \xtil_1}
        \quad \text{by triangle ineq.} \notag\\
        \leq & \eps n^{5} \wmax^{1.5} \wmin^{-1.5} +
        \norm{SC(L_G,S)C^T \xtil_2 - \xtil_1}
        \qquad \text{by~(\ref{eq:difx2}),~(\ref{eq:singular2}) and $\sigmamax(C) = 1$} \notag\\
        \leq &\eps n^{5} \wmax^{1.5} \wmin^{-1.5} +
        \ep \cdot \norm{C^T \xtil_2} \cdot n^{3.5} \cdot \wmax^{2.5} \wmin^{-0.5}
        \qquad \text{by Lemma~\ref{lem:applyschur}} \notag\\
        \leq &\eps n^{5} \wmax^{1.5} \wmin^{-1.5} +
        \ep \cdot 2n^{6} \cdot \wmax^{2.5} \wmin^{-1.5}
        \qquad \text{by~(\ref{eq:normx2})} \notag\\
        \leq & \eps \cdot 4n^{6} \cdot \wmax^{2.5} \wmin^{-1.5}.
        \label{eq:difx1}
    \end{align}
    The norm of $\xtil_1$ can be upper bounded by
    \begin{align}
        &\norm{\xtil_1} \notag \\ \leq &\norm{SC(L_G,S) C^T L_G^+ B_G^T W_G^{1/2} q} + 
        \norm{SC(L_G,S) C^T L_G^+ B_G^T W_G^{1/2} q - \xtil_1}
        \ \text{by triangle ineq.} \notag\\
        \leq & 2n^{3.5} \frac{\wmax}{\wmin}
        \qquad \text{by~(\ref{eq:norm2}) and~(\ref{eq:eps}).} 
        \label{eq:normx1}
    \end{align}
    Finally, the error of $\xtil$ follows by
    \begin{align}
        &\norm{x - \xtil} \notag\\
        \leq & \norm{L_G^+ C \kh{SC(L_G,S) C^T L_G^+ B_G^T W_G^1/2 q - \xtil_1}}
        + \norm{L_G^+ C \xtil_1 - \xtil}
        \qquad \text{by triangle ineq.} \notag\\
        \leq &
        \eps \cdot 4n^{8} \cdot \wmax^{2.5} \wmin^{-2.5}
        + \norm{L_G^+ C \xtil_1 - \xtil}
        \qquad \text{by~(\ref{eq:difx1}) and~(\ref{eq:eigen1})} \notag \\
        \leq &
        \eps \cdot 4n^{8} \cdot \wmax^{2.5} \wmin^{-2.5}
        + \eps n^{1.5} \wmax^{0.5} \wmin^{-0.5} \norm{C \xtil_1}
        \qquad \text{by Lemma~\ref{lem:lnorm}} \notag \\
        \leq &
        \eps \cdot 4n^{8} \cdot \wmax^{2.5} \wmin^{-2.5}
        + 2\eps n^{5} \wmax^{1.5} \wmin^{-1.5}
        \qquad \text{by~(\ref{eq:normx1})} \notag \\
        \leq &
        \eps \cdot 8n^{8} \cdot \wmax^{2.5} \wmin^{-2.5}
        \label{eq:difx}
    \end{align}
\end{proof}

\section{Better effective resistance approximation}\label{sec:res-apx}



In this section, we use divide-and-conquer based on Theorem~\ref{thm:schur-fast}
to $\eps$-approximate effective resistances
for a set of pairs of vertices $P\subseteq V(G)\times V(G)$
in time $O(m^{1 + o(1)} + (\sizeof{P} / \eps^2)\text{polylog}(n))$.
The reduction we use is the same as in~\cite{DKPRS17}.
We give the algorithm $\ResApx$ as follows:

\begin{algorithm}[H]
    \SetAlgoLined
    \DontPrintSemicolon

    \KwIn{A weighted graph $G$, a set of pairs of vertices $P$, and an $\eps \in (0,1)$}

    \KwOut{Estimates $\left\{\rtil_{u,v}\right\}_{(u,v)\in P}$ to
    effective resistances between vertex pairs in $P$}

    \If{$\sizeof{P} = 1$}{
        Compute the Schur complement $H$ of $G$ onto $P$ with error $\ep$\;
        \Return{$\setof{\rtil_{u,v} := b_{u,v}^T L_H^+ b_{u,v}}$} for the only $(u,v) \in P$\; 
    }
    Let $\eps_1 := \frac{1}{2}\cdot \eps \cdot (1 / \log \sizeof{P})$
    and $\eps_2 := \eps \cdot (1 - 1 / \log \sizeof{P})$. \;
    Divide $P$ into subsets $P^{(1)}$ and $P^{(2)}$ with equal sizes. \;
    Let $V^{(1)}$ and $V^{(2)}$ be the respective set of vertices
    in $P^{(1)}$ and $P^{(2)}$. \;
    Compute the Schur complement $H^{(1)}$ of $G$ onto $V^{(1)}$ with error $\ep_1$\;
    Compute the Schur complement $H^{(2)}$ of $G$ onto $V^{(2)}$ with error $\ep_2$\;
    $\rtil \gets \ResApx(H^{(1)}, P^{(1)}, \ep_2) \cup \ResApx(H^{(2)}, P^{(2)}, \ep_2)$\;
    \Return{$\rtil$}

    \caption{$\ResApx(G,P,\ep)$, never executed}
\end{algorithm}

\begin{proof}[Proof of Corollary~\ref{cor:res-apx}]
    The approximation guarantees follows from
    \begin{align*}
        \rtil_{u,v} \geq
        & \kh{1 - \frac{1}{2}\cdot \eps/\log \sizeof{P}} ^ {\log \sizeof{P} - 1}
        \cdot \kh{b_{u,v}^T L_G^+ b_{u,v}} \\
        \geq & (1 - \eps) b_{u,v}^T L_G^+ b_{u,v}
    \end{align*}
    and
    \begin{align*}
        \rtil_{u,v} \leq
        & \kh{1 + \frac{1}{2}\cdot \eps/\log \sizeof{P}} ^ {\log \sizeof{P} - 1}
        \cdot \kh{b_{u,v}^T L_G^+ b_{u,v}} \\
        \leq & (1 + \eps) b_{u,v}^T L_G^+ b_{u,v}.
    \end{align*}
    We then prove the running time.
    Let $T(p, \eps)$ denote the running time of $\ResApx(G, P, \ep)$
    when $\sizeof{P} = p$ and $\sizeof{E(G)} = O((p/\ep^2)\text{polylog}(n))$.
    Clearly, the total running time of $\ResApx(G,P,\ep)$ for any
    $G$ with $m$ edges is at most
    \begin{align}\label{eq:totrt}
        2\cdot T(\sizeof{P}/2, \eps\cdot (1 - 1 / \log\sizeof{P})) +
        O\kh{m^{1 + o(1)} + (\sizeof{P}/\eps^{2})\text{polylog}(n)},
    \end{align}
    since the first step of $\ResApx$ will divide the graph into
    two Schur complements with $O((\sizeof{P}/\eps^{2})\text{polylog}(n))$ edges each.
    Furthermore,
    we can write $T(p,\eps)$ in a recurrence form as
    \begin{align*}
        T(p,\eps) = 2\cdot T(p/2, \eps \cdot (1 - 1/\log p)) +
        O\kh{p^{1 + o(1)} + (p/\eps^{2})\text{polylog}(n)},
    \end{align*}
    which gives
    \[
        T(p,\eps) = O\kh{p^{1 + o(1)} + (p/\eps^2)\text{polylog}(n)}.
    \]
    Combining this with~(\ref{eq:totrt}) gives the overall running time
    \[
        O\kh{m^{1 + o(1)} + (\sizeof{P} / \eps^2)\text{polylog}(n)}.
    \]
\end{proof}


\begin{thebibliography}{CKM{\etalchar{+}}14}

\bibitem[Ach01]{A01}
Dimitris Achlioptas.
\newblock Database-friendly random projections.
\newblock In {\em Proceedings of the 20th {ACM} {SIGACT-SIGMOD-SIGART}
  Symposium on Principles of Database Systems (PODS)}, 2001.

\bibitem[AGK14]{AGK14}
Alexandr Andoni, Anupam Gupta, and Robert Krauthgamer.
\newblock Towards ($1 + \epsilon$)-approximate flow sparsifiers.
\newblock In {\em Proceedings of the Twenty-Fifth Annual {ACM-SIAM} Symposium
  on Discrete Algorithms, {SODA} 2014, Portland, Oregon, USA, January 5-7,
  2014}, pages 279--293, 2014.

\bibitem[CGH16]{CGH16}
Yun~Kuen Cheung, Gramoz Goranci, and Monika Henzinger.
\newblock Graph minors for preserving terminal distances approximately - lower
  and upper bounds.
\newblock In {\em 43rd International Colloquium on Automata, Languages, and
  Programming, {ICALP} 2016, July 11-15, 2016, Rome, Italy}, pages
  131:1--131:14, 2016.

\bibitem[Chu12]{C12}
Julia Chuzhoy.
\newblock On vertex sparsifiers with steiner nodes.
\newblock In {\em Proceedings of the 44th Symposium on Theory of Computing
  Conference, {STOC} 2012, New York, NY, USA, May 19 - 22, 2012}, pages
  673--688, 2012.

\bibitem[CKM{\etalchar{+}}14]{CKMPPRX14}
Michael~B. Cohen, Rasmus Kyng, Gary~L. Miller, Jakub~W. Pachocki, Richard Peng,
  Anup~B. Rao, and Shen~Chen Xu.
\newblock Solving sdd linear systems in nearly mlog1/2n time.
\newblock In {\em Proceedings of the Forty-sixth Annual ACM Symposium on Theory
  of Computing}, STOC '14, pages 343--352, New York, NY, USA, 2014. ACM.

\bibitem[CL06]{CL06}
Fan Chung and Linyuan Lu.
\newblock Concentration inequalities and martingale inequalities: a survey.
\newblock {\em Internet Math.}, 3(1):79--127, 2006.

\bibitem[CLLM10]{CLLM10}
Moses Charikar, Tom Leighton, Shi Li, and Ankur Moitra.
\newblock Vertex sparsifiers and abstract rounding algorithms.
\newblock {\em CoRR}, abs/1006.4536, 2010.

\bibitem[DKP{\etalchar{+}}17]{DKPRS17}
David Durfee, Rasmus Kyng, John Peebles, Anup~B. Rao, and Sushant Sachdeva.
\newblock Sampling random spanning trees faster than matrix multiplication.
\newblock In {\em Proceedings of the 49th Annual ACM SIGACT Symposium on Theory
  of Computing}, STOC 2017, pages 730--742, New York, NY, USA, 2017. ACM.

\bibitem[DPPR17]{DPPR17}
David Durfee, John Peebles, Richard Peng, and Anup~B. Rao.
\newblock Determinant-preserving sparsification of {SDDM} matrices with
  applications to counting and sampling spanning trees.
\newblock {\em CoRR}, abs/1705.00985, 2017.

\bibitem[EGK{\etalchar{+}}14]{EGKRTT14}
Matthias Englert, Anupam Gupta, Robert Krauthgamer, Harald R{\"{a}}cke, Inbal
  Talgam{-}Cohen, and Kunal Talwar.
\newblock Vertex sparsifiers: New results from old techniques.
\newblock {\em {SIAM} J. Comput.}, 43(4):1239--1262, 2014.

\bibitem[GHP17]{GHP17}
Gramoz Goranci, Monika Henzinger, and Pan Peng.
\newblock {Improved Guarantees for Vertex Sparsification in Planar Graphs}.
\newblock In Kirk Pruhs and Christian Sohler, editors, {\em 25th Annual
  European Symposium on Algorithms (ESA 2017)}, volume~87 of {\em Leibniz
  International Proceedings in Informatics (LIPIcs)}, pages 44:1--44:14,
  Dagstuhl, Germany, 2017. Schloss Dagstuhl--Leibniz-Zentrum fuer Informatik.

\bibitem[GHP18]{GHP18}
Gramoz Goranci, Monika Henzinger, and Pan Peng.
\newblock Dynamic effective resistances and approximate schur complement on
  separable graphs.
\newblock {\em CoRR}, abs/1802.09111, 2018.

\bibitem[Ind06]{I06}
Piotr Indyk.
\newblock Stable distributions, pseudorandom generators, embeddings, and data
  stream computation.
\newblock {\em J. {ACM}}, 53(3):307--323, 2006.

\bibitem[JL84]{JL84}
William~B Johnson and Joram Lindenstrauss.
\newblock {Extensions of Lipschitz mappings into a Hilbert space}.
\newblock {\em Contemporary Mathematics}, 26(189-206):1, 1984.

\bibitem[{Kir}47]{K47}
G.~{Kirchhoff}.
\newblock {Ueber die Aufl{\"o}sung der Gleichungen, auf welche man bei der
  Untersuchung der linearen Vertheilung galvanischer Str{\"o}me gef{\"u}hrt
  wird}.
\newblock {\em Annalen der Physik}, 148:497--508, 1847.

\bibitem[KLP{\etalchar{+}}16]{KLPSS16}
Rasmus Kyng, Yin~Tat Lee, Richard Peng, Sushant Sachdeva, and Daniel~A.
  Spielman.
\newblock Sparsified cholesky and multigrid solvers for connection laplacians.
\newblock In {\em Proceedings of the 48th Annual {ACM} {SIGACT} Symposium on
  Theory of Computing, {STOC} 2016, Cambridge, MA, USA, June 18-21, 2016},
  pages 842--850, 2016.

\bibitem[KMST10]{KMST10}
Alexandra Kolla, Yury Makarychev, Amin Saberi, and Shang{-}Hua Teng.
\newblock Subgraph sparsification and nearly optimal ultrasparsifiers.
\newblock In {\em Proceedings of the 42nd {ACM} Symposium on Theory of
  Computing, {STOC} 2010, Cambridge, Massachusetts, USA, 5-8 June 2010}, pages
  57--66, 2010.

\bibitem[KS16]{KS16}
R.~Kyng and S.~Sachdeva.
\newblock Approximate gaussian elimination for laplacians - fast, sparse, and
  simple.
\newblock In {\em 2016 IEEE 57th Annual Symposium on Foundations of Computer
  Science (FOCS)}, pages 573--582, Oct 2016.

\bibitem[LM10]{LM10}
Frank~Thomson Leighton and Ankur Moitra.
\newblock Extensions and limits to vertex sparsification.
\newblock In {\em Proceedings of the 42nd {ACM} Symposium on Theory of
  Computing, {STOC} 2010, Cambridge, Massachusetts, USA, 5-8 June 2010}, pages
  47--56, 2010.

\bibitem[MM10]{MM10}
Konstantin Makarychev and Yury Makarychev.
\newblock Metric extension operators, vertex sparsifiers and lipschitz
  extendability.
\newblock In {\em 51th Annual {IEEE} Symposium on Foundations of Computer
  Science, {FOCS} 2010, October 23-26, 2010, Las Vegas, Nevada, {USA}}, pages
  255--264, 2010.

\bibitem[Moi09]{M09}
Ankur Moitra.
\newblock Approximation algorithms for multicommodity-type problems with
  guarantees independent of the graph size.
\newblock In {\em 50th Annual {IEEE} Symposium on Foundations of Computer
  Science, {FOCS} 2009, October 25-27, 2009, Atlanta, Georgia, {USA}}, pages
  3--12, 2009.

\bibitem[MP13]{MP13}
Gary~L. Miller and Richard Peng.
\newblock Approximate maximum flow on separable undirected graphs.
\newblock In {\em Proceedings of the Twenty-fourth Annual ACM-SIAM Symposium on
  Discrete Algorithms}, SODA '13, pages 1151--1170, Philadelphia, PA, USA,
  2013. Society for Industrial and Applied Mathematics.

\bibitem[RST14]{RST14}
Harald R{\"{a}}cke, Chintan Shah, and Hanjo T{\"{a}}ubig.
\newblock Computing cut-based hierarchical decompositions in almost linear
  time.
\newblock In {\em Proceedings of the Twenty-Fifth Annual {ACM-SIAM} Symposium
  on Discrete Algorithms, {SODA} 2014, Portland, Oregon, USA, January 5-7,
  2014}, pages 227--238, 2014.

\bibitem[Sch17]{Sc17}
Aaron Schild.
\newblock An almost-linear time algorithm for uniform random spanning tree
  generation.
\newblock {\em CoRR}, abs/1711.06455, 2017.

\bibitem[SRS17]{SRS17}
Aaron Schild, Satish Rao, and Nikhil Srivastava.
\newblock Localization of electrical flows.
\newblock {\em CoRR}, abs/1708.01632, 2017.

\bibitem[SS08]{SS08}
Daniel~A. Spielman and Nikhil Srivastava.
\newblock Graph sparsification by effective resistances.
\newblock In {\em Proceedings of the 40th Annual {ACM} Symposium on Theory of
  Computing, Victoria, British Columbia, Canada, May 17-20, 2008}, pages
  563--568, 2008.

\bibitem[ST14]{ST14}
Daniel~A. Spielman and Shang{-}Hua Teng.
\newblock Nearly linear time algorithms for preconditioning and solving
  symmetric, diagonally dominant linear systems.
\newblock {\em {SIAM} J. Matrix Analysis Applications}, 35(3):835--885, 2014.

\bibitem[Tro11]{T11}
Joel Tropp.
\newblock Freedman's inequality for matrix martingales.
\newblock {\em Electron. Commun. Probab.}, 16:262--270, 2011.

\end{thebibliography}

\newcommand{\etalchar}[1]{$^{#1}$}

\begin{appendix}
    \section{Bounds on eigenvalues of Laplacians and SDDM matrices}
\label{sec:eigen}

We first give upper bounds on the traces of the inverses of Laplacians
and their submatrices.

\begin{lemma}\label{lem:traces}
    For any Laplacian $L_G$ and $S\subset V(G)$,
    \begin{align}
        \trace{L_G^+} &\leq n^2/\wmin, \\
        \trace{(L_G)_{S,S}^{-1}} &\leq n^2/\wmin.
    \end{align}
\end{lemma}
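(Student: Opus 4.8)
The plan is to prove both bounds directly from the electrical interpretation of the diagonal entries of $L_G^+$ and of $(L_G)_{S,S}^{-1}$, \emph{without} invoking Lemma~\ref{lem:lambdas}: the point is precisely that the eigenvalue lower bounds of Lemma~\ref{lem:lambdas} will be deduced from this lemma, via $\lambdamin(M)=1/\lambdamax(M^{-1})\ge 1/\trace{M^{-1}}$ for positive definite $M$ (and likewise $\lambda_2(L_G)=1/\lambdamax(L_G^+)\ge 1/\trace{L_G^+}$). I assume $G$ is connected, as throughout the paper, and recall that $\wmin\le w_e$, hence $r_e\le 1/\wmin$, for every edge $e$.

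\emph{First bound.} I would start from the identity $n\cdot\trace{L_G^+}=\sum_{u<v} b_{u,v}^T L_G^+ b_{u,v}$, where $b_{u,v}$ denotes the vector with $1$ at $u$, $-1$ at $v$, and $0$ elsewhere; this follows by expanding $b_{u,v}^T L_G^+ b_{u,v}=(L_G^+)_{uu}+(L_G^+)_{vv}-2(L_G^+)_{uv}$, summing over all ordered pairs, and using $L_G^+\textbf{1}=0$. Each summand is the effective resistance between $u$ and $v$, which by Rayleigh monotonicity is at most the total resistance of any simple path from $u$ to $v$; such a path has at most $n-1$ edges, each of resistance at most $1/\wmin$, so every summand is at most $n/\wmin$. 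Hence $\trace{L_G^+}\le \frac1n\binom{n}{2}\cdot\frac{n}{\wmin}=\frac{n(n-1)}{2\wmin}\le n^2/\wmin$.

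\emph{Second bound.} Let $T:=V(G)\setminus S$ (nonempty since $S\subsetneq V(G)$) and let $G':=G/T$ be $G$ with all of $T$ identified to a single vertex $t$ and self-loops removed. A direct inspection of entries shows that $(L_G)_{S,S}$ is exactly the principal submatrix of $L_{G'}$ obtained by deleting the row and column of $t$: entries between two vertices of $S$ are unaffected by the contraction, and the $u$-th diagonal entry is the total weighted degree of $u$, which is the same in $G$ and in $G'$. Writing $L_{G'}$ in block form with $t$ last and solving $L_{G'}p=e_u-e_t$ with $p_t=0$ gives $p_S=(L_G)_{S,S}^{-1}e_u$, so $\big((L_G)_{S,S}^{-1}\big)_{uu}=p_u$ equals the effective resistance in $G'$ between $u$ and $t$. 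As before this is at most the resistance of a simple path from $u$ to $t$ in $G'$ (contraction only merges parallel edges, which only lowers resistances), hence at most $|S|/\wmin\le n/\wmin$; summing over $u\in S$ gives $\trace{(L_G)_{S,S}^{-1}}\le |S|\cdot n/\wmin\le n^2/\wmin$.

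The argument is short; the only step meriting a line of justification is the identity $\big((L_G)_{S,S}^{-1}\big)_{uu}=R^{G'}_{\mathrm{eff}}(u,t)$ (the grounded-Laplacian fact, whose two-line derivation I would include), and the only thing to watch is that no step is secretly circular with Lemma~\ref{lem:lambdas}. I do not expect any real obstacle beyond this bookkeeping.
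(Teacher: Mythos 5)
Your proof is correct and follows essentially the same route as the paper: both bounds are obtained by writing the traces as sums of effective resistances (of $G$ for the first, of $G/T$ to the contracted vertex for the second) and bounding each effective resistance by the resistance of a shortest path, at most $n$ edges each of resistance at most $1/\wmin$. The only difference is that you spell out the grounded-Laplacian identity $\bigl((L_G)_{S,S}^{-1}\bigr)_{uu}=b_{u,T}^T L_{G/T}^+ b_{u,T}$, which the paper uses without proof, and you correctly note the non-circularity with Lemma~\ref{lem:lambdas}.
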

\begin{proof}[Proof of Lemma~\ref{lem:traces}]
    Let $T := V(G)\setminus S$.
    The first upper bound follows by
    \begin{align}\label{eq:ubtr}
        \trace{L_G^+} =
        \frac{1}{n} \sum_{u,v\in V} b_{u,v}^T L_G^+ b_{u,v}
        \leq \frac{1}{n} (n ^ 3 \frac{1}{\wmin})
        \leq \frac{n^2}{\wmin}.
    \end{align}
    The second upper bound follows by
    \begin{align}\label{eq:ubtr2}
        \trace{(L_G)_{S,S}^{-1}}
        = \sum\limits_{u\in S} b_{u,T}^T L_{G/T}^+ b_{u,T}
        \leq n\cdot \frac{n}{\wmin} 
		\leq \frac{n^2}{\wmin}.
	\end{align}
    The first inequalities of~(\ref{eq:ubtr}) and~(\ref{eq:ubtr2}) both
    follow from the fact that the effective resistance is at most
    the shortest path.
\end{proof}

\lemlambdas*
\begin{proof}[Proof of Lemma~\ref{lem:lambdas}]
    For the upper bounds, we have
	\begin{align*}
        \lambdamax\kh{(L_G)_{S,S}} \leq
		\lambdamax\kh{L} \leq
		\lambdamax\kh{\wmax L_{K_n}}
		\leq
        n \wmax,
	\end{align*}
    where the first inequality follows from Cauchy interlacing,
    and $K_n$ denotes the complete graph of $n$ vertices.

    For the lower bounds, we have
	\begin{align*}
        \lambda_2(L_G) &\geq 1 / \trace{L_G^+} \geq \wmin / n^2, \\
        \lambdamin\kh{L_{S,S}}
        &\geq 1 / \trace{(L_G)_{S,S}^{-1}} \geq \wmin / n^2.
	\end{align*}
\end{proof}

\section{Bounds on 2-norms of some useful matrices}
\label{sec:2norm}

\lemsigmas*
\begin{proof}[Proof of Lemma~\ref{lem:sigmas}]
    The largest singular value of $W_G^{1/2} B_G$ follows by
    \begin{align*}
        & \sigmamax(W_G^{1/2} B_G) \leq \kh{\lambdamax(L_G)}^{1/2} \leq \kh{n\wmax}^{1/2}
        \qquad \text{by~(\ref{eq:eigen3}).}
    \end{align*}
    The largest eigenvalue of Schur complements follows by
    \begin{align*}
        \lambdamax(SC(L_G,S)) \leq \lambdamax((L_G)_{S,S}) \leq n\wmax
        \qquad \text{by~(\ref{eq:eigen3}).}
    \end{align*}
    The largest singular value of $(L_G)_{S,T}$ follows by
    \begin{align*}
        \sigmamax((L_G)_{S,T}) \leq & \kh{\lambdamax\kh{(L_G)_{S,T}^T (L_G)_{T,S}}}^{1/2} \\
        \leq & \kh{n \wmax \cdot \lambdamax\kh{(L_G)_{S,T}^T (L_G)_{T,T}^{-1} (L_G)_{T,S}}}^{1/2}
        \qquad \text{by~(\ref{eq:eigen3})} \\
        \leq & \kh{n \wmax \cdot \lambdamax\kh{(L_G)_{S,S}}}^{1/2}
        \quad \text{since $SC(L_G,S)$ is positive semi-definite} \\
        \leq & n \wmax \qquad  \text{by~(\ref{eq:eigen3}).}
    \end{align*}
\end{proof}

\section{Bounds on errors of $\LaplSolve$ using $\ell_2$ norms}
\label{sec:l2norm}

\lemlnorm*
\begin{proof}[Proof of Lemma~\ref{lem:lnorm}]
    The error follows by
    \begin{align*}
        \norm{x - \xtil} \leq
        & n \wmin^{-1/2} \norm{x - \xtil}_{L_G} \qquad \text{by~(\ref{eq:eigen1})} \\
        \leq & n \wmin^{-1/2} \eps\norm{x}_{L_G} \leq
        n^{1.5} \kh{\frac{\wmax}{\wmin}}^{1/2} \qquad \text{by~(\ref{eq:eigen3})}
    \end{align*}
\end{proof}

\lemmnorm*
\begin{proof}[Proof of Lemma~\ref{lem:mnorm}]
    The error follows by
    \begin{align*}
        \norm{x - \xtil} \leq
        & n \wmin^{-1/2} \norm{x - \xtil}_{M} \qquad \text{by~(\ref{eq:eigen2})} \\
        \leq & n \wmin^{-1/2} \eps\norm{x}_{M} \leq
        n^{1.5} \kh{\frac{\wmax}{\wmin}}^{1/2} \qquad \text{by~(\ref{eq:eigen3})}
    \end{align*}
\end{proof}

    \section{Split subroutines}

    \propsplit*

\begin{algorithm}[H]
\SetAlgoLined
\DontPrintSemicolon

    \KwIn{a graph $H$}

    \KwOut{a graph $I$ with a pair of edges for each edge in $H$ and a set of paired edges in $\mc P$}

    $I\gets H$\;

    $\mc P\gets \emptyset$\;

    \ForEach{edge $e\in E(H)$}{

        \uIf{1/16-JL-approximation to $\texttt{lev}_H(e)\ge 1/2$}{

            Replace $e = \{u,v\}\in E(I)$ with two edges $e_0 = \{u,v\}$ and $e_1 = \{u,v\}$ with $r_{e_0} = r_{e_1} = 2r_e$\;

            Add the pair $(e_0,e_1)$ to $\mc P$\;

        }\Else{

            Add a vertex $w$ to $V(I)$\;

            Replace $e = \{u,v\}\in E(I)$ with two edges $e_0 = \{u,w\}$ and $e_1 = \{w,v\}$ with $r_{e_0} = r_{e_1} = r_e/2$\;

            Add the pair $(e_0,e_1)$ to $\mc P$\;

        }

    }

    \Return{$(I,\mc P)$}

\caption{$\Split(H)$}
\end{algorithm}

\begin{proof}

\textbf{Electrical equivalence.} Two parallel edges with resistance $2r_e$ are electrically equivalent to one edge with resistance $r_e$. Two edges with resistance $r_e/2$ in series are equivalent to one edge with resistance $r_e$. Therefore, both ways of replacing edges in $H$ with pairs of edges in $I$ result in an electrically equivalent graph.

\textbf{Bounded leverage scores.} For an edge $e$ that is replaced with two series edges $e_0$ and $e_1$, $$\texttt{lev}_I(e_0) = \texttt{lev}_I(e_1) = \frac{1}{2} + \frac{\texttt{lev}_H(e)}{2} \in [1/2,3/4]$$ since $\texttt{lev}_H(e)\in [0,1/2(1 + 1/16)]$. For an edge $e$ that is replaced with two parallel edges $e_0$ and $e_1$, $$\texttt{lev}_I(e_0) = \texttt{lev}_I(e_1) = \texttt{lev}_H(e)/2 \in [1/4,1/2]$$ since $\texttt{lev}_H(e)\in [1/2(1 - 1/16),1]$. Since all edges in $I$ result from one of these operations, they all have leverage score in $[3/16,13/16]$, as desired.

\textbf{$\mc P$ description.} (a) describes edges resulting from parallel replacements, while (b) describes edges reesulting from series replacements.

\textbf{Runtime.} Estimating the leverage scores takes near-linear time \cite{SS08}. Besides this, the algorithm just does linear scans of the graph. Therefore, it takes near-linear time.
\end{proof}

\propunsplit*

\begin{algorithm}[H]
\SetAlgoLined
\DontPrintSemicolon

    \KwIn{a graph $I$ and a set of nonintersecting pairs of edges $\mc P$}

    \KwOut{a graph $H$ with each pair unsplit to a single edge}

    $H\gets I$

    \ForEach{pair $(e_0,e_1)\in \mc P$}{

        \uIf{$e_0$ and $e_1$ have the same endpoints $\{u,v\}$ and $e_0,e_1\in E(I)$}{

            Replace $e_0$ and $e_1$ in $H$ with one edge $e = \{u,v\}$ with $r_e = 1/(1/r_{e_0} + 1/r_{e_1})$\;

        }\ElseIf{$e_0 = \{u,w\}$, $e_1 = \{w,v\}$, $w$ has degree 2, and $e_0,e_1\in E(I)$}{

            Replace $e_0$ and $e_1$ in $H$ with one edge $e = \{u,v\}$ with $r_e = r_{e_0} + r_{e_1}$\;

        }

    }
    
\caption{$\Unsplit(I,\mc P)$}
\end{algorithm}

\begin{proof}

\textbf{Electrical equivalence.} Two parallel edges with resistance $r_{e_0}$ and $r_{e_1}$ are electrically equivalent to one edge with resistance $1/(1/r_{e_0} + 1/r_{e_1})$. Two edges with resistance $r_{e_0}$ and $r_{e_1}$ in series are equivalent to one edge with resistance $r_{e_0} + r_{e_1}$. Therefore, both ways of replacing pairs of edges in $I$ with single edges in $H$ result in an electrically equivalent graph.

\textbf{Edges of $H$.} Since the pairs in $\mc P$ do not intersect, the map $\phi(e_i) = e$ that maps an edge $e_i$, $i\in \{0,1\}$ to the $e$ as described in the foreach loop is well-defined. Since each $(e_0,e_1)\in \mc P$ pair is assigned to the same edge $e$, $\phi(e_0) = \phi(e_1) = e$. Each edge in the output graph $H$ originates from the initialization of $H$ to $I$, the if statement, or the else statement. These are type (a),(b), and (c) edges respectively. Therefore, $\phi$ satisfies the required conditions. 

\textbf{Runtime.} The algorithm just requires a constant number of linear scans over the graph.

\end{proof}

\end{appendix}

\end{document}